\providecommand{\Comments}{1}
\definecolor{Gred}{RGB}{219, 50, 54}
\definecolor{Ggreen}{RGB}{60, 186, 84}
\definecolor{Gblue}{RGB}{72, 133, 237}
\definecolor{Gyellow}{RGB}{204, 204, 0}
\definecolor{Gpurple}{RGB}{204, 0, 204}
\definecolor{Gorange}{RGB}{255, 200, 120}
\definecolor{Gbrown}{RGB}{0, 204, 204}
\newcommand{\mytodo}[1]{\ifnum\Comments=1{#1}\fi}
\newcommand{\NP}{\textnormal{NP}}
\newcommand{\letterboxed}{\textnormal{LETTER BOXED}\xspace}
\title{Man, these New York Times games are hard!\\
A computational perspective}
\author[*]{Alessandro~Giovanni~Alberti}
\author[*]{Flavio~Chierichetti}
\author[*]{Mirko~Giacchini}
\author[*]{Daniele~Muscillo}
\author[*]{Alessandro~Panconesi}
\author[*]{Erasmo~Tani}
\affil[*]{Sapienza University of Rome  

Emails: \texttt{\{alberti.2050512, muscillo.2080466\}@studenti.uniroma1.it}, \texttt{\{flavio, giacchini, ale, tani\}@di.uniroma1.it}
}
\date{\today}
\begin{document}
\maketitle

\begin{abstract}
    \noindent
    The New York Times (NYT) games have found widespread popularity in recent years and reportedly account for an increasing fraction of the newspaper's readership. In this paper, we bring the computational lens to the study of New York Times games and consider four of them not previously studied: Letter Boxed, Pips, Strands and Tiles. We show that these games can be just as hard as they are fun. In particular, we characterize the hardness of several variants of computational problems related to these popular puzzle games. For Letter Boxed, we show that deciding whether an instance is solvable is in general $\NP$-Complete, while in some parameter settings it can be done in polynomial time. Similarly, for Pips we prove that deciding whether a puzzle has a solution is $\NP$-Complete even in some restricted classes of instances. We then show that one natural computational problem arising from Strands is $\NP$-Complete in most parameter settings. Finally, we demonstrate that deciding whether a Tiles puzzle is solvable with a single, uninterrupted combo requires polynomial time.
\end{abstract}

\section{Introduction}
The New York Times is one of the most influential newspapers in the world.  In its 173 years of history, it has reported on events as disparate as the American Civil War, the sinking of the Titanic, the publication of Einstein's theory of relativity and the discovery of the tomb of Tutankhamun\footnote{It even announced the existence of life on Mars~\cite{W06} (could it be watermelons \cite{H21}?)}. In this time, the newspaper's reporting activities have won it over 130 Pulitzer prizes \cite{nyt_pulitzers}. The newspaper was also vastly successful in transitioning to the digital era, and its online newspaper boasts millions of subscribers~\cite{nyt2024intlsubs}, reportedly more than any other English language online newspaper in the world~\cite{m23}.

In spite of all the newspaper's remarkable achievements, the New York Times' website is increasingly visited for a different reason: a carefully curated collection of puzzle games \cite{Klein2023NYTimesGames, Knight2024AtlanticStrands, Switzer2024StrandsGamer}\footnote{In March of 2024, ValueAct Capital Management, a hedge fund investing in the New York Times Company, filed a notice of exempt solicitation under Rule 14a-6(g) \cite{WinNT}, in which they reported finding that in 2023 people spent more time playing NYT games than reading NYT news articles.}. In February 1942, in a move said to have been caused by the attack on Pearl Harbor~\cite{Shepard92}, the newspaper launched its famous crossword, which has since entertained millions of enthusiasts and become a household name in the crossword world. It is perhaps not surprising then, that the New York Times has leveraged its reputation as a one-stop shop for intellectuals and clever thinkers to break into the online puzzle game space.

\begin{figure}
    \centering
    \includegraphics[scale=0.7]{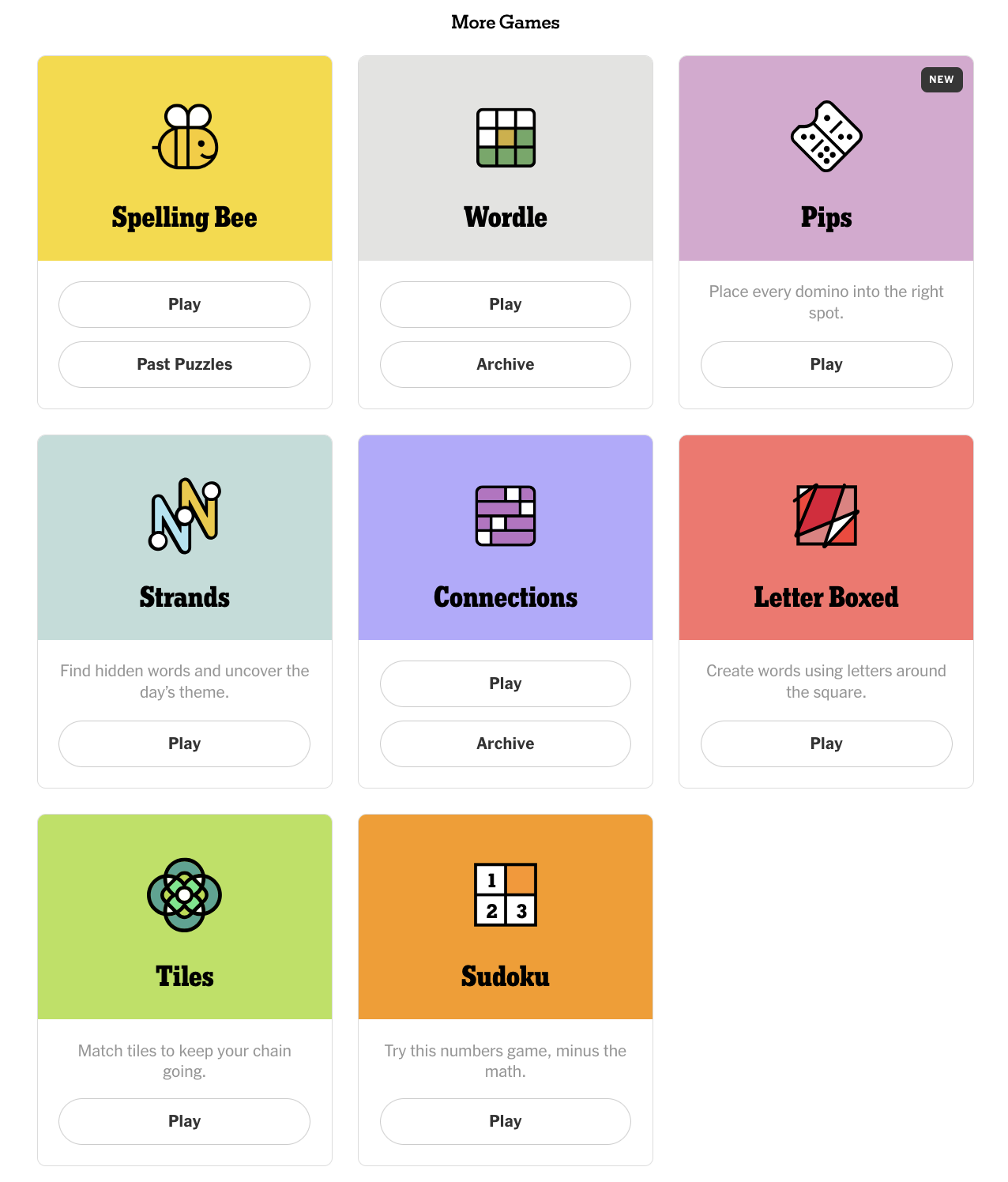}
    \caption{The ``More Games'' section of the \url{https://www.nytimes.com/crosswords} page. Accessed on 08/22/25. Used for research and illustrative purposes only.}
    \label{fig:nyt-games-homepage}
\end{figure}

Started in 2014 with the introduction of the Mini Crossword and later the word game \emph{Spelling Bee}, the games section of the New York Times has added more and more titles to its collection over the years. In 2019 they introduced \emph{Letter Boxed}, in which the players have to compose words using letters placed around a square, and \emph{Tiles}, in which one is asked to find tiles sharing visual features. In 2022, they also acquired the popular game \emph{Wordle}, which had already found a large following. In 2024 they added \emph{Strands} a word search-like game, and in 2025 they added the logic puzzle game Pips, in which a player has to position domino tiles on a board while satisfying given constraints.

According to a New York Times' spokesperson interviewed by The Verge, in 2024 alone the games were played over eleven billion times \cite{Peters2025NYTScrabble}.

Despite the popularity of the New York Times games, so far, with the exception of Sudoku~\cite{ys03}, Wordle~\cite{LS22wordle,r22} and the traditional crossword~\cite{ghlm24} the games have not been studied and understood from a computational standpoint, and their complexity has remained unknown.

In this paper, we attempt to right this unspeakable wrong, and provide a computational perspective on several problems arising from the New York Times games. Our focus will be on four of the games, which are computationally rich and yet so far unstudied: Letter Boxed, Pips, Strands and Tiles. We consider various NYT games-inspired computational problems and analyze their computational complexity.

\subsection{Overview of The New York Times Games}\label{sec:overview-of-games}
As mentioned in the introduction, the New York Times offers a collection of games beyond its traditional crossword puzzles (see \Cref{fig:nyt-games-homepage}). In this paper, we will focus on four games: Letter Boxed, Pips, Strands and Tiles. In this section, we briefly review their rules. 

\newpage
\begin{figure}[h!]
    \centering
    \includegraphics[width=1\linewidth]{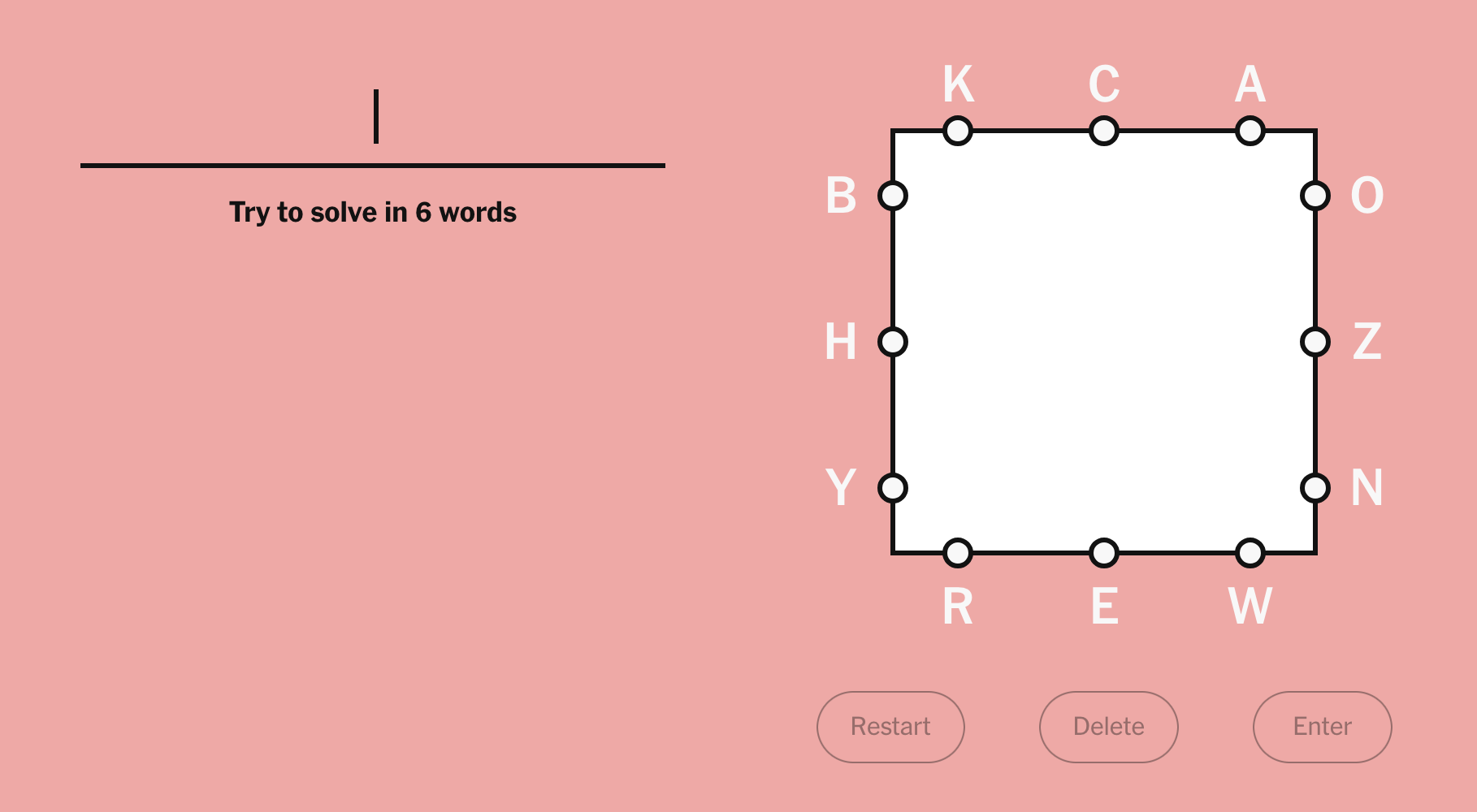}
    \caption{A typical instance of the Letter Boxed game. In this instance, the player is encouraged to complete the puzzle with at most $6$ words, which suggests that the puzzle may be harder than usual.\\ Accessed from \url{https://www.nytimes.com/puzzles/letter-boxed} on 06/26/25. Used for research and illustrative purposes only.}
    \label{fig:letter-boxed-game}
\end{figure}

\paragraph{Letter Boxed.} In Letter Boxed (\Cref{fig:letter-boxed-game}), the player is given a square with three letters on each of its sides. The letters are all distinct. The player can then start to compose English words using the letters on the sides of the square, with the following set of rules. First, the player may never use two letters from the same side consecutively in a word. Second, each word formed must start with the same letter (on the same side) that ended the previous word. Finally, every letter must appear in at least one of the player's words. When the player completes a word that includes all the letters not previously included in any word, the puzzle is solved.

The more skilled a player is, the fewer words they will need to complete a puzzle. Next to the puzzle, one is given a suggestion on how many words may be needed for an average player to complete it. This is typically given in the form of a number between $4$ and $6$.
\newpage
\begin{figure}[H]
    \centering
    \includegraphics[scale=0.4]{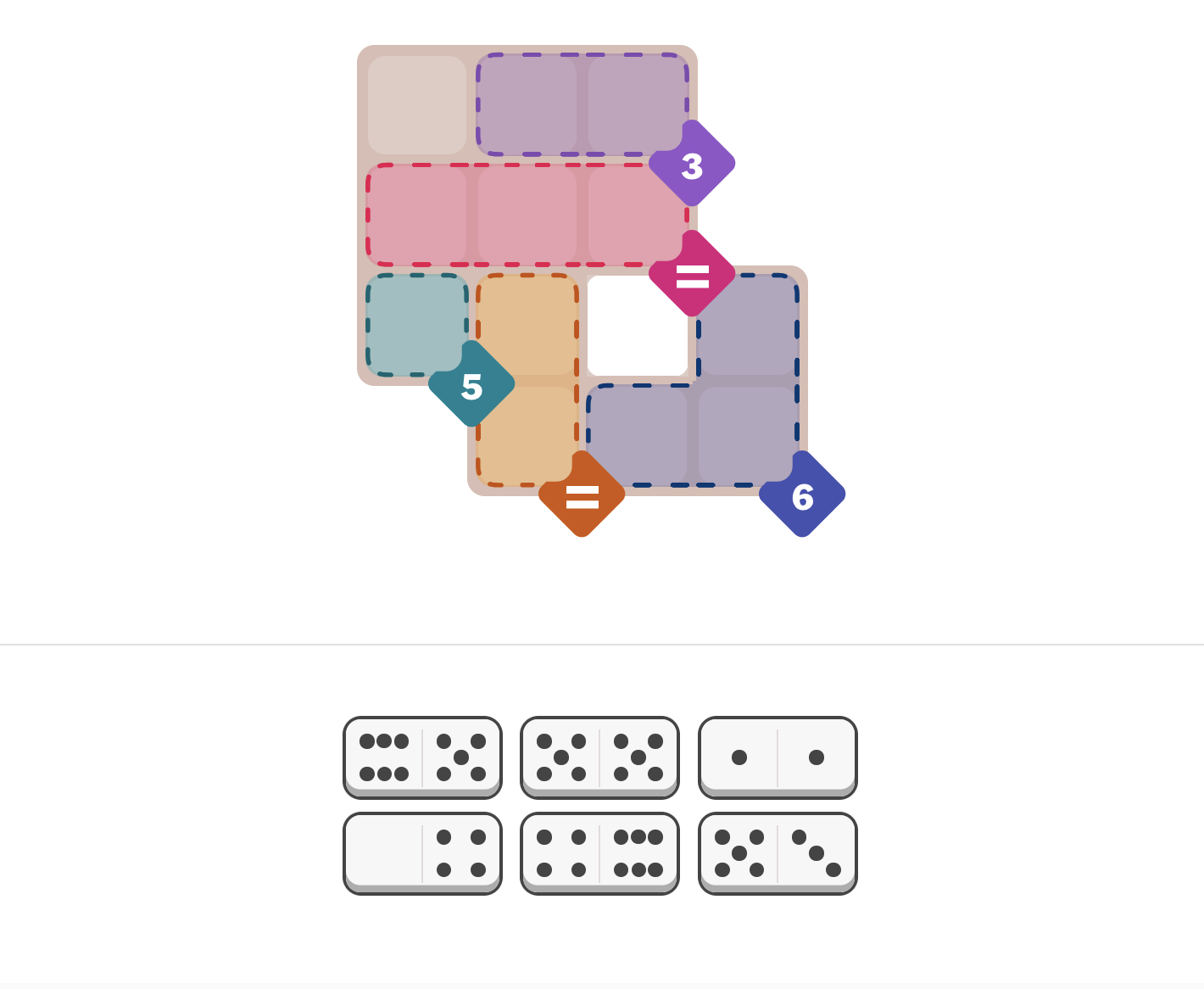}
    \caption{A medium difficulty instance of Pips. In this instance, the board contains five constraint regions. Three of them (labeled with the numbers 3, 5 and 6) specify a constraint on the sum of the domino squares placed on them, and two of them (labeled with the = symbol) specify that all the domino squares placed in the region must have the same value. The goal of the player is to tile the board with all the domino pieces displayed at the bottom, in a way that respects all the constraints. Each domino piece must be used exactly once in the solution. \\
    Accessed from \url{https://www.nytimes.com/games/pips/medium} on 08/22/2025. Used for research and illustrative purposes only.}
    \label{fig:pips-instance}
\end{figure}

\paragraph{Pips.} Pips (\Cref{fig:pips-instance}) is the newest entry in the NYT games family. In this game, introduced in the summer of 2025~\cite{nyt2025pips}, the player is given a collection of domino tiles and they have to place them on a board in such a way that each square of the board is covered by exactly one half of a domino tile (i.e.\ the tiles tessellate the board). The board also displays a collection of connected pairwise disjoint regions, each of which specifies a constraint on the values of the domino pieces placed within the region. There are three types of constraints. 
\begin{enumerate}
    \item ``='' (resp. ``$\neq$'') constraints specify that the numbers placed within the region must all be equal to (resp. distinct from) one another. 
    \item ``$n$'' constraints, for some number $n \in \mathbb{N}_0$, specify that the numbers within a specific region must sum to $n$. 
    \item ``<$n$'' (resp.\ ``>$n$'') constraints specify that the sum of the numbers inside the region must be smaller (resp.\ greater) than $n$. 
\end{enumerate}
The goal of the game is to place the domino tiles so as to fill the entire board in a way that satisfies all of the constraints specified.

\begin{figure}[H]
    \centering
    \includegraphics[width=1\linewidth]{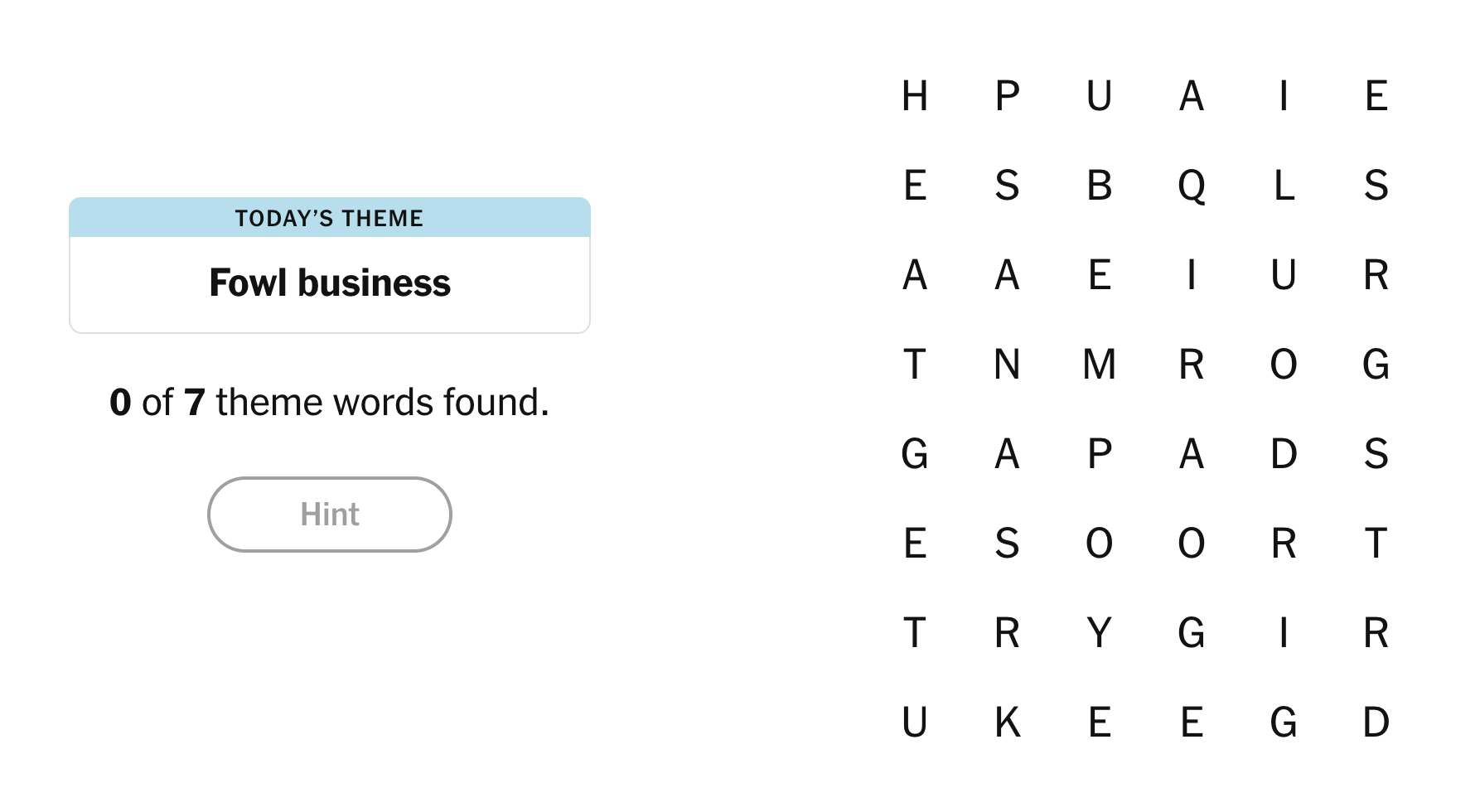}
    \caption{An instance of the Strands game. In this instance, the hint for the theme of the day is ``Fowl business''. This wordplay suggests that the theme of the day may be related to birds that are hunted. In fact, in the bottom-left corner one can easily find the word \emph{turkey}, and the spangram is the phrase \emph{game birds} which can be seen crossing the puzzle from left to right.\\
    Accessed from \url{https://www.nytimes.com/games/strands} on 06/26/2025. Used for research and illustrative purposes only.}
    \label{fig:strands-game}
\end{figure}

\paragraph{Strands.} Introduced in 2024 \cite{Levine2024Strands}, Strands (\Cref{fig:strands-game}) is a modern spin on classic word search puzzles. In this game, the player is given a matrix of letters, partitioned into a hidden collection of words (the solution to the puzzle). Each word in the solution is related to the theme of the day, hinted at in the box on the left. The goal is to find all the words in the solution. Each of these is formed by a sequence of letters, each adjacent to the next vertically, horizontally or diagonally (the same cell cannot be repeated). Moreover, each puzzle solution contains a \emph{spangram}, a word or phrase that connects two opposite sides of the matrix which explicitly reveals the theme of the day. 

When the player queries a word, it is revealed whether it belongs to the solution or not. After finding three valid words that do not belong to the hidden collection, the user is given a hint on one of the hidden words. The hint highlights the set of letters in one of the words in the solution. If one were to obtain another hint right after, this would also reveal the word itself, by showing the exact order in which the previously highlighted letters appear in the word.

\begin{figure}[H]
    \centering
    \includegraphics[width=0.8\linewidth]{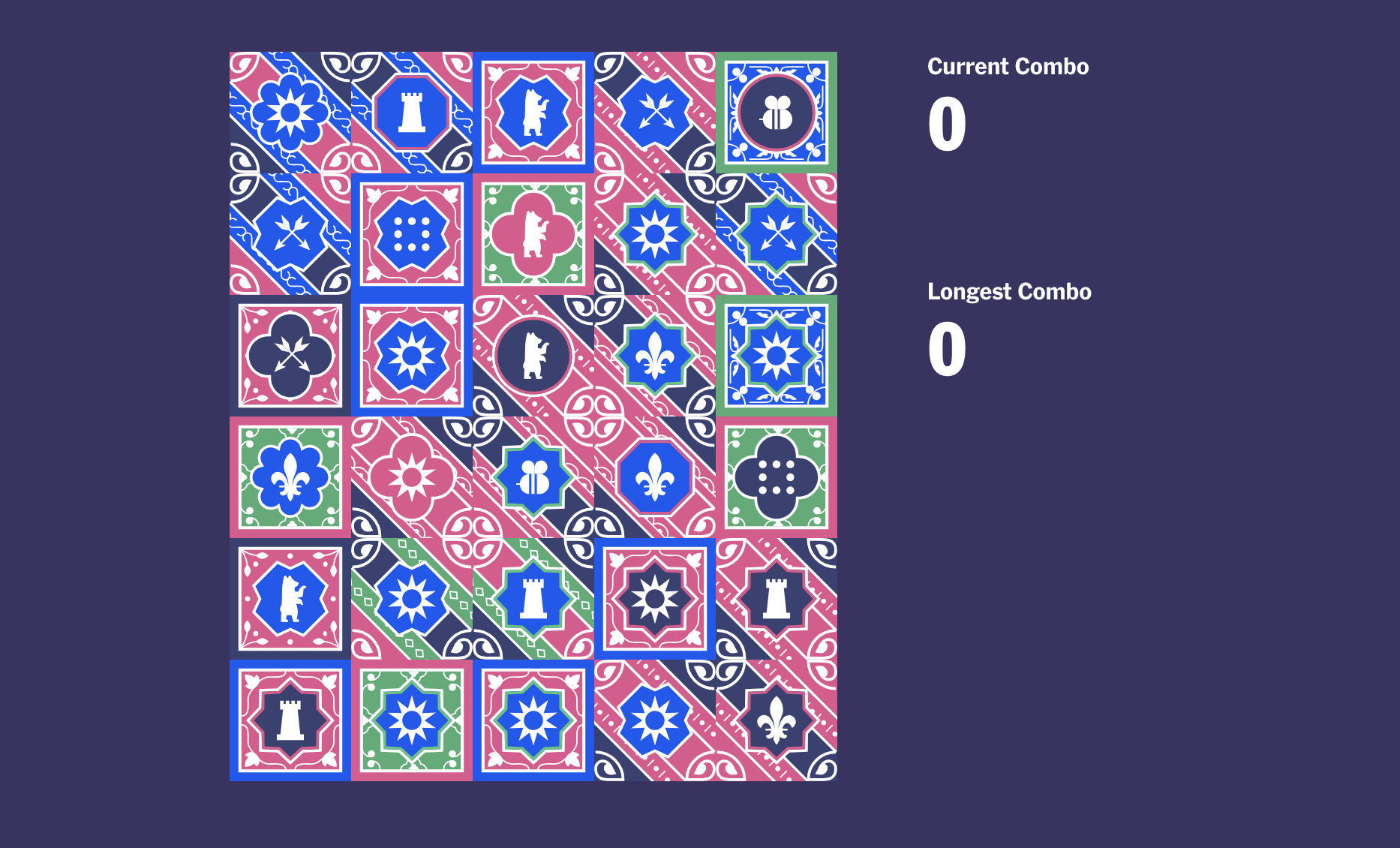}
    \caption{An instance of the Tiles game. \\
    Accessed from \url{https://www.nytimes.com/puzzles/tiles} on 06/26/2025. Used for research and illustrative purposes only.}
    \label{fig:tiles-game}
\end{figure}

\paragraph{Tiles.} In Tiles (\Cref{fig:tiles-game}), the player is presented with a grid of tiles. Each tile in the grid displays some visual elements (\emph{features}). The player begins the game by selecting an arbitrary tile. Each move consists of selecting a new tile to move to, distinct from the previous one. If the newly selected tile shares one or more non-deleted features with the previous one, then these features are deleted from both tiles, and the player's current combo increases by one. 
Otherwise, the player simply moves to the new tile. We refer to this latter case as a \emph{teleport move}. Note that, when a player finds themselves on a tile on which all the features have been deleted, they are forced to make a teleport move. Teleport moves do not extend the current combo. Moreover, if the player ever makes a teleport move that is not forced, then the length of their current combo is set to zero. The goal of the game is to delete all the features from  all the tiles.

\subsection{Our Results}\label{sec:our-results}
In this paper, we show several results on the hardness of New York Times puzzle games.

\paragraph{Our Results on Letter Boxed.} For our first set of results, in \Cref{sec:letter-boxed}, we consider the problem of deciding whether an instance of Letter Boxed is solvable with at most $k$ words. We generalize the game by allowing $n$ characters on each side of the square that we represent as explicit multisets, and we consider words coming from a dictionary $D$ with alphabet $\Sigma$. We give a complete characterization of the complexity of the game for different regimes of the alphabet size $|\Sigma|$, the dictionary size $|D|$, the length $L$ of the longest word in $D$, and the budget of words $k$.

\renewcommand{\arraystretch}{2}
\begin{theorem}\label{thm:letter-boxed-table}
    Consider the problem of deciding whether an instance of Letter Boxed is solvable with at most $k$ words. Then the complexity of the problem for different parameter regimes is as described in \Cref{tab:letterboxed-results}. %
    \begin{table}[h!]
    \begin{center}
    \begin{tabular}{|c c c c|c|}
    \hline
    \makecell{Alphabet Size \\ $|\Sigma|$} & \makecell{Dictionary Size \\ $|D|$} & \makecell{Maximum \\ Word Length\\ $L$} & \makecell{Budget of Words \\$k$ }& Complexity Class\\
    \hline
    Constant & * & * & * & P (\Cref{cor:letterboxed-poly-cases}) \\
    Arbitrary & Constant & Constant & * & P (\Cref{cor:letterboxed-poly-cases}) \\
    Arbitrary & Arbitrary & Constant & Constant & P (\Cref{cor:letterboxed-poly-cases}) \\
    Arbitrary & * & Arbitrary & * & {\NP}-Complete (\Cref{thm:letter-boxed-np-complete-d1-k1}) \\
    Arbitrary  & Arbitrary & Constant & Arbitrary & {\NP}-Complete (\Cref{thm:letter-boxed-np-complete})\\
    \hline
    \end{tabular}
    \end{center}
    \caption{The complexity of Letter Boxed. Asterisks (*) indicate that the result holds both if the parameter is constant and if it can grow with the input size. The settings solvable in polynomial time where parameter $k$ is marked as ``arbitrary'' hold even if $k$ is represented in binary notation, while the $\NP$-Hardness results hold even when $k$ is represented in unary notation (i.e.\ they are \emph{strong $\NP$-Hardness} results). The number of characters of each side, $n$, is always allowed to grow arbitrarily.}
    \label{tab:letterboxed-results}
    \end{table}
\end{theorem}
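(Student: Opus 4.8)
The plan is to prove \Cref{thm:letter-boxed-table} by establishing each row of \Cref{tab:letterboxed-results} separately, proving the polynomial-time cases via a common algorithmic framework and the hardness cases via two separate reductions. For the three polynomial cases (collected in \Cref{cor:letterboxed-poly-cases}), I would first set up a graph-theoretic reformulation: a candidate solution is a walk in a ``letter graph'' whose vertices are the letters (or the letter-side pairs) and whose edges connect letters that may appear consecutively in a word; a valid play is a concatenation of dictionary words forming a walk that covers every letter. The key observation is that since two consecutive letters may never come from the same side, and there are only four sides, a single word induces a walk that alternates among the four side-classes — this heavily restricts the combinatorial structure. When $|\Sigma|$ is constant, the entire state space (which letters have been covered, which letter we currently sit on) has constant size, so a shortest-path/BFS computation over configurations decides solvability in polynomial time regardless of how the other parameters grow, even with $k$ in binary. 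When $|D|$ and $L$ are both constant, each word is one of constantly many short strings, so again one can enumerate the relevant reachable states; and when additionally $k$ is constant one can brute-force over all length-$\le k$ sequences of dictionary words. I expect these cases to be routine once the right state space is identified; the mild subtlety is handling multiset side-specifications and making sure ``cover every letter'' is tracked correctly with a constant-size bitmask.

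For the fourth row — $|\Sigma|$ arbitrary, $L$ arbitrary, even $|D|=1$ and $k=1$ — I would invoke \Cref{thm:letter-boxed-np-complete-d1-k1}: membership in $\NP$ is immediate (guess the sequence of words, verify the adjacency, side-alternation, chaining, and coverage constraints in polynomial time, and this is a strong bound since $k$ is unary), and the hardness is a reduction from a suitable $\NP$-hard problem (I would guess \textsc{Hamiltonian Path} or a variant, since requiring a single word to use all letters, never repeating a side, is essentially asking for a Hamiltonian-type traversal of the letter graph respecting the 4-coloring by sides) encoded into a single long ``word'' in the dictionary together with a placement of letters on the four sides. For the fifth row — $|\Sigma|$ and $|D|$ arbitrary, $L$ constant, $k$ arbitrary (unary) — I would invoke \Cref{thm:letter-boxed-np-complete}; here words are short but there are many of them and the budget is large, so the natural reduction is from something like \textsc{Set Cover} or \textsc{Shortest Common Superstring}-style chaining, where each short word covers a few new letters and the budget $k$ bounds how many words (sets) may be used, with the side constraints and the word-chaining condition engineered to not interfere with the covering logic.

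The main obstacle I anticipate is the two hardness reductions, specifically making the four-sides / no-two-consecutive-from-the-same-side rule work in our favor rather than against us. In the fifth-row reduction we only have four sides but arbitrarily many letters, so many distinct letters must share a side; the reduction must ensure that the gadget letters we want to be consecutive never land on the same side, while still packing all letters into four sides — this likely forces a careful gadget design, perhaps introducing ``padding'' letters or exploiting that $L$ is constant to keep each word's side-footprint small. Similarly, in the fourth-row reduction, the single permitted word must simultaneously encode the entire combinatorial instance and respect the side-alternation, so the placement of letters on sides is doing double duty as part of the reduction. A secondary obstacle is verifying that the polynomial-time algorithms genuinely run in polynomial time when $k$ is given in binary: one must argue that if a solution exists at all then a solution of length polynomial in the other parameters exists (a pumping-style argument on repeated configurations), so that a large binary $k$ never actually forces exploring exponentially long walks. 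Once these points are handled, assembling the rows into the statement of \Cref{thm:letter-boxed-table} is bookkeeping.
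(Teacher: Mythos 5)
Your treatment of the three polynomial rows is essentially the paper's: a shortest-path/dynamic-programming search over game configurations handles $|\Sigma|=O(1)$, the $|D|,L=O(1)$ case reduces to it (since then $|\Sigma|\le |D|\cdot L=O(1)$), and the $L,k=O(1)$ case is brute force over at most $|D|^k$ word sequences; your worry about binary $k$ is resolved exactly as you suggest, by bounding the length of a minimal solution (the paper trims any solution to at most $S^2|\Sigma| n$ words). One correction there: because the sides are \emph{multisets}, coverage cannot be tracked by a ``constant-size bitmask'' --- you must record, for each (side, letter) pair, how many occurrences remain, i.e.\ a count in $\{0,\dots,n\}$; this gives $(n+1)^{4|\Sigma|}$ states, which is polynomial when $|\Sigma|=O(1)$ but not constant, so the claim survives with the corrected bookkeeping.

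The genuine gap is in the two hardness rows, which are the heart of the theorem: you do not give the reductions, only candidate source problems, and for the $|D|=1$, $k=1$ case your candidate points in the wrong direction. With a single dictionary word and budget one, the player has \emph{no freedom in the order of the letters} --- the word is part of the instance and dictates the traversal --- so the problem is not a Hamiltonian-type traversal question; the only choice the player makes is \emph{which side} each character occurrence is taken from, subject to consecutive characters lying on different sides and every multiset element being covered. Hardness must therefore be encoded in that side-assignment choice. The paper does this by reducing from POSITIVE NOT-ALL-EQUAL 3-SAT: two sides $\Gamma_2=\Gamma_3$ carry the variable and clause characters, the word forces each variable's occurrences to alternate between the two sides so that picking which side is used first encodes a truth value, and clause segments of the form $v_1\,c\,\#\,v_2\,c\,\#\,v_3\,c\,\#\,c\,\#$ make all four copies of $c$ collectable iff the three variables were not all assigned to the same side. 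Your fifth-row intuition (a covering problem with short words, budget $k$ bounding the number of ``sets'') is closer in spirit, but the crucial gadget you identify as the obstacle --- keeping letters that must be consecutive off the same side while packing arbitrarily many letters into four sides --- is precisely what needs to be constructed; the paper resolves it by reducing from 3D MATCHING with words $\#xyz\#$ of length $5$, putting $n$ copies of a separator $\#$ on side $1$ and the three ground sets $X,Y,Z$ on sides $2,3,4$, with $k=n$. Without these (or comparable) explicit constructions and their correctness arguments in both directions, the NP-completeness rows of the table remain unproven.
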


\paragraph{Our Results on Pips.} For Pips, we study the problem of deciding whether an instance of the puzzle admits a solution (\Cref{sec:pips}). We show that this problem is $\NP$-Complete, even in the special case in which all the domino tiles' faces have value either zero or one, and there are no ``$\neq$'', ``<$n$'' or ``>$n$'' constraints. We summarize the result in the following theorem.

\begin{theorem}\label{thm:pips-overview}
    The problem of deciding whether a Pips puzzle is solvable is $\NP$-Complete. This is true even in the restricted class of instances in which all the domino tiles' numerical values are either $0$ or $1$ and none of the constraints are of the form ``$\neq$'', ``<$n$'' or ``>$n$''.
\end{theorem}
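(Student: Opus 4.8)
The plan is to reduce from a suitable NP-complete problem to the restricted Pips instances described in the statement. Since the target instances only use ``$=$'' and ``$n$'' constraints with domino faces valued in $\{0,1\}$, the natural source is a constraint-satisfaction-flavored problem that talks about Boolean values and sums, such as \textsc{1-in-3 SAT} or \textsc{Exact Cover} (or possibly an NP-hard tiling/packing problem). Given a formula $\varphi$ (or cover instance), I would build a board whose shape encodes the combinatorial structure of $\varphi$, interpreting the value placed on a board square as the truth value of a literal: a $0/1$-valued domino half sitting on a ``variable gadget'' square plays the role of a Boolean assignment. The ``$=$'' regions would be used to propagate a variable's value to all of its occurrences (force consistency across the gadget), while the ``$n$'' sum regions would encode the clauses — e.g.\ a region spanning the three literal-squares of a clause with target $n=1$ directly expresses the exactly-one constraint. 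The fact that every domino piece from the supplied multiset must be used exactly once gives an extra global counting constraint that I would need to account for (and can often exploit) in the gadget design.

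First I would fix the domino supply: since faces are in $\{0,1\}$, the only piece types are $(0,0)$, $(0,1)$, and $(1,1)$, so the instance is specified by how many of each we give the player. I would choose these counts so that the tiling is forced to route dominoes through the gadgets in an intended way — for instance, using long thin corridors of odd/even length to control parity of where a $(0,1)$ ``wire'' domino can sit, and using $(0,0)$ or $(1,1)$ blocks as fillers whose placement is rigid. Next I would design the variable gadget: a small region, entirely inside an ``$=$'' constraint, wide enough that the player must lay down several dominoes, with the geometry forcing all covered squares to carry the same value $v\in\{0,1\}$; branches leading out of this gadget (again inside ``$=$'' regions, or chained ``$=$'' regions) carry $v$ to each clause. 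Then the clause gadget: three incoming wires enter a common ``$1$''-sum region shaped so that exactly one domino half in it comes from each wire, making the sum of those three halves equal to the number of true literals; setting the target to $1$ enforces \textsc{1-in-3}. Finally I would verify both directions: a satisfying assignment yields a legal tiling by the construction, and conversely any legal tiling induces a consistent assignment (from the ``$=$'' regions) satisfying every clause (from the ``$1$'' regions), and membership in NP is immediate since a tiling is a polynomial-size certificate checkable in polynomial time.

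The main obstacle I expect is \emph{geometric rigidity}: making sure the board shape and the domino supply leave the player no ``unintended'' tilings that would let a wire carry different values at its two ends, or let a clause region be satisfied without faithfully reading three literal values. Domino tilings of planar regions are famously flexible (height-function arguments, flips, etc.), so the gadgets must be engineered — narrow passages, forced ``seeds,'' parity of corridor lengths, possibly thin single-cell-wide channels that admit a unique tiling direction — so that the only freedom left is the Boolean choice at each variable gadget. A secondary subtlety is the global count constraint on the three piece types: I would likely pad the construction with a large ``reservoir'' region whose tiling is flexible enough to absorb any leftover pieces regardless of the assignment, or choose counts generously with slack, so that the piece-multiset condition does not accidentally over- or under-constrain the reduction. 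Keeping all regions \emph{connected} (as required by the game) while still achieving this rigidity is the delicate engineering that the proof will have to get right.
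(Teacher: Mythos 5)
Your outline matches the paper's proof in its broad architecture: both reduce from a 1-in-3-SAT-type problem, use variable gadgets sitting inside ``$=$'' regions, wires carrying the truth value to clause gadgets governed by a sum-``$1$'' constraint, unconstrained reservoir/cleanup regions to absorb leftover pieces, and the same trivial $\NP$-membership argument. However, two points you leave open are exactly where the work lies, and as written the proposal does not go through. First, planarity: you propose reducing from plain \textsc{1-in-3 SAT} or \textsc{Exact Cover}, but the board is a planar region and the constraint regions must be connected and pairwise disjoint, so wires from variables to clauses will in general have to cross, and you give no crossover gadget (designing one with only ``$=$'' and ``$n$'' constraints and $0/1$ faces is a nontrivial extra task). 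The paper avoids the issue entirely by reducing from POSITIVE PLANAR 1-IN-3-SAT with a rectilinear embedding, placing variable gadgets along a line with vertical branches running to clause gadgets above and below, so no crossings ever arise.

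Second, the ``geometric rigidity'' obstacle you correctly identify is left unresolved, and your plan to use $(0,1)$ dominoes as value-carrying wires is precisely what makes it hard. The paper's resolution is to supply only monochromatic tiles, $(0,0)$ and $(1,1)$: then no tiling flexibility can ever make a wire carry different values at its two ends, and all gadgets are width-one corridors with a \emph{unique} domino tiling. Value propagation is done purely by the ``$=$'' constraint on a variable gadget whose branch tips are excluded from the constraint but forced to agree with it, because the unique tiling makes one domino straddle a constrained square and the tip; the clause region then contains the three tips plus a body that the tile counting forces to be all zeros, so the sum-$1$ constraint reads off exactly the 1-in-3 condition. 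The tile multiset is fixed as exactly enough zero-tiles for the variable and clause gadgets plus enough one-tiles for the variable gadgets, with the cleanup regions sized to make the area come out right --- this is your ``reservoir with slack'' idea made precise. With these two fixes (a planar rectilinear source problem, and monochromatic tiles with width-one uniquely-tileable gadgets instead of $(0,1)$ wires) your sketch becomes the paper's proof; without them, the correctness of both directions of the reduction cannot be argued.
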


Moreover, we show that if instances can have arbitrarily large numerical values on the domino tiles, it is weakly $\NP$-Hard to decide whether an instance with a single constraint can be solved. 

\paragraph{Our Results on Strands.} In \Cref{sec:strands} we consider the problem of deciding whether a Strands grid $M\in \Sigma^{n\times m}$ can be partitioned into valid words of a dictionary $D$, where $\Sigma$ is an alphabet of characters. Ignoring the \emph{spangram} that can always be added in the last row, we show that this problem is $\NP$-Hard even if the instance contains only constantly many words of constant length. 

\begin{restatable}{theorem}{StrandsNPComplete}\label{thm:strands-np-complete}
Given a Strands instance $(\Sigma, D, M)$ determining whether $M$ can be partitioned into valid words from $D$ is $\NP$-Complete, even if $|\Sigma|, |D|, L = O(1)$, where $L$ denotes the maximum length of a word in $D$. 
\end{restatable}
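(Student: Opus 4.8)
The plan is first to observe that the problem lies in $\NP$: a proposed partition, presented as a list of dictionary words each accompanied by the king-path of grid cells realizing it, is checked in polynomial time by verifying that each word belongs to $D$, that each cell sequence is a repetition-free king-path, and that the cell sets partition $M$; this needs no restriction on $|\Sigma|, |D|$, or $L$. For hardness I would reduce from $3$-SAT — using a planar variant such as planar $3$-SAT so that the (planar) variable–clause incidence graph embeds in the grid without the need for crossover gadgets, or else building a crossover gadget directly. From a formula $\varphi$ with variables $x_1,\dots,x_n$ and clauses $C_1,\dots,C_m$ I would construct, in polynomial time, a rectangular grid $M$ over a fixed finite alphabet $\Sigma$ together with a fixed finite dictionary $D$ whose words have length at most an absolute constant, so that $M$ can be partitioned into words of $D$ exactly when $\varphi$ is satisfiable. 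Because $\Sigma$, $D$, and the maximum word length come from a fixed gadget library independent of $\varphi$, the bounds $|\Sigma|, |D|, L = O(1)$ are then automatic. It is worth noting at the outset that the construction must use words of length at least $3$: with words of length at most $2$ the problem collapses to a polynomial-time matching computation in the subgraph of ``usable'' king-edges, so the hardness has to exploit both the shapes of length-$3$ king-paths and the fact that only certain letter-triples spell dictionary words.

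For the gadgets I would use a \emph{rigid background} — a labelled pattern of cells admitting an essentially unique local partition — into which I carve thin \emph{corridors}. A \emph{variable gadget} is a small closed corridor (a loop) labelled so that it has exactly two valid partitions, read as $x_i$ true and $x_i$ false; the cyclic structure forces the chosen ``phase'' to propagate all the way around the loop, hence to every point where the loop is probed. \emph{Wire}, \emph{turn}, and \emph{split} gadgets route corridors across the grid and fan a variable out to its occurrences (probing a corridor from its opposite side delivers $\bar x_i$ rather than $x_i$). A \emph{clause gadget} is a small region glued to the three corridors of its literals, containing a cell that can be covered only with the help of a word reaching in from an adjacent corridor, where such a word is available precisely when the corresponding literal is true; the region then admits a valid partition iff at least one of its three literals is satisfied, while the other two corridors stay self-sufficient. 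Using directed dictionary words — e.g.\ putting $ab$ but not $ba$ in $D$ — helps make all of these gadgets rigid.

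The hard part will be rigidity and composition: I would need to prove that every gadget admits \emph{exactly} the intended family of local partitions, ruling out ``cheating'' partitions in which a word straddles a gadget boundary in an unintended way or in which the background is tiled abnormally near a corridor, and that any combination of locally valid choices glues into a globally valid partition of $M$. As is typical for tiling-hardness arguments, this is where the detailed case analysis lives — designing the labels so the background/corridor interface is sealed, so each variable loop has precisely two states, and so each clause region realizes a clean OR with no side effects. Once the gadget library is verified, the forward direction follows by setting each variable loop to its assignment's phase and completing each clause region through a satisfied literal, and the reverse direction reads a satisfying assignment off any valid partition; together with membership in $\NP$ this yields $\NP$-completeness even when $|\Sigma|, |D|, L = O(1)$.
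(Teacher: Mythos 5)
Your overall outline has the same shape as the paper's argument---membership in $\NP$ plus a reduction from a planar SAT variant built from a constant-size gadget library, so that $|\Sigma|,|D|,L=O(1)$ comes for free---and your $\NP$-membership certificate is fine. The problem is that the proposal stops exactly where the proof has to start: you never exhibit an alphabet, a dictionary, or a single concrete gadget, and you yourself defer the rigidity and composition analysis (``the hard part'') to future work. For this theorem that analysis \emph{is} the content: the claim is precisely that some fixed constant-size letter/word system forces variable gadgets to have exactly two tilings and clause gadgets to behave like clauses, and no amount of high-level description of ``rigid backgrounds'', ``corridors'' and ``sealed interfaces'' substitutes for exhibiting one and verifying it. As written, the hardness direction is a plan, not a proof.

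Moreover, the plan takes on several difficulties that the paper's proof deliberately avoids, which makes the missing gadget work harder than it needs to be. The paper reduces from POSITIVE PLANAR 1-IN-3-SAT with a rectilinear embedding: positivity eliminates your negation mechanism (``probing the corridor from the opposite side delivers $\bar x_i$''), the rectilinear embedding eliminates general routing, splitting and fan-out, and the 1-in-3 semantics lets the clause gadget be just two cells $CC$ that can only be consumed by the word $FCC$ arriving from exactly one incident edge gadget---so ``exactly one true literal'' is enforced by a trivial counting argument. Concretely, the paper uses the eight-letter alphabet $\{A,B,C,*,\#,E,F,\sqcup\}$ and the dictionary $\{\sqcup,\,A,\,A*,\,*\#BE,\,\#B*,\,EF,\,FCC\}$; each variable gadget then has exactly two coverings (one leaving its $E$'s uncovered, one not), and each edge is a forced chain of $EF$ words whose phase either terminates cleanly (false) or leaves a final $F$ that must grab the clause's $CC$ (true). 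Your OR-clause, by contrast, must let a true-but-unused literal's corridor remain ``self-sufficient'', i.e.\ your wire gadget needs two internal ways to absorb its own parity, and together with split and negation gadgets this is exactly the kind of slack that breeds the ``cheating'' partitions you acknowledge you would have to exclude. If you want to complete your argument, the cleanest fix is to switch to the same (or a similarly constrained) source problem so that the clause check reduces to a counting argument, and then actually write down and verify a concrete gadget library.
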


\paragraph{Our Results on Tiles.} In \Cref{sec:tiles}, we consider the problem of deciding whether a Tiles puzzle can be completed. We show that this condition is equivalent to each distinct feature appearing in an even number of tiles, and hence the problem can be solved in linear time. This result is summarized in the following theorem.

\begin{restatable}{theorem}{TilesMainTheorem}\label{thm:tiles-main-result}
    Given any instance of Tiles, the following are all equivalent:
    \begin{enumerate}
        \item The instance is solvable,
        \item The instance is solvable with a single, unbroken combo,
        \item Each distinct feature in the instance appears in an even number of tiles.
    \end{enumerate}
    In particular, there exists an algorithm that decides whether an instance of Tiles is solvable in time linear in the size of the instance.
\end{restatable}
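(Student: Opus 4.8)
The plan is to prove the cycle of implications $(2) \Rightarrow (1) \Rightarrow (3) \Rightarrow (2)$ and then read off the algorithmic claim from the equivalence $(1) \Leftrightarrow (3)$. Throughout, I will treat an instance simply as a finite collection of tiles, each carrying a set of features (the grid geometry is irrelevant, since a move may go to \emph{any} distinct tile), and write $F'(v)$ for the set of features still present on tile $v$ in the current state. The implication $(2) \Rightarrow (1)$ is immediate: a play that solves the instance with a single unbroken combo is, in particular, a play that solves the instance.

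For $(1) \Rightarrow (3)$, I would track, for each feature $f$, the number $m_f$ of tiles currently containing a non-deleted copy of $f$. Initially $m_f$ is the number of tiles in which $f$ appears, and in a solved state $m_f = 0$. Teleport moves delete nothing, and a matching move from $a$ to $b$ deletes every feature of $F'(a)\cap F'(b)$ from \emph{both} $a$ and $b$ simultaneously; hence each move either leaves $m_f$ unchanged or decreases it by exactly $2$. So the parity of $m_f$ is invariant along any play, and since it reaches $0$, it started even. The only point requiring care here is precisely the ``delete all shared features from both tiles'' rule, but it only reinforces the pairwise-deletion property rather than breaking it.

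The heart of the argument is $(3) \Rightarrow (2)$, which I would establish constructively. The key observation is that the parity property above is a genuine invariant of the game regardless of the player's choices: after any sequence of moves, every feature still occupies an even number of tiles. Therefore, whenever the current tile $t$ is non-empty, choosing any $f \in F'(t)$, its current multiplicity is even and at least $1$, hence at least $2$, so some other tile $t' \neq t$ also contains $f$ and the move $t \to t'$ is a combo-extending matching move. The strategy is then: begin on any non-empty tile (if there is none, the instance is already solved); from a non-empty tile perform any matching move; from an empty tile, if some tile is still non-empty, teleport to it — this teleport is \emph{forced}, so it does not reset the combo — and otherwise stop. Each matching move strictly decreases the potential $\Phi = \sum_v |F'(v)|$ by at least $2$, and two teleports never occur in a row (a teleport lands on a non-empty tile, from which a matching move is available), so the play terminates, and by the case analysis it can terminate only in the all-empty state. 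Since every teleport used is forced, the combo counter is never reset: the instance is solved with a single unbroken combo.

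Finally, $(1)\Leftrightarrow(3)$ gives the linear-time decision procedure: scan the instance once, count for each feature the number of tiles in which it occurs, and accept iff all counts are even; this runs in time linear in the number of tile–feature incidences plus the number of tiles. I expect the main obstacle to be not any single calculation but two modeling points that must be nailed down: (i) fixing the intended meaning of ``a single, unbroken combo'' — namely that forced teleports are allowed and only \emph{unforced} teleports break a combo, which disconnected instances (e.g.\ two feature-disjoint matched pairs of tiles) show is unavoidable; and (ii) handling the ``delete all shared features at once'' rule consistently in both the parity invariant and the greedy termination argument. Neither is genuinely hard once the parity invariant is in hand, but both need to be stated carefully.
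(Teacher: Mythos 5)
Your proposal is correct and follows essentially the same route as the paper: the parity of the number of tiles containing each feature is invariant under moves (each matching move deletes a shared feature from exactly two tiles), which gives $(1)\Rightarrow(3)$, and the greedy strategy of always preferring matching moves, with only forced teleports, gives $(3)\Rightarrow(2)$, with $(2)\Rightarrow(1)$ trivial and the linear-time parity check following immediately. Your write-up is somewhat more explicit about termination (the potential $\Phi$) and the forced-teleport modeling point, but the key ideas are the same.
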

Here, the size of the puzzle is the number of features appearing in every tiles (counted with multiplicity).

We also consider the problem of deciding whether a Tiles puzzle can be completed without making use of teleport moves. We show that this problem is in P for a restricted class of puzzles: those in which any two tiles share at most one feature. The complexity of this problem in the general setting is left as an open problem.

\section{Letter Boxed}\label{sec:letter-boxed}
\vspace{-14mm}
\begin{figure}[H]
    \centering    \includegraphics[width=1\linewidth]{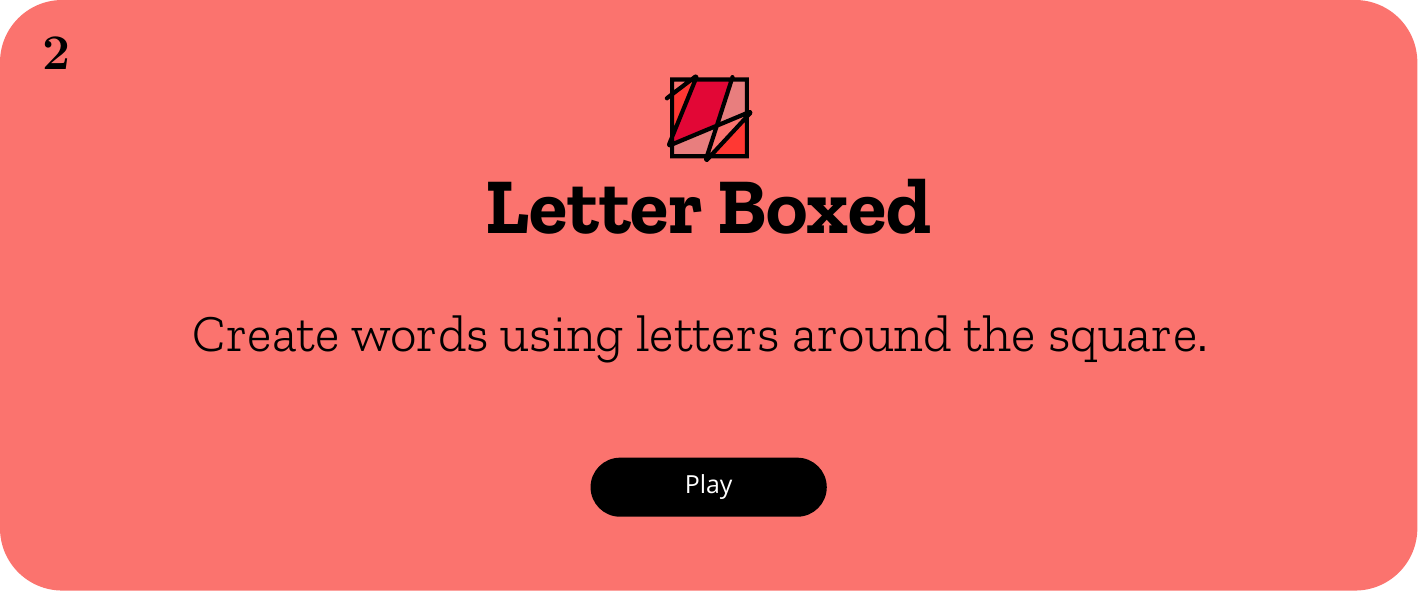}
\end{figure}

In this section, we discuss the complexity of some computational problems related to Letter Boxed. The central problem of interest will be determining the minimum number of words needed to complete a given Letter Boxed puzzle. We will show that this problem is $\NP$-Hard in general, but becomes tractable for certain choices of parameters. The results in this section are summarized in \Cref{thm:letter-boxed-table} in \Cref{sec:our-results}.

Recall that a typical instance of Letter Boxed is given by a square, with three letters on each side. In order to obtain a parametrized problem, we allow our instances to have $n$ characters on each side for a varying parameter $n$. Since we allow the set of distinct characters (alphabet) to be smaller than $4n$, we study a more general version of the problem in which characters could be repeated, even on the same side of the square.

Note that the specific position of each character on a side is immaterial to the game: the game is entirely isomorphic up to permuting characters on the same side of the square. We therefore define instances as being specified up to this isomorphism.%

\begin{definition}[Letter Boxed puzzle]
    A \emph{Letter Boxed puzzle} is a tuple $(\Sigma,D,\Gamma_1,\Gamma_2,\Gamma_3,\Gamma_4)$ where $\Sigma$ is a finite alphabet, $D \subseteq \Sigma^*$ is a finite set of strings of characters from $\Sigma$ called the \emph{dictionary}, and each $\Gamma_i$ is a multiset of $n$ elements (counted with multiplicity) from $\Sigma$, representing the characters appearing on the $i^{th}$ side of the square.
\end{definition}
Given a Letter Boxed puzzle, we denote by $L$ the length of the longest word in $D$. Note that we explicitly represent the multisets $\Gamma_1, \dots, \Gamma_4$ so that the size of the puzzle is $|\Sigma|+4n + \sum_{w\in D}|w|$. 

In order to solve the problem, one needs to find a sequence of words that covers all the characters on every side, and such that each word in the sequence ends with the same character with which the next word starts. Moreover, these words need to satisfy the constraint that no two subsequent characters are on the same side. We formalize these conditions in the following definition.
\begin{definition}[Letter Boxed Solution]\label{def:solution-to-letter-boxed}
    Given a Letter Boxed puzzle $(\Sigma,D,\Gamma_1,\Gamma_2,\Gamma_3,\Gamma_4)$, a \emph{solution of length $k$} to the puzzle is a pair $(\texttt{words}, \texttt{sides})$, where $\texttt{words} = (w^{(1)}, \dots , w^{(k)}) \in D^{k}$ is an ordered sequence of $k$ words from the dictionary, and $\texttt{sides} \in [4]^{\sum_{i\in [k]} |w^{(i)}|}$ specifies a side for every character in every word in \texttt{words}, with the following properties:
    \begin{enumerate}
        \item For every $i =1, \dots , k-1$, the last character of $w^{(i)}$ is the same as the first character of $w^{(i+1)}$.
    \end{enumerate}
    Moreover, letting $\sigma = w^{(1)}\circ w^{(2)} \circ \cdots \circ w^{(k)}$ be the concatenation of the words in $\texttt{words}$ then:
    \begin{enumerate}
        \item[2.] For every $i$, $\texttt{sides}_i \neq \texttt{sides}_{i+1}$, unless $\sigma_i$ is the end of some word $w^{(j)}$, in which case $\texttt{sides}_i = \texttt{sides}_{i+1}$,
        \item[3.] For every $i\in[|\sigma|]$, $\sigma_i \in \Gamma_{\texttt{sides}_i}$,
        \item[4.] For every $i\in[4]$, and for every character $\gamma \in \Gamma_i$ the number of indexes $j$ such that: (i) $\sigma_j = \gamma$, (ii) $\texttt{sides}_j=i$,  and (iii) either $j=|\sigma|$ or $\sigma_j$ is not the end of some $w_\ell$, is at least the number of occurrences of $\gamma$ in $\Gamma_i$. Note that the third property ensures that we do not double count characters when starting a new word.
    \end{enumerate}
    If a Letter Boxed puzzle admits a solution of length $k$ then we say it is \emph{solvable with at most $k$ words}.
\end{definition}

We consider the following decision problem.
\begin{mdframed}
    {LETTER BOXED}\\
    \textit{Input:} A Letter Boxed puzzle $(\Sigma, D, \Gamma_1, \Gamma_2, \Gamma_3, \Gamma_4)$ and a positive integer $k$. \\
    \textit{Question:} Can the Letter Boxed puzzle be solved with at most $k$ words?
\end{mdframed}

In this work we will also consider variants of the game with a number of sides different from four. Here, we think of instances as being formed by putting letters on the sides of a regular $S$-polygon. Instances of this version of the game are defined by extending the decision problem \letterboxed (and the corresponding Letter Boxed puzzle) in the natural way. We call this version $S$-SIDES \letterboxed. Note that under this definition \letterboxed is simply $4$-SIDES \letterboxed.

The rest of this section proves \Cref{thm:letter-boxed-table}. We will do so via a series of results. We start by providing a natural dynamic programming algorithm for the problem, which will be used to prove the cases solvable in polynomial time. 

\begin{theorem}\label{thm:letterboxed-main-algorithm}
    There is an algorithm that finds the minimum number of words and the minimum number of letters required to solve an instance of $S$-\textnormal{SIDES} \letterboxed with running time $O(S^2\cdot(n+1)^{S\cdot |\Sigma|}\cdot |D|^2 \cdot L)$.
\end{theorem}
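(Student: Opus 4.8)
The plan is to reduce an $S$-\textnormal{SIDES}~\letterboxed instance to a shortest‑path computation in an explicitly built \emph{state graph}, and then run a linear‑time graph search on it. The key observation is that, as far as what can still be accomplished is concerned, a partial play of the puzzle is completely described by a pair: a \emph{coverage vector} $c$, recording for each side $i\in[S]$ and character $\gamma\in\Gamma_i$ how many occurrences of $\gamma$ on side $i$ have already been consumed (capped at the multiplicity $m_i(\gamma)$ of $\gamma$ in $\Gamma_i$), together with a \emph{cursor}, recording the character we currently stand on, the side we placed it on, and—when we are in the interior of a word—which dictionary word $w\in D$ we are spelling and which position $j$ of it we have reached. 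A state is thus a tuple $(c,w,j,s)$ with $1\le j\le |w|$ and $w_j\in\Gamma_s$, plus one distinguished source state $\bot$. Since each side hosts at most $|\Sigma|$ distinct characters, each of multiplicity at most $n$, there are $\prod_{i=1}^{S}\prod_{\gamma}(m_i(\gamma)+1)\le (n+1)^{S|\Sigma|}$ coverage vectors, so the graph has $O\bigl((n+1)^{S|\Sigma|}\cdot |D|\cdot L\cdot S\bigr)$ vertices.

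Next I would put in two kinds of edge. From $(c,w,j,s)$ with $j<|w|$ (``extend the current word''), for every side $s'\ne s$ with $w_{j+1}\in\Gamma_{s'}$ add an edge to $(c+e_{s',w_{j+1}},w,j+1,s')$, where $e_{s',\gamma}$ denotes the capped unit increment of coordinate $(s',\gamma)$; there are at most $S$ of these. From $(c,w,|w|,s)$ (``close the current word and open the next one''), for every $w'\in D$ whose first character equals $w_{|w|}$ add an edge to $(c,w',1,s)$: the side is unchanged, since condition~2 of \Cref{def:solution-to-letter-boxed} forces the shared boundary character onto the same side, and $c$ is unchanged, since that character is charged to the coverage vector exactly once—namely when the later word is opened—which is precisely the accounting built into condition~4 of \Cref{def:solution-to-letter-boxed}; there are at most $|D|$ of these (and the same word may of course be spelled several times in a solution, which the graph handles by revisiting its gadget). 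Finally, from $\bot$, for every $w\in D$ and every admissible side $s$ for $w_1$, add an edge to $(e_{s,w_1},w,1,s)$. Call $(c,w,|w|,s)$ \emph{accepting} if $c$ is saturated (every $(i,\gamma)$‑count equals $m_i(\gamma)$): this says exactly that all letters have been covered and the final word has been spelled out in full, i.e.\ that the sequence of words played so far is a solution in the sense of \Cref{def:solution-to-letter-boxed}.

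Walks from $\bot$ to accepting states then correspond to solutions of the puzzle, the ``open the next word'' edges marking the word boundaries: the number of words is one plus the number of such edges on the walk, and the number of letters read is $\sum_i |w^{(i)}|$ (or the number of occupied cells, after shifting a unit of weight off the ``open the next word'' edges). So I would give weight $1$ to the edges leaving $\bot$ and to the ``open the next word'' edges and weight $0$ to the rest, and run a linear‑time breadth‑first search from $\bot$ (using a deque, since all weights lie in $\{0,1\}$); the least label over accepting states is the minimum number of words, and ``unreachable'' certifies that the puzzle is unsolvable. A second run with the letter weights gives the minimum number of letters. The cost is dominated by the edge count, namely $O\bigl((n+1)^{S|\Sigma|}\cdot |D|\cdot L\cdot S^2\bigr)$ from extend‑edges plus $O\bigl((n+1)^{S|\Sigma|}\cdot |D|^2\cdot S\bigr)$ from the rest, which is within the claimed $O\bigl(S^2\cdot (n+1)^{S|\Sigma|}\cdot |D|^2\cdot L\bigr)$.

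The step I expect to be most delicate is the correctness argument: one must check that $(c,w,j,s)$ is a genuine sufficient statistic—that nothing about the order or identity of the earlier words, beyond the coverage vector and the cursor, can influence the rest of the game, which ultimately rests on the fact that a dictionary word is coupled to its neighbours in the solution only through its two shared endpoints—and, hand in hand with this, that the boundary‑character bookkeeping of condition~4 is implemented so that each character shared by two consecutive words is counted exactly once, so that ``accepting'' matches ``all letters used'' on the nose. The remaining pieces—the $(n+1)^{S|\Sigma|}$ bound on the number of coverage vectors, the edge count, and the correctness of the deque‑based BFS for $\{0,1\}$‑weighted shortest paths—are routine.
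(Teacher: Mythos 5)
Your proposal is correct and matches the paper's proof in essence: both build the same state graph over (per-side character counts, current word, position within it, current side), weight the word-boundary transitions with $1$ and the within-word transitions with $0$, solve by 0--1 BFS, and reweight to get the minimum number of letters, with the same $O(S^2\cdot(n+1)^{S\cdot|\Sigma|}\cdot|D|^2\cdot L)$ accounting. The only differences are cosmetic: you track consumed (capped) counts instead of residual counts, you pass through position $1$ of the next word on the same side rather than jumping directly to position $2$ on a new side, and you place the unit weight on the source edges rather than on the edges into the sink.
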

\begin{proof}
    We begin by describing the algorithm that finds the minimum number of words necessary to solve a Letter Boxed puzzle. At the end, we adapt this idea to obtain an algorithm that finds the minimum number of letters. The algorithm, which is based on dynamic programming, runs in $O(S\cdot (n+1)^{S\cdot |\Sigma|}\cdot |D|^2 \cdot L)$ time, where $L$ is the length of the longest word in the dictionary, and uses $O(S \cdot (n+1)^{S\cdot |\Sigma|}\cdot |D|^2 \cdot L)$ memory. 

    Fix an instance of the $S$-SIDES \letterboxed problem. We define a \emph{residual letter state} as an element $r$ of the set $R=(\{0\} \cup [n])^{S\times  \Sigma}$. Each $r\in R$ encodes the number of characters of each kind left on each side of the square, so that $r_{i,\sigma}$ denotes the number of occurrences of the character $\mathtt{\sigma}$ on the $i^{th}$ side. %
    
    We build a weighted directed graph $G=(V,E,c)$ as follows.

    The set of vertices $V$ will encode the states of the game, with two extra states called \texttt{start} and \texttt{end} corresponding to the game being about to start / having ended. In particular we define:
    \[
        V = \bigg(\underset{\substack{\text{the side the}\\\text{ player is}\\\text{currently on}}}{\underbrace{[S]}} \times \underset{\substack{\text{how many occurrences}\\ \text{of each character}\\ \text{are left uncovered}\\ \text{on each side}}}{\underbrace{R}}\times \underset{\substack{\text{the dictionary}\\\text{word that was}\\\text{last started}\\\text{by the player}}}{\underbrace{D}} \times \underset{\substack{\text{how far into}\\\text{the word the}\\\text{player is}}}{\underbrace{[L]}}\bigg) \cup \{\texttt{start}, \texttt{end}\}.
    \]

    One may prefer to think of $V$ as not containing any state $u=(i,r,w,\ell)$ for which $\ell > |w|$, those in which the $i^{th}$ side does not contain an occurrence of the $\ell^{th}$ character of $w$, or those for which $r_{i,\sigma}$ is greater than the original number of occurrences of $\sigma$ on the $i^{th}$ side of the instance. These states will not be reachable from the \texttt{start} state (intuitively because they correspond to unreachable game states) and thus they will not impact the algorithm.
    
    An \emph{initial state} is a state of the form $(i,r,w,1)$ where the $i^{th}$ side of the square contains an occurrence of $w_1$ (the first character of $w$) and $r_{j,\sigma}$ is equal to the number of occurrences of the character $\sigma$ on the $j^{th}$ side of the square at the start of the game, with the exception of $r_{i,w_1}$ which has been diminished by $1$. A \emph{final state} is a state of the form $(i,r,w,\ell)$, where $|w| = \ell$ and the $i^{th}$ side contains an occurrence of the last (the $\ell^{th}$) character of $w$ and $r_{j,\sigma} = 0$ for every $j\in[S]$ and $\sigma\in \Sigma$. We include in $G$ edges of cost zero connecting \texttt{start} to all initial states and edges of cost one connecting all final state to \texttt{end}. In addition, the set of edges $E$ contains two types of edges: \emph{continuation edges} and \emph{new-word edges}.

    Each \textbf{continuation edges} connects two states $u$ and $v$, where in $u$ the player is composing a word $w$ from the dictionary, and in $v$ the player has added another letter to the partial word built so far. In particular, $u = (i,r,w,\ell)$ and $\ell$ is strictly less that the length $|w|$ of the word $w$, and $v = (j,r',w,\ell+1)$, where $j \neq i$, side $j$ contains at least one occurrence of the $(\ell+1)^{th}$ character of $w$, and $r'$ is obtained from $r$ as follows. Let $\sigma = w_{\ell+1}$ be the $(\ell+1)^{th}$ character of $w$ then for all $\tau\in \Sigma$ and $k\in[S]$:
    \[
        r'_{k,\tau} = \begin{cases}
           r_{k,\sigma}-1 &\text{if } k=j\text{, }\tau = \sigma \text{ and }r_{j,\sigma} > 0,\\
            r_{k,\tau} &\text{otherwise.}
        \end{cases}
    \]
    All continuation edges have weight $0$ in $G$.
    
    Each \textbf{new-word edge} connects two states $u$ and $v$ where in $u$ the player has just finished composing a full word $w$ from $D$, and in $v$ the player has played the second letter of another word $w'$ from $D$ that starts with the same letter with which $w$ ends. In particular, $u=(i,r,w,\ell)$ and $v=(j,r',w',2)$, where $w'\in D$ begins with the same character with which $w$ ends, $j \neq i$ and the $j^{th}$ side $\Gamma_j$ contains at least one occurrence of the second character of $w'$, and $r'$ is obtained from $r$ as follows. Let $\sigma$ be the second character of $w'$, then, for all $\tau \in \Sigma$ and $k \in[S]$:
    \[
        r'_{k,\tau} = \begin{cases}
           r_{k,\sigma}-1 &\text{if } k=j\text{, }\tau = \sigma \text{ and }r_{j,\sigma} > 0,\\
            r_{k,\tau} &\text{otherwise.}
        \end{cases}
    \]
    All new-word edges have weight $1$ in $G$.

    Each edge corresponds to a valid transition from one game state to another undertaken by inserting one more character. In particular, the existence of a valid solution corresponds to the existence of a path from \texttt{start} to \texttt{end}. Moreover, the number of words used by a solution corresponds to the weighted length of the corresponding path in $G$, and hence, the original problem is equivalent to finding a shortest path from \texttt{start} to \texttt{end} in $G$.

    This can be done in time linear in the size of the graph by using a version of the BFS algorithm known as 0-1 BFS, which solves the single-source shortest path problem in weighted graphs in which all weights are either zero or one. %

    Note that, by construction, each vertex $u=(i,r,w,\ell)$ with $\ell < |w|$ has out-degree at most $S-1$, and each vertex $u=(i,r,w,\ell)$ with $\ell = |w|$ has out-degree at most $(|S|-1)D$. Hence, the total size of the graph is $O(S^2 \cdot (n+1)^{S\cdot |\Sigma|}\cdot |D|^2 \cdot L)$, and the result follows. 
    
    In order to obtain an algorithm that finds the minimum number of characters needed to solve the Letter Boxed instance, one simply needs to assign to each continuation edge a weight of $1$ instead of $0$. By doing this, the cost of each \texttt{start}-\texttt{end} path in $G$ becomes equal to the number of letters used in the corresponding solution of the puzzle, and the result follows.
\end{proof}

We now use this algorithm to show that \letterboxed is solvable in polynomial time in certain regimes. 

\begin{corollary}\label{cor:letterboxed-poly-cases}
    \letterboxed where either (i) $|\Sigma|=O(1)$, or (ii) $|D|=O(1)$ and $L=O(1)$, or (iii) $L=O(1)$ and $k=O(1)$, is solvable in polynomial time 
\end{corollary}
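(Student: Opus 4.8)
The plan is to dispatch the three regimes separately, leaning on the dynamic program of \Cref{thm:letterboxed-main-algorithm} (instantiated with $S=4$, so that its running time is $O\bigl((n+1)^{4|\Sigma|}\cdot |D|^2\cdot L\bigr)$) for the first two regimes and on a direct exhaustive search for the third. In every case it suffices to compute the minimum number of words over all solutions and compare it against $k$; since $k$ enters only as this threshold, it is harmless that it may be encoded in binary.

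\emph{Case (i): $|\Sigma|=O(1)$.} Here $(n+1)^{4|\Sigma|}$ is a polynomial in $n$ of fixed degree, and $n$, $|D|$, $L$ are all bounded by the input size, so \Cref{thm:letterboxed-main-algorithm} already runs in polynomial time. \emph{Case (ii): $|D|,L=O(1)$.} First I would test, in linear time, whether some side multiset $\Gamma_i$ contains a character occurring in no word of $D$; if so the instance is unsolvable, since every letter of the square must be covered by some played word. Otherwise every character appearing on the square belongs to the set of characters that occur in $D$, which has size at most $|D|\cdot L=O(1)$; restricting $\Sigma$ to this set yields an equivalent instance with constant-size alphabet, reducing to case (i).

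\emph{Case (iii): $L,k=O(1)$.} Now $|\Sigma|$ (and $|D|$) may grow, so the dynamic program is no longer polynomial and a separate idea is needed. The key observation is that any solution using at most $k$ words has concatenation $\sigma=w^{(1)}\circ\cdots$ of length at most $kL=O(1)$, hence a solution is a ``small'' object that can be found by brute force. Concretely, the algorithm enumerates every ordered sequence of at most $k$ words from $D$ --- there are $O(|D|^{k})$ of them, polynomially many since $k$ is constant --- discards those failing the chaining condition (item~1 of \Cref{def:solution-to-letter-boxed}), and for each surviving sequence iterates over all $4^{|\sigma|}=O(1)$ assignments of a side to each position of $\sigma$, checking in polynomial time whether conditions~2--4 of \Cref{def:solution-to-letter-boxed} hold. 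The instance is a yes-instance iff some enumerated sequence admits a valid side assignment, and the total running time is polynomial.

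The routine part of case (iii), and the place to be slightly careful, is verifying that testing a single candidate $(\texttt{words},\texttt{sides})$ is polynomial-time: conditions~2 and~3 are immediate from the word boundaries and the multisets $\Gamma_i$, and condition~4 compares, for each side and character, a count of at most $|\sigma|$ positions against a multiplicity that is part of the input, which is still polynomial. The only genuinely non-mechanical step is the observation driving case (iii): because a budget-$k$ solution touches only $O(1)$ positions, one can search over word sequences and side labelings rather than over the exponentially many residual-letter states of \Cref{thm:letterboxed-main-algorithm}. (An alternative route for case (iii) is to first argue from condition~4 that a yes-instance forces $4n\le k(L-1)+1=O(1)$ and then reduce to case (ii), but the direct search is cleaner.)
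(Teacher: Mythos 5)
Your proposal is correct and follows essentially the same route as the paper: cases (i) and (ii) invoke the dynamic program of \Cref{thm:letterboxed-main-algorithm} (with case (ii) reduced to a constant-size alphabet, where your explicit pre-check for side characters absent from $D$ is a slightly more careful rendering of the paper's alphabet restriction), and case (iii) is the same brute-force enumeration of at most $k$ word sequences with all $O(1)$ side assignments. No substantive difference from the paper's argument.
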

\begin{proof}
    The puzzle in an instance of \letterboxed has four sides. Thus, when $|\Sigma|=O(1)$, the algorithm of \Cref{thm:letterboxed-main-algorithm} runs in time $O\left((n+1)^{O(1)} \cdot |D|^2 \cdot L\right)$ which is polynomial in $n$, $|D|$, $L$ and thus in the instance size. Recall that the algorithm in \Cref{thm:letterboxed-main-algorithm} returns the minimum number of words required to solve the instance, thus by comparing such value with the input $k$ it is possible to solve the corresponding decision problem.

    Note also that, if $|D|=O(1)$ and $L=O(1)$ one can restrict the alphabet so that $|\Sigma|\leq |D|\cdot L =O(1)$, and this implies that the instance is solvable in polynomial time by the previous case. 

    Finally, if $L=O(1)$ and $k=O(1)$, then it is possible to check whether the puzzle is solvable by brute force; indeed, we can try all $\sum_{i=1}^k |D|^i \leq |D|^{O(1)}$ possible ways of concatenating at most $k$ words from the dictionary, each with an overall length of $O(1)$, and for each of them check in constant time whether it solves the puzzle. Indeed, for each such concatenation we can try all the ways to move around the square (which are at most $4^{O(1)}=O(1)$) and check whether all characters are collected. In particular, note that for $4n> (L-1)k+1=O(1)$, the instance will trivially be unsolvable.
\end{proof}

Sadly, these results do not imply that LETTER BOXED can be solved in polynomial time in general. In fact, we now show that various variants of LETTER BOXED are $\NP$-Complete. We start by proving that the problem is in $\NP$.

\begin{lemma}
    For any $S$, $S$\textnormal{-SIDES} \letterboxed is in $\NP$.
\end{lemma}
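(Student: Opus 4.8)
The plan is a guess-and-check argument, with one twist needed because the budget $k$ may be encoded in binary. The natural certificate for a YES-instance is a solution $(\texttt{words},\texttt{sides})$ in the sense of \Cref{def:solution-to-letter-boxed} whose length is at most $k$: given such a pair, a polynomial-time verifier checks that every $w^{(i)}$ occurs in $D$, that consecutive words chain (property~1), that $\texttt{sides}$ satisfies properties~2--4, and that the number of words is at most $k$; since $D$ and $\Gamma_1,\dots,\Gamma_S$ are listed explicitly in the input, all of this runs in time polynomial in the sizes of the instance and of the certificate. The only issue is that a solution of length exactly $k$ could be exponentially long when $k$ is written in binary. So the heart of the argument is a \emph{short-solution lemma}: if a Letter Boxed puzzle is solvable with at most $k$ words, then it is solvable with at most $\min\{k,B\}$ words for some polynomial $B$; we will obtain $B = S\,|\Sigma|\,(Sn+2)$, which is polynomial in the instance size because $|\Sigma|$, $S$, and $Sn$ are each at most that size. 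Granting the lemma, a shortest solution of a YES-instance has at most $B$ words, each of length at most $L$, hence has size polynomial in the instance size and is the desired certificate.

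To prove the lemma, take a solution $w^{(1)},\dots,w^{(m)}$ using the fewest words and suppose $m\ge B$. The key observation is that deleting a well-chosen block of consecutive words keeps the chaining conditions intact. Precisely, if $1\le i<j\le m$ and the first character of $w^{(i)}$ equals that of $w^{(j)}$ \emph{and} is assigned the same side by $\texttt{sides}$ in both words, then removing $w^{(i)},\dots,w^{(j-1)}$ (together with the matching stretch of $\texttt{sides}$) still satisfies properties~1--3: the only junction that changes is the one where $w^{(i-1)}$ is followed by $w^{(j)}$ instead of $w^{(i)}$, and since $w^{(i-1)}$ ended with the first character of $w^{(i)}$ on its assigned side (by properties~1 and~2 at the old junction) it now ends with the first character of $w^{(j)}$ on that same side; if $i=1$ there is no junction to check. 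Now apply pigeonhole to the $m$ pairs $(\text{first character of }w^{(p)},\ \text{its side in }\texttt{sides})\in\Sigma\times[S]$: some value is attained by at least $m/(S|\Sigma|)\ge Sn+2$ of the words, at indices $a_1<\dots<a_t$ with $t\ge Sn+2$. The blocks $\mathcal B_s=(w^{(a_s)},\dots,w^{(a_{s+1}-1)})$ for $s=1,\dots,t-1$ are then pairwise disjoint, consecutive, and each deletable in the above sense, and the last word $w^{(m)}$ lies in no $\mathcal B_s$.

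It remains to select a block whose deletion also keeps the covering condition (property~4). In the chosen solution, for every side $i$ and character $\tau$ fix a set of exactly $\mathrm{mult}_\tau(\Gamma_i)$ positions witnessing property~4 for $(i,\tau)$ (such a set exists because the solution is valid), and call all positions picked this way \emph{essential}; there are exactly $\sum_i|\Gamma_i|=Sn$ of them. Each essential position lies in a single word, hence in at most one of the $t-1\ge Sn+1$ disjoint blocks, so some $\mathcal B_s$ contains no essential position. Deleting that $\mathcal B_s$ removes no essential position and leaves condition~(iii) of property~4 intact for the surviving positions (as $w^{(m)}$ is never deleted, the final position is unchanged and no interior position becomes a word boundary), so property~4 still holds for every $(i,\tau)$; together with the previous paragraph this gives a valid solution with fewer than $m$ words, contradicting minimality. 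Hence $m<B$, which proves the lemma and therefore that $S$-\textnormal{SIDES} \letterboxed{} is in $\NP$. The main obstacle is exactly this short-solution bound; within it, the subtle point — where a naive attempt fails — is ensuring block-deletion respects the side assignments at word boundaries, which is why one must pigeonhole on (letter, side) pairs rather than on letters alone.
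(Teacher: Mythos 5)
Your proof is correct and takes essentially the same approach as the paper: both establish a polynomial bound on the length of a minimal solution (yours $S|\Sigma|(Sn+2)$, the paper's $S^2|\Sigma| n$) via a pigeonhole argument over (character, side) states, and then use a short solution $(\texttt{words},\texttt{sides})$ as the polynomial-size certificate. Your block-deletion and essential-position bookkeeping is just a more explicit rendering of the paper's step of trimming repeated states between ``covering'' words, so the two arguments differ only in presentation.
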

\begin{proof}
    We first prove that if a Letter Boxed puzzle is solvable, then it is also solvable with at most $S^2\cdot |\Sigma| \cdot n$ words. We say that a word in a solution is \emph{covering} if it collects at least a new character on one of the sides. Since there are $S \cdot n$ characters to cover, the number of covering words in a solution is at most $S \cdot n$ (if each such word covers only one new character). Moreover, between two covering words, there can be other words that are used to change either the current side or the current character (from which the next word will start from). Since there are at most $S\cdot |\Sigma|$ pairs of current side and last character, any solution can have, without loss of generality, at most $S\cdot |\Sigma|-1$ words between two covering words (if there are more than that, we will necessarily end up in the same state more than once and we can therefore trim the sequence of words). Thus, the total number of words in a solution can be upper bounded by $S\cdot n \cdot S\cdot |\Sigma|$ without loss of generality. 

    Given an instance $(\Sigma, D, \Gamma_1, \dots, \Gamma_S), k$ of $S$-SIDES \letterboxed, a certificate for such an instance is given by a pair $(\texttt{words}, \texttt{sides})$ as defined in \Cref{def:solution-to-letter-boxed}, where $|\texttt{words}| \leq \min\{k, S^2 \cdot |\Sigma|\cdot n\}$ and therefore $|\texttt{sides}|\leq S^2 \cdot |\Sigma|\cdot n \cdot L$. Thus, the certificate has polynomial size and checking the conditions of \Cref{def:solution-to-letter-boxed} can be done in polynomial time. 
\end{proof}

We now prove that LETTER BOXED is $\NP$-Hard even assuming that the dictionary consists of a single word ($|D|=1$) and that the player is required to solve the puzzle with a single word ($k=1$). Specifically, we show a polynomial-time reduction from POSITIVE NOT-ALL-EQUAL 3-SAT, defined as follows.

\begin{definition}[POSITIVE NOT-ALL-EQUAL 3-SAT]
    In POSITIVE NOT-ALL-EQUAL 3-SAT, one is given a set $V$ of boolean variables and a set $C \subseteq \binom{V}{3}$ of clauses each containing exactly three distinct positive literals (non-negated variables). The goal is to decide whether there exists a truth assignment of the variables such that each clause contains both a \emph{true} and a \emph{false} variable.
\end{definition}
POSITIVE NOT-ALL-EQUAL 3-SAT is known to be $\NP$-Complete~\cite{dd20}. We use this fact to show the following.

\begin{theorem}\label{thm:letter-boxed-np-complete-d1-k1}
   \letterboxed is $\NP$-Complete even if $|D|=1$ and $k=1$.
\end{theorem}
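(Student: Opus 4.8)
The plan is to reduce from POSITIVE NOT-ALL-EQUAL 3-SAT. Given a formula with variable set $V$ and clause set $C \subseteq \binom{V}{3}$, I will build a \letterboxed instance with $|D|=1$ and $k=1$ whose single word $w$ can be traced around the square if and only if the formula has a NAE-satisfying assignment. The key idea: since $k=1$, a solution is simply a single word together with an assignment of each of its characters to one of the four sides, so that (a) consecutive characters lie on different sides, and (b) every character on every side gets covered. So the word $w$ will be a fixed gadget string, and the only freedom the ``player'' has is in choosing which side each occurrence of a character goes to — this choice will encode the truth assignment. The natural design is to have two sides play the role of ``true'' (sides $1,3$, say) and two sides play the role of ``false'' (sides $2,4$), use dedicated letters to force the word to ``bounce'' between the true-pair and the false-pair at controlled moments, and have each variable $x$ correspond to a letter $\ell_x$ that appears on exactly one true side and one false side; the side chosen for $\ell_x$ (modulo the true/false pair) records whether $x$ is true or false. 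For a clause $\{x,y,z\}$ I place a substring of $w$ that visits $\ell_x, \ell_y, \ell_z$ in a short window framed by separator letters, arranged so that the no-two-consecutive-on-the-same-side rule, together with the bouncing pattern of the separators, forces the three variable-letters in that window to not all lie on the same side (i.e.\ not all ``true'' and not all ``false''), which is exactly the NAE constraint. Consistency of a variable's assignment across its several clause occurrences is enforced by giving $\ell_x$ only the two sides (one true, one false) so the covering requirement and the structure force a coherent choice — or, more robustly, by chaining the occurrences of $\ell_x$ through a short ``rail'' of auxiliary letters that can only be traversed one way.

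Concretely, the steps I would carry out are: (1) fix $\Sigma$ to contain one letter $\ell_x$ per variable, a handful of separator/clock letters, and a few padding letters; (2) specify the multisets $\Gamma_1,\dots,\Gamma_4$ so that true-ish letters live on sides $1,3$, false-ish on sides $2,4$, variable letters on exactly one side of each parity, and every letter's multiplicity equals the number of times $w$ must deposit it there; (3) write down the single word $w$ as a concatenation of clause-blocks separated by fixed ``reset'' subwords, padded so $|w|$ equals the total number of characters to be covered (making the covering constraint tight and hence forcing each character-occurrence to land on a distinct required slot); (4) prove forward: a NAE assignment yields a legal side-labeling of $w$; (5) prove backward: any legal side-labeling induces, by reading off which parity-class each $\ell_x$ landed in, a well-defined assignment, and the per-block adjacency constraints certify each clause is NAE. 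Finally note membership in $\NP$ is already established by the preceding lemma, so $\NP$-completeness follows. I would also remark the reduction is polynomial and that $k=1$ is given in unary trivially, so this is a strong hardness result, consistent with the table.

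The main obstacle I anticipate is getting the gadget to simultaneously enforce \emph{per-clause} NAE and \emph{global consistency} of each variable's value, using only adjacency (distinct consecutive sides) plus a tight covering count, within a single word and only four sides. With four sides the adjacency relation is rich (any two distinct sides are ``compatible''), so a naive encoding gives the player too much slack; the real work is choosing separator letters and their placements so that the parity-pattern of sides is forced along long stretches of $w$, effectively simulating a two-state clock that pins down ``we are currently in a true-slot / false-slot'' at each position. Designing these clock letters — deciding exactly on which sides each must appear and how often, so that the forced alternation propagates correctly through a clause-block and across block boundaries while still leaving room to encode the variable choices — is the delicate combinatorial heart of the proof. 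Padding to make the covering bound exactly tight (so no character can be ``wasted'' on a spurious side) is the tool that removes the remaining freedom, but checking that the arithmetic of multiplicities works out cleanly for all gadgets at once is the fiddly part I would need to verify carefully.
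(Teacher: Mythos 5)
Your proposal correctly picks the same source problem as the paper (POSITIVE NOT-ALL-EQUAL 3-SAT) and the same general strategy: a single dictionary word, $k=1$, truth values encoded by the choice of side for certain letters, and tight multiplicity counts to remove slack. But as written it has a genuine gap: the two mechanisms that actually carry the reduction --- per-clause NAE enforcement and global consistency of each variable across its clause occurrences --- are exactly the parts you leave unspecified (``clock letters,'' a ``rail of auxiliary letters,'' parity forced ``along long stretches of $w$''), and you yourself flag them as the unresolved combinatorial heart. Your proposed skeleton (sides $1,3$ as ``true,'' sides $2,4$ as ``false,'' with adjacency plus a bouncing pattern pinning down a two-state clock) runs straight into the difficulty you name: with four sides any two distinct sides are adjacency-compatible, so the no-same-side rule by itself forces essentially nothing about which parity class a letter lands in, and no concrete placement of separator letters is given that would. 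Also, your plan to make $|w|$ exactly equal to the number of characters to be covered is not obviously achievable, since the word needs reusable connective letters; the paper's covering condition only requires each side-letter to be covered \emph{at least} its multiplicity, and the actual construction deliberately reuses one letter many times.

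For comparison, the paper resolves both difficulties with a different and quite economical gadget design: it does not split the sides into a true pair and a false pair at all. Instead $\Gamma_2$ and $\Gamma_3$ are \emph{identical} multisets (each variable $v$ with multiplicity $\eta(v)$, an auxiliary $*_v$ with multiplicity $\eta(v)-1$, each clause letter $c$ with multiplicity $2$), $\Gamma_1$ holds a single reusable pivot letter $\#$, and $\Gamma_4$ is empty (fixed by a final padding step so all sides have equal size, a step your proposal also omits). The variable block $v\,*_v\,v\cdots *_v\,v\,\#$ forces all $\eta(v)$ copies of $v$ in that block onto one of the two identical sides --- that choice is the truth value --- and since the word contains exactly $2\eta(v)$ occurrences of $v$ against $2\eta(v)$ copies on the square, each later clause occurrence of $v$ must be taken from the \emph{opposite} side, which gives consistency for free. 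The clause block $v_1\,c\,\#\,v_2\,c\,\#\,v_3\,c\,\#\,c\,\#$ then enforces NAE purely by counting: it contains four $c$'s against two copies of $c$ on each of $\Gamma_2,\Gamma_3$, and if all of $v_1,v_2,v_3$ are forced to one side, at most three of the four $c$'s can be collected. Without gadgets of this concreteness (or an equivalent), your argument does not yet go through.
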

\begin{proof}
    We show that the problem is $\NP$-Hard by reducing from POSITIVE NOT-ALL-EQUAL 3-SAT. Let $(V,C)$ be an instance of POSITIVE NOT-ALL-EQUAL 3-SAT. We will first build an instance of LETTER BOXED $(\Sigma, D, \Gamma_1, \Gamma_2, \Gamma_3, \Gamma_4)$ in which the sets $\Gamma_1, \Gamma_2, \Gamma_3, \Gamma_4$ may have different cardinalities, and then show how to pad the instance so that each multiset has the same cardinality. We define the alphabet to be $\Sigma=V \cup C \cup \{\#\} \cup \{*_v \mid v\in V\}$, where $\#$ is a special character, and so is $*_v$ for each $v\in V$. We set $\Gamma_4=\varnothing$ and $\Gamma_1$ contains only the character $\#$ with multiplicity 1 (recall that the same character can be used multiple times). For a variable $v\in V$ let $\eta(v)$ be the number of clauses containing $v$. We set $\Gamma_2=\Gamma_3$ and for each $v\in V$ they contain the characters $v$ with multiplicity $\eta(v)$ and $*_v$ with multiplicity $\eta(v)-1$, while for each $c\in C$ they contain the character $c$ with multiplicity $2$. We are left with defining the only string in the dictionary. We call such string $\sigma$. The string $\sigma$ starts with $\#$, then for each $v\in V$ we append the following string where $v$ is repeated $\eta(v)$ times:
    \begin{equation}\label{eq:letterboxed-1word-p1}
        v\,\, \underbrace{*_v \,\, v \,\, \dots \,\, *_v \,\, v}_{\substack{*_v\, v \text{ repeated} \\ \eta(v)-1 \text{ times}}}\,\, \#.
    \end{equation}
    
    Note that only $\Gamma_2$ and $\Gamma_3$ contain $v$ and $*_v$. Therefore, intuitively, this first string forces the player to choose on which side to collect all the $v$'s and on which side to collect all the $*_v$'s, effectively assigning a truth value to the variable $v$. Then, for each $v\in V$ we append the following string, where $*_v$ is repeated $\eta(v)-1$ times, required for cleanup:
    \begin{equation}\label{eq:letterboxed-1word-p2}
        \underbrace{*_v\,\,\#\,\, *_v\,\,\#\,\, \cdots\,\, *_v\,\,\#}_{*_v\# \text{ repeated $\eta(v)-1$ times}}.
    \end{equation}
    Finally, for each clause $c\in C$ that contains variables $v_1,v_2,v_3$ we append the following string to $\sigma$:
    \begin{equation}\label{eq:letterboxed-1word-p3}
        v_1\,\, c \,\, \#\,\, v_2\,\, c \,\, \#\,\, v_3\,\, c\,\,\# \,\, c \,\,\#.
    \end{equation}
    Intuitively, if all of $v_1, v_2, v_3$ have been first selected from the same side (either $\Gamma_2$ or $\Gamma_3$), it will not be possible to pass through all the four characters $c$ that are present in $\Gamma_2 \cup \Gamma_3$. Otherwise, if at least one of $v_1,v_2,v_3$ was selected from $\Gamma_2$ and at least one from $\Gamma_3$ it will be possible to pass though all the $c$'s. Therefore the clause behaves as a not-all-equal clause.

    We now provide a more detailed argument. We will show that $(V,C)$ is solvable if and only if the instance $(\Sigma ,\{\sigma\}, \Gamma_1, \Gamma_2, \Gamma_3, \Gamma_4)$ is solvable with $k=1$. 

    Suppose $(V,C)$ is solvable and let $\phi:V\rightarrow\{T,F\}$ be an assignment that satisfies all the clauses (in the not-all-equal sense). We show how to follow the string $\sigma$ while collecting all the characters on the square. The player starts on the side $\Gamma_1$ from $\#$. Then, for each $v\in V$, if $\phi(v)=T$ the player moves first to $\Gamma_2$ and then alternates between $\Gamma_3$ and $\Gamma_2$ to process the characters of \Cref{eq:letterboxed-1word-p1} --- if instead $\phi(v)=F$ it moves first to $\Gamma_3$ and then alternates. Similarly, to process the characters of \Cref{eq:letterboxed-1word-p2}, if $\phi(v)=T$ (resp. $\phi(v)=F$) the player alternates between $\Gamma_2$ (resp. $\Gamma_3$) and $\Gamma_1$. At this point the characters we are missing from $\Gamma_2$ (resp. $\Gamma_3$) are those associated with the clauses and those associated with the variables $v$ such that $\phi(v)=F$ (resp. $\phi(v)=T$). Consider the string associated with the clause $c=(v_1, v_2,v_3)$ (\Cref{eq:letterboxed-1word-p3}). We collect $v_1,v_2,v_3$ by going to the side where they still have to be collected (either $\Gamma_2$ or $\Gamma_3$) and collect the $c$ by switching sides. Since the clause is satisfied by $\phi$, at least one of $v_1,v_2,v_3$ will be collected on $\Gamma_2$ and at least one of them will be collected on $\Gamma_3$. Therefore, the first three $c$'s will be collected without repetitions. Thus, thanks to the last portion of the string ($\#\,\, c\,\,\#$) all the four occurrences of $c$ in $\Gamma_2 \cup \Gamma_3$ can be collected. Since this holds for all clauses, at the end of processing $\sigma$, we have collected all characters from $\Gamma_1 \cup \Gamma_2 \cup \Gamma_3$.

    Suppose now that the instance $(\Sigma, \{\sigma\}, \Gamma_1, \Gamma_2, \Gamma_3, \Gamma_4)$ is solvable with a single word $\sigma$. Then, the player is forced to start from $\Gamma_1$ and, for each variable $v$, decide whether to start from $\Gamma_2$ or $\Gamma_3$: if the player starts from $\Gamma_2$ we set $\phi(v)=T$ otherwise we set $\phi(v)=F$. Consider now each portion associated with a clause $c=(v_1, v_2, v_3)$ (\Cref{eq:letterboxed-1word-p3}). Since there are exactly $2\cdot\eta(v)$ characters equal to $v$ in $\sigma$ for each variable $v$, in order for the instance to be solved the player must have taken each of $v_1, v_2, v_3$ on the opposite side that was first used when processing the portion associated with \Cref{eq:letterboxed-1word-p1}. Suppose that all of $v_1, v_2, v_3$ are taken on the same side. Suppose, without loss of generality, that this side is $\Gamma_2$. Then, the first three $c$'s of \Cref{eq:letterboxed-1word-p3} must be taken on $\Gamma_3$. But then at most one of the two $c$ characters can be covered from $\Gamma_2$ and the instance would not be solved. Therefore, under the hypothesis that the instance is solved, for each portion associated with a clause (\Cref{eq:letterboxed-1word-p3}), at least one of $v_1, v_2, v_3$ must be taken on side $\Gamma_2$ and at least one must be taken on side $\Gamma_3$. In other words, for each clause $c=(v_1,v_2,v_3)$, we have $\phi(x)=T$ for at least one $x\in\{v_1,v_2,v_3\}$ and $\phi(y)=F$ for at least one $x\in\{v_1,v_2,v_3\}$. 

    Therefore, we have proved that the problem where $|\Gamma_1|,|\Gamma_2|,|\Gamma_3|,|\Gamma_4|$ can differ is $\NP$-Hard. We now show how to pad the previous instance so that the multisets all have the same cardinality. Consider the previous instance. Add a new character $\tau$. Note that $|\Gamma_2|=|\Gamma_3|>|\Gamma_1|=1$. Add $|\Gamma_2| - 1$ copies of $\tau$ to $\Gamma_1$ and add $|\Gamma_2|$ copies of $\tau$ to $\Gamma_4$. Note that now all multisets have the same cardinality. Finally, append $2|\Gamma_2|-1$ characters $\tau$ to the string $\sigma$. Observe that the previous instance is solvable if and only if this padded instance is solvable. This concludes the proof. 
\end{proof}

We now show that \letterboxed is $\NP$-Complete even when all the words are assumed to have length $5$. This rules out the possibility of designing a polynomial-time algorithm for the case in which $L=O(1)$.

To prove hardness we reduce from the classical 3D MATCHING problem, which has long been known to be $\NP$-Hard~\cite{k09}. We recall the problem definition below.

\begin{definition}[3D MATCHING]\label{def:3d-matching}
    In 3D MATCHING one is given as input three disjoint sets $X\cup Y \cup Z$ with $|X|= |Y| = |Z| = n$ and a collection $T \subseteq X \times Y \times Z$ of triples. The problem asks to decide whether there exists a subset $S \subseteq T$, with $|S|=n$, such that every element in $X\cup Y \cup Z$ appears in exactly one triple in $S$.
\end{definition}

\begin{theorem}\label{thm:letter-boxed-np-complete}
    \letterboxed is $\NP$-Complete even if all words in $D$ have length $5$.
\end{theorem}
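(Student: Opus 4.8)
The plan is to establish NP-hardness by a polynomial-time reduction from 3D MATCHING; membership in NP is already given, since $4$-SIDES \letterboxed is a special case of $S$-SIDES \letterboxed. Given a 3D MATCHING instance with sets $X, Y, Z$ of size $n$ and triples $T \subseteq X\times Y\times Z$, I would build the Letter Boxed puzzle with alphabet $\Sigma = X \cup Y \cup Z \cup \{h\}$ for a fresh ``hub'' symbol $h$, dictionary $D = \{\, h\,x\,y\,z\,h : (x,y,z) \in T \,\}$ — so that every word has length exactly $5$ — and sides $\Gamma_1$ equal to the multiset containing $h$ with multiplicity $n$, $\Gamma_2 = X$, $\Gamma_3 = Y$, $\Gamma_4 = Z$, with budget $k = n$. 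The crucial design choice is that giving $h$ multiplicity $n$ makes all four side-multisets have the same cardinality $n$, so no separate padding step (as in the proof of \Cref{thm:letter-boxed-np-complete-d1-k1}) is required; this matters because, unlike there, we cannot append filler characters to a word without violating the length-$5$ constraint.

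For the forward direction, given a perfect matching, reindex it as $\{(x_i,y_i,z_i)\}_{i=1}^n$ and play the words $h\,x_i\,y_i\,z_i\,h$ in order. The side of every character is forced ($h$ lies only on side $1$, $X$ only on side $2$, $Y$ only on side $3$, $Z$ only on side $4$), yielding the legal consecutive pattern $1,2,3,4,1$ inside each word; consecutive words chain because every word both begins and ends with $h$, and two $h$'s on the same side are permitted at a word boundary by property 2 of \Cref{def:solution-to-letter-boxed}. Coverage is then a counting check: each element of $X\cup Y\cup Z$ is collected exactly once, and in the concatenation $h$ occurs $2n$ times, of which the $n-1$ occurrences that are ends of non-final words are excluded by condition (iii) of property 4, leaving $n+1 \ge n$ counted occurrences on side $1$. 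Hence the puzzle is solved with $n \le k$ words.

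For the reverse direction, suppose the puzzle is solvable with at most $n$ words. Every word in $D$ has the shape $h\,x\,y\,z\,h$, hence each played word collects exactly one element of $X$; since $|X| = n$, any solution must use exactly $n$ words whose $X$-components are pairwise distinct, i.e.\ cover all of $X$, and likewise for $Y$ and $Z$. Therefore the $n$ triples underlying the played words partition $X\cup Y\cup Z$, and each lies in $T$, so they constitute a perfect 3D matching. Together with NP-membership and the fact that all dictionary words have length $5$, this proves the theorem.

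The step I expect to demand the most care is the bookkeeping around \Cref{def:solution-to-letter-boxed} — specifically verifying property 4 for $h$ with the correct off-by-one in which boundary occurrences count, and arguing that, because the side of every character is forced by the construction, there is no cheaper or structurally different solution than playing $n$ triple-words. The genuine conceptual obstacle, rather than a technical one, is obtaining equal side cardinalities without the freedom to pad words; the multiplicity-$n$ hub is precisely what resolves it.
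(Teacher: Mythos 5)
Your reduction is exactly the paper's: the same 3D MATCHING construction with a hub character of multiplicity $n$ on one side, $\Gamma_2=X$, $\Gamma_3=Y$, $\Gamma_4=Z$, dictionary $\{\#xyz\#\}$, and budget $k=n$, with the same forward/reverse counting arguments. The proposal is correct and matches the paper's proof in all essentials, including the observation that the hub multiplicity makes all sides have cardinality $n$.
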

\begin{proof}
To show $\NP$-Hardness, we now present a reduction from 3D MATCHING (\Cref{def:3d-matching}). Specifically, consider an instance of 3D MATCHING defined by three disjoint sets $X=\{x_1, \dots, x_n\}$, $Y=\{y_1, \dots, y_n\}$ and $Z=\{z_1, \dots, z_n\}$ and a family $T\subseteq X\times Y\times Z$ of triples. We construct an instance of \letterboxed as follows. We take as alphabet $\Sigma=X\cup Y\cup Z\cup\{\#\}$, where $\#$ is a special character that does not belong to $X\cup Y \cup Z$, and as dictionary
$D\subseteq\Sigma^5\subseteq\Sigma^*$ given by $D=\{\# x_iy_jz_l\# \mid (x_i, y_j, z_l)\in T\}$.
We set $k=n$ and take the four multisets to be:
\[
    \Gamma_1 = \underset{n\text{ times}}{\underbrace{\{\#, \dots , \#\}}}, \hspace{5mm} \Gamma_2 = X, \hspace{5mm} \Gamma_3 = Y, \hspace{5mm} \Gamma_4 = Z.
\]
The instance is illustrated in \Cref{fig:letter-boxed-reduction-1}.
\begin{figure}
    \centering
    \includegraphics[scale=0.238]{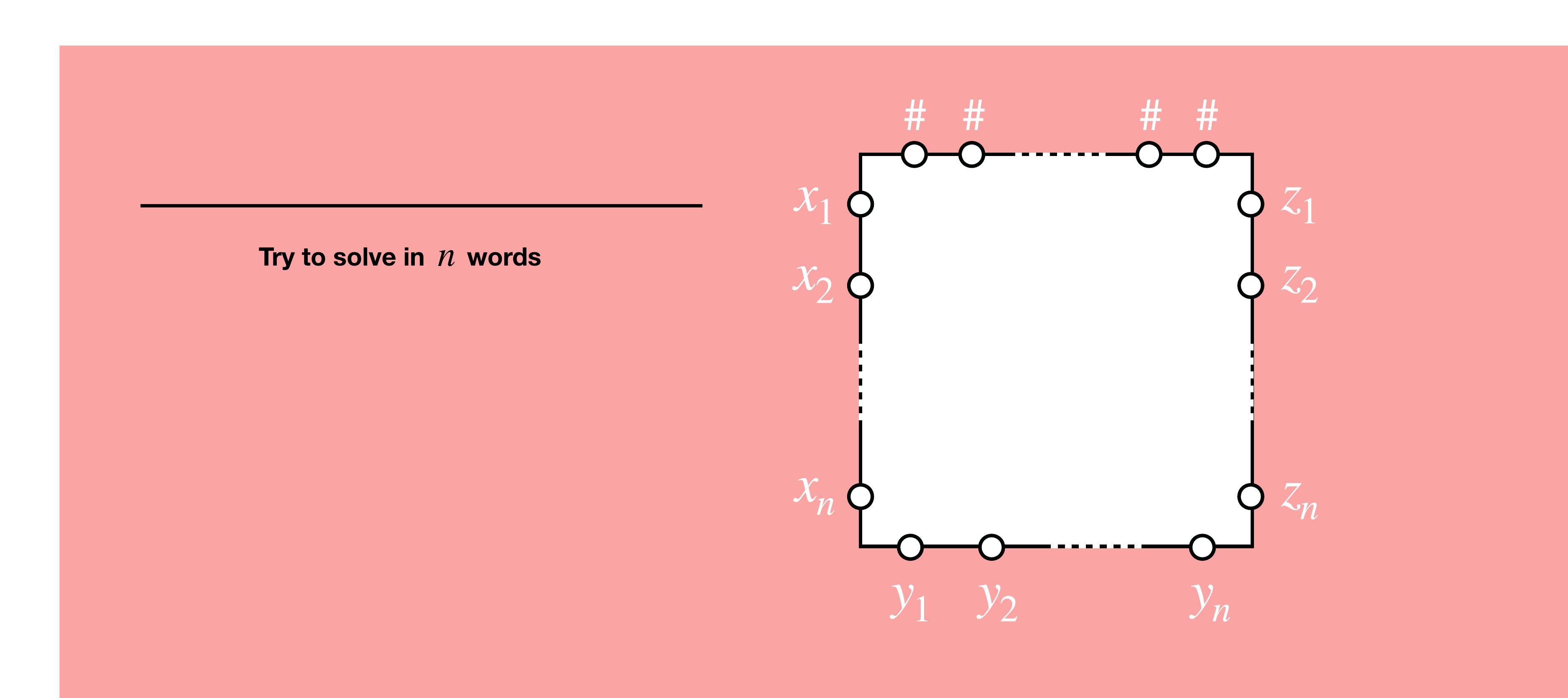}
    \caption{The \letterboxed instance constructed from the 3D MATCHING instance in the proof of \Cref{thm:letter-boxed-np-complete}.}
    \label{fig:letter-boxed-reduction-1}
\end{figure}

We now show that the original 3D MATCHING instance is solvable if and only if the constructed Letter Boxed puzzle is solvable with at most $k$ words. %

Suppose that there exists a subset $S\subseteq T$ such that $|S|=n$ and, for each $v\in X\cup Y\cup Z$, $v$ appears in exactly one triple in $S$. We construct a solution to the Letter Boxed puzzle as follows: order the triples in $S$ arbitrarily, and let $w^{(i
)}$ be the word $\# xyz\#$ where $(x,y,z)$ is the $i^{th}$ triple in $S$ in our arbitrary order. Set $\texttt{words} = (w^{(1)}, \dots , w^{(n)})$ and 
\[
\texttt{sides}= (\underset{1,2,3,4,1 \text{ repeated $n$ times}}{\underbrace{1,2,3,4,1, \dots , 1,2,3,4,1}}),
\]
it is easy to see that $(\texttt{words},\texttt{sides})$ is a valid solution of length $k$ to the Letter Boxed puzzle, since $S$ must have the property that it covers every element in $X \cup Y \cup Z$.

Now, suppose that the Letter Boxed puzzle is solvable with at most $k$ words, as witnessed by a solution $(\texttt{words},\texttt{sides})$, where $\texttt{words}=(w^{(1)}, \dots ,w^{(k)})$. By construction, each $w^{(i)}= \#x^{(i)}y^{(i)}z^{(i)}\#$ for some $(x^{(i)},y^{(i)},z^{(i)}) \in T$. 

Consider the set $S=\{(x^{(i)}, y^{(i)}, z^{(i)}) \mid i=1,\dots, n\}$. Since $\# x^{(i)}y^{(i)}z^{(i)}\#$ is in the dictionary, it follows that $S\subseteq T$. Furthermore, for every $x\in X$ there is one (and only one) $i$ such that $x=x^{(i)}$ for otherwise the Letter Boxed solution would not be able to cover all the letters on every side with only $n$ words. The same applies to $Y$ and $Z$, concluding the proof.
\end{proof}

All our $\NP$-Hardness proofs hold in the standard \letterboxed problem with four sides. We also show that increasing the number of sides can only make the problem harder. 
\begin{theorem}
    For every integer $S \geq 2$, we have:
    \[
        S\textnormal{-SIDES LETTER BOXED} \leq_P (S+1)\textnormal{-SIDES LETTER BOXED},
    \]
    where the starting instance has number of letters on each side $n\geq 2$. Moreover, in the instance with $S+1$ sides, the size of the dictionary $|D|$ increases by at most a constant factor, the maximum length $L$ increases by one, the size of the alphabet $|\Sigma|$ increases by 3, the number of characters on each side increases by one, and the budget of words $k$ increases by $O(S+n)$.  
\end{theorem}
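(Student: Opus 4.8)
The plan is to exhibit an explicit polynomial-time reduction. Given an $S$-sided instance $(\Sigma,D,\Gamma_1,\dots,\Gamma_S)$ with budget $k$ (and $n\ge 2$ letters per side, which will be used to rule out degeneracies), I introduce three fresh symbols $\star,\diamond,\ast\notin\Sigma$. The symbol $\star$ is a \emph{hub}: it is placed once on each of the $S$ original sides (so every $\Gamma_i$ now has $n+1$ letters) and $n-1$ times on the new side $\Gamma_{S+1}$; the symbols $\diamond$ and $\ast$ are placed exactly once, only on $\Gamma_{S+1}$, so $\Gamma_{S+1}=\{\underbrace{\star,\dots,\star}_{n-1},\diamond,\ast\}$ also has $n+1$ letters. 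The dictionary is enlarged by (i) a constant number of length-$\le 2$ ``clean-up'' words over $\{\star,\diamond,\ast\}$ — namely $\star\star,\ \star\diamond,\ \diamond\star,\ \star\ast,\ \ast\star$ — whose only job is to collect occurrences of the fresh letters; and (ii) for every $w=w_1w_2\cdots\in D$ a ``bridge'' word $\beta_w:=\star\,w_1$ of length two. Note that every new word begins with a fresh symbol and that the only original letters occurring inside new words sit at word-final positions. The new budget will be $k':=k+(n+S+1)=k+O(S+n)$. By inspection $|\Sigma|$ grows by $3$, $n$ by $1$, $L$ by at most $1$, and $|D|$ by at most a constant factor, as claimed.

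\emph{From an $S$-sided solution to an $(S{+}1)$-sided one.} Suppose $(\texttt{words},\texttt{sides})$ solves the $S$-sided instance with $\texttt{words}=(w^{(1)},\dots,w^{(t)})$ and $t\le k$. I prepend a \emph{prologue} of exactly $n+S+1$ new words, arranged into one chain: a run of $\star\star$ words whose first letter (which, not being word-final, counts toward coverage) is assigned sides so as to sweep, one at a time, the $n-1$ copies of $\star$ on $\Gamma_{S+1}$ and the $S$ copies of $\star$ on the original sides; then $\star\diamond$ followed by $\diamond\star$ to collect $\diamond$, and $\star\ast$ followed by $\ast\star$ to collect $\ast$; and finally the bridge word $\beta_{w^{(1)}}=\star\,w^{(1)}_1$, which ends in $w^{(1)}_1$ and hands off to $w^{(1)}$. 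Since the fresh letters and side $S+1$ never interact with $\Sigma$ or with the first $S$ sides, the words $w^{(1)},\dots,w^{(t)}$ are replayed with exactly their old side labels; the only checks needed are at the internal seams of the prologue and at the single prologue/original seam, all of which follow immediately from the shapes of the words. The resulting solution has length $t+(n+S+1)\le k'$.

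\emph{From an $(S{+}1)$-sided solution back.} The key is a structural lemma: in \emph{any} $(S{+}1)$-sided solution $(w^{(1)},\dots,w^{(m)})$ the words split as a prefix of new words followed by a suffix of original words. Indeed, an original word contains no fresh symbol, hence ends in an original letter, whereas every new word begins with a fresh symbol; so a new word can never immediately follow an original word, and the new words form a prefix $w^{(1)},\dots,w^{(p)}$. The suffix $w^{(p+1)},\dots,w^{(m)}$ is a legal sequence of original dictionary words whose side labels all lie in $[S]$ (original letters occur only on the first $S$ sides), and it already covers every original letter on every original side: the only original-letter occurrences inside the prefix lie in bridge words, always at a word-final position, and — since a bridge word is never the last word of the solution — such occurrences do not count toward coverage by condition~$4$ of \Cref{def:solution-to-letter-boxed}. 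Hence the suffix is a valid $S$-sided solution. Finally, each new word covers at most one fresh occurrence (its first letter; every other position is word-final or carries a letter absent from its assigned side), while there are exactly $((n-1)+S)+1+1=n+S+1$ fresh occurrences to collect; so $p\ge n+S+1$ and therefore $m-p\le k'-(n+S+1)=k$, giving an $S$-sided solution with at most $k$ words.

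\emph{Main obstacle.} The crux is to make the extra side simultaneously \emph{inert} — it must give the player no shortcut for covering original letters, so that solvability is preserved \emph{exactly} and not merely up to the $O(S+n)$ slack — and \emph{mandatory} — its letters must genuinely force $\Omega(S+n)$ additional words. Freshness of $\star,\diamond,\ast$ (they occur in no original word, and apart from $\star$ as a padding letter only on the new side) is what yields both the inertness and the prefix structure; capping the clean-up words at length two, so that each collects at most one fresh occurrence, is what makes the \emph{forced} overhead equal the \emph{constructed} overhead on the nose, which is precisely what lets the single budget $k'$ work in both directions. The remaining work is routine: verifying \Cref{def:solution-to-letter-boxed} at the gadget seams (in particular choosing the side labels of consecutive clean-up and bridge words consistently, which uses $S\ge 2$), handling degenerate cases ($D=\varnothing$, or $L\le 1$, where $n\ge 2$ and the bound $L\mapsto L+1$ enter), and reading off the stated parameter growth.
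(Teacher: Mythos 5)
Your overall architecture (three fresh symbols, a new side carrying $n+1$ fresh letters, length-two clean-up words, and the prefix/suffix decomposition ``new words first, original words after'') closely parallels the paper's reduction, and your backward direction is sound as far as it goes. The genuine gap is in the forward direction's budget accounting, i.e.\ exactly at the point you flag as the crux (``forced overhead equals constructed overhead on the nose''). In your instance the new side $\Gamma_{S+1}$ carries $n+1$ fresh occurrences ($n-1$ copies of $\star$, plus $\diamond$ and $\ast$), all of your new words have length two, and only a word's \emph{first} character counts toward coverage (its second character is word-final and, by condition~4 of \Cref{def:solution-to-letter-boxed}, does not count unless it is the very last character of the whole solution, which here it never is). Since consecutive characters of a length-two word sit on different sides and word boundaries force equal sides, the first characters of consecutive new words always lie on \emph{distinct} sides; hence no two consecutive new words can both cover an occurrence on side $S+1$, and covering the $n+1$ occurrences there requires at least $2(n+1)-1=2n+1$ new words. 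When $n>S$ this exceeds your claimed overhead $n+S+1$: the prologue you describe (sweeping the $n-1$ copies of $\star$ on $\Gamma_{S+1}$ ``one at a time'' interleaved only with the $S$ original-side $\star$'s) simply does not exist, because there are not enough off-side coverage positions to separate the side-$(S+1)$ visits. Consequently, with budget $k'=k+(n+S+1)$, an $S$-sided instance whose optimum is exactly $k$ maps to an unsolvable $(S+1)$-sided instance whenever $n>S$ --- and the theorem is precisely about arbitrary $n\ge 2$ with $S$ fixed (e.g.\ $S=4$), so this is not a corner case.

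The fix requires charging two new words per additional fresh letter on the new side beyond roughly the first $S+1$, in both directions: the constructed solution must include ``wasted'' bounce-out words, and the matching lower bound on the prefix length must be sharpened accordingly so that the single budget still works both ways. This is exactly what the paper's proof does: it sets $k'=k+S+1+2(n-2)$, builds the extra suffix $e\#,\#\#,\dots,\#\#$ with $S+2(n-2)$ copies of $\#\#$, and argues that in any solution ``the first $S+1$ words can cover one $\#$ character each, and after that two words are required to cover a new $\#$ character.'' Your bridge-word mechanism for splicing into the original solution is a fine alternative to the paper's device of modifying the first and last original words to $s\circ w^{(1)}$ and $w^{(\rho)}\circ e$, but the budget $n+S+1$ and the ``on the nose'' matching argument built on it must be replaced by the two-words-per-extra-letter accounting before the reduction is correct.
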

\begin{proof}
    Let $(\Sigma, D, \Gamma_1, \dots, \Gamma_S), k$ be the instance of $S\textnormal{-SIDES LETTER BOXED}$. We build the instance $(\Sigma', D', \Gamma'_1, \dots, \Gamma_{S+1}'), k'$ for $(S+1)\textnormal{-SIDES LETTER BOXED}$ as follows. We introduce three new characters in the alphabet: $\Sigma'=\Sigma \cup \{s, e, \#\}$ such that $s,e,\#\notin \Sigma$. We define the dictionary $D'$ such that: (i) $D\subseteq D'$, (ii) for each $w\in D$, $s \circ w \in D'$ and $w \circ e \in D'$, and (iii) $\#\# \in D'$ and $e\# \in D'$.  

    Let $n=|\Gamma_1|=\dots =|\Gamma_S| \geq 2$. We build $\Gamma'_i=\Gamma_i \cup \{\#\}$ for $i\in [S]$ and $\Gamma'_{S+1} = \{s, e, \#, \dots, \#\}$ where $\#$ has multiplicity $n-1$ in $\Gamma_{S+1}'$. Finally, we set $k'=k+S+1+ 2(n-2)$.

    Suppose that $w^{(1)}, \dots, w^{(\rho)}$ with $\rho\leq k$ is a solution to the instance with $S$ sides (by following the sides in an appropriate way). Then, 
    \[
        s\circ w^{(1)}, w^{(2)}, \dots, w^{(\rho-1)}, w^{(\rho)}\circ e, e\#, \underset{S+2(n-2)\text{ times}}{\underbrace{\#\#, \dots, \#\#}}
    \]
    solves the instance with $S+1$ sides by starting on side $S+1$, following the same solution for the $S$-side case and ending up again on side $S+1$, and finally covering all the $S+n-1$ characters $\#$ with the remaining words. 

    Consider now any solution for the instance with $S+1$ sides. Observe that any solution must start from side $S+1$ with a word of the form $s\circ w$ and must eventually come back to side $S+1$ with a word of the form $w\circ e$. Moreover, any solution for the instance with $S+1$ sides must cover all the characters $\#$ at the end, starting with the word $e\#$ from the side $S+1$, followed by at least $S+2(n-2)$ words $\#\#$ (since the first $S+1$ words can cover one $\#$ character each, and after that two words are required to cover a new $\#$ character). Removing the last words to cover the $\#$ characters, the solution must have the form $s\circ w^{(1)}, \dots, w^{(\rho-1)}, w^{(\rho)}\circ e$ with $\rho \leq k$ and moreover the sequence of words $w^{(1)}, \dots, w^{(\rho)}$ solves the instance with $S$ sides (by following the sides in the same way as in the $S+1$ instance).  
\end{proof}

\section{Pips}\label{sec:pips}
\vspace{-14mm}
\begin{figure}[H]
    \centering    \includegraphics[width=1\linewidth]{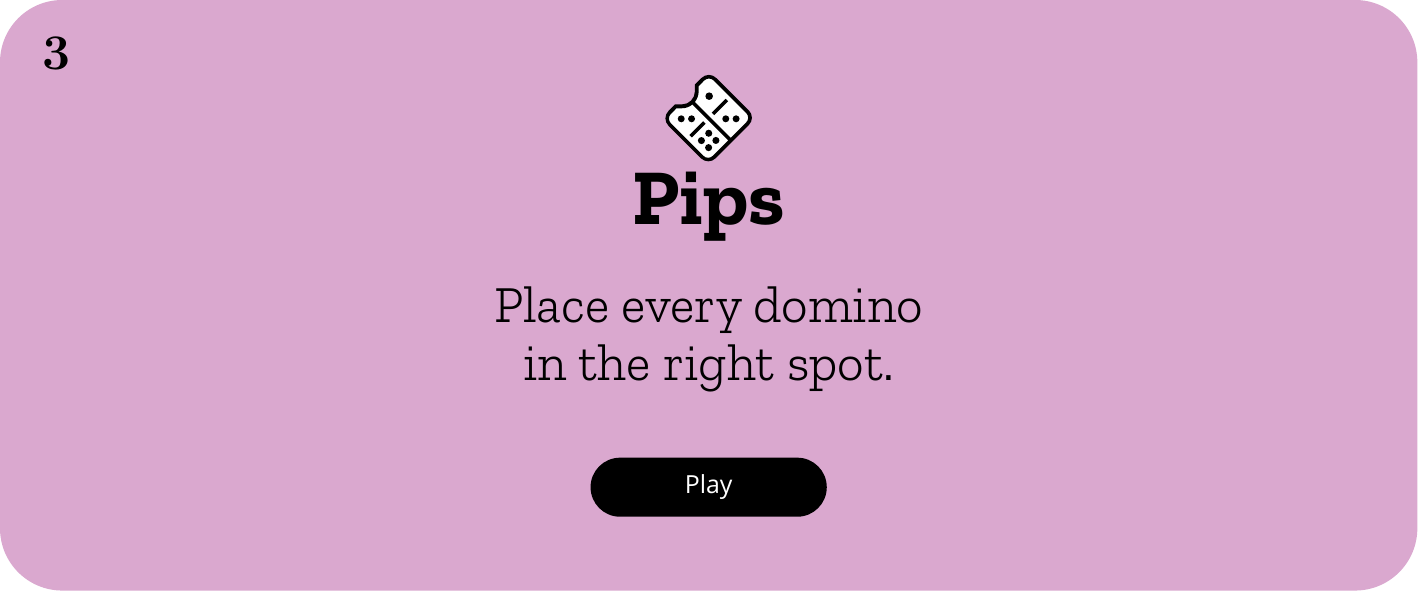}
\end{figure}

In this section, we study the newest of the New York Times games: Pips. This game has a very rich structure, and its expressive set of constraints makes it computationally difficult to find the solution to a Pips puzzle. In particular, in this section, we show that the problem of deciding whether a Pips puzzle has a solution is $\NP$-Complete.

Recall that a Pips puzzle is made up of a board (of arbitrary shape) and some domino tiles. Each tile is divided into two adjacent squares, each of which has a numerical value on it, represented by a certain number of black dots. The board also has constraints in the form of disjoint, connected regions with a constraint indicator signaling that the tile squares placed inside the region have to (i) have the same numerical value as one another (``$=$'' constraints), or have pairwise distinct numerical values (``$\neq$'' constraints), (ii) sum up to a specific numerical value $n$ (``$n$'' constraints), or (iii) sum up to a value that is either greater than or smaller than a specific numerical value $n$ (``$>n$'' and ``$<n$'' constraints). The goal of the game is to place the domino tiles so as to fill the entire board in a way that satisfies all of the constraints specified.%

We consider the following natural problem.

\begin{mdframed}
    \textbf{PIPS}\\
    \textit{Input:} A Pips puzzle, specified by a collection of domino tiles, a board and a set of constraints.\\
    \textit{Question:} Does there exist a valid solution to the instance?
\end{mdframed}

We show that PIPS is $\NP$-Complete and, in particular, we will show a reduction from the POSITIVE PLANAR 1-IN-3-SAT problem, defined below.

\begin{definition}[POSITIVE PLANAR 1-IN-3-SAT]\label{def:pos-planar-1-in-3-sat}
    An instance of POSITIVE PLANAR 1-IN-3-SAT consists of a set of boolean variables and a set of clauses, each containing exactly three distinct positive variables. The objective is to determine a truth assignment to the variables such that each clause contains exactly one true variable. 
    
    The graph associated to the formula is a graph with a vertex for each variable, a vertex for each clause, and an edge between a variable vertex and a clause vertex if the variable appears in the clause. We restrict to instances where such a graph is planar and can be embedded in a \emph{rectilinear} fashion (see \Cref{fig:formulaEmbedding}). That is, with the variables aligned along a straight line and three-legged clauses positioned above and below the variables. The edges between variables and clauses are embedded as straight segments. 
\end{definition}

POSITIVE PLANAR 1-IN-3-SAT is known to be $\NP$-Hard even when restricted to rectilinear instances \cite{mr08}.

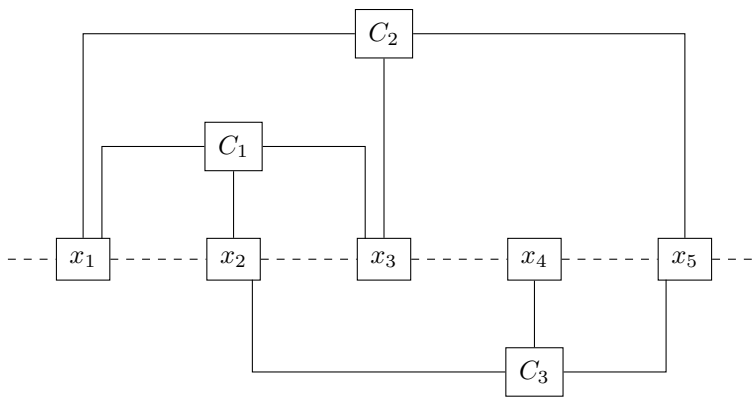
\begin{figure}
\begin{center}
\begin{tikzpicture}[ scale=1,
  node/.style={rectangle, draw, inner sep=5pt, font=\sffamily\bfseries},
  solid edge/.style={-},               
  dashed edge/.style={-, dashed}
]

\node[node] (1) at (0,0) {$x_1$};
\node[node] (2) at (2,0) {$x_2$};
\node[node] (3) at (4,0) {$x_3$};
\node[node] (4) at (6,0) {$x_4$};
\node[node] (5) at (8,0) {$x_5$};

\node[node] (c1) at (2,1.5) {$C_1$};
\draw[solid edge] (0.25,0.272) -- (0.25,1.5) -- (c1);
\draw[solid edge] (2) -- (c1);
\draw[solid edge] (3.75, 0.272) -- (3.75,1.5) -- (c1);

\node[node] (c2) at (4,3) {$C_2$};
\draw[solid edge] (1) -- (0,3) -- (c2);
\draw[solid edge] (3) -- (c2);
\draw[solid edge] (5) -- (8,3) -- (c2);

\node[node] (c3) at (6,-1.5) {$C_3$};
\draw[solid edge] (2.25,-0.272) -- (2.25,-1.5) -- (c3);
\draw[solid edge] (4) -- (c3);
\draw[solid edge] (7.75, -0.272) -- (7.75,-1.5) -- (c3);

\draw[dashed edge] (-1,0) -- (1);
\draw[dashed edge] (1) -- (2);
\draw[dashed edge] (2) -- (3);
\draw[dashed edge] (3) -- (4);
\draw[dashed edge] (4) -- (5);
\draw[dashed edge] (5) -- (9,0);
\end{tikzpicture}
\caption{Rectilinear embedding of the graph associated with the boolean 1-in-3-SAT formula $(x_1 , x_2, x_3) \land (x_1, x_3, x_5) \land ( x_2,  x_4, x_5)$.}
\label{fig:formulaEmbedding}
\end{center}
\end{figure}

\begin{theorem}[\Cref{thm:pips-overview} paraphrased]
    \textnormal{PIPS} is $\NP$-Complete, even if there are only two types of domino tiles (one containing two zeroes, and one containing two ones) and only constraints of type ``='' and ``n''.
\end{theorem}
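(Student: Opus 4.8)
Membership in $\NP$ is immediate: a certificate is a placement of the dominoes, and checking that it tessellates the board and satisfies every constraint takes polynomial time. The real work is the hardness reduction, which I would carry out from POSITIVE PLANAR 1-IN-3-SAT restricted to rectilinear instances ($\NP$-Hard by~\cite{mr08}), building a board whose outline follows the rectilinear embedding of the input formula. Throughout, the construction will only ever need $(0,0)$- and $(1,1)$-tiles and only ``$=$'' and ``$n$'' constraints, matching the claimed restriction.

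The building blocks are four gadgets. First, a \emph{forcing primitive}: a \emph{pendant} cell (one with a single neighbor on the board) must be covered by the unique domino touching it, so pendants pin down chosen dominoes; moreover an ``$n$''-constraint with $n=0$ forces every cell of a region to $0$, and with $n$ equal to the region's size it forces every cell to $1$. Second, a \emph{wire}: a width-one corridor partitioned into two-cell ``$=$''-regions and terminated with a pendant so that the induced domino tiling is the ``shifted'' one, which forces a domino to cross each end of the corridor; since a domino straddling an ``$=$''-region boundary is monochromatic, the whole corridor and the cells it spills into carry a single bit, and corridors may turn corners freely. Third, a \emph{variable}, which is a wire whose free end abuts an unconstrained pendant (so its bit is unconstrained), together with a small ``$=$''-region that fans this bit out to one wire per occurrence of the variable. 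Fourth, a \emph{clause gadget} for a clause on $a,b,c$: a ``plus''-shaped region carrying an ``$n=1$'' constraint, whose three arm cells pick up (via forced straddling dominoes) the bits of the three incident wires, and whose center cell, its three arms being taken by the wires, is forced to pair with an extra pendant lying in an ``$n=0$''-region and hence pinned to $0$; the constraint then says precisely that exactly one of the three incident wires carries a $1$, i.e.\ the $1$-in-$3$ condition. Composing these along the rectilinear embedding turns satisfying assignments into valid tilings and back.

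One global point needs care: a PIPS instance prescribes the exact multiset of tiles, whereas a ``true'' wire consumes $(1,1)$-tiles where a ``false'' wire consumes $(0,0)$-tiles, so a naive construction would make the required counts depend on the assignment. I would neutralize this with a \emph{balancing} gadget: run a parallel \emph{shadow} wire alongside each variable wire and tie their ends with a two-cell ``$n=1$''-region, forcing the shadow to carry the complementary bit; then each variable contributes a fixed number of $(0,0)$- and $(1,1)$-tiles regardless of its value, the clause gadgets contribute fixed numbers as well, and so the total multiset depends only on the formula and can be emitted as part of the instance. Correctness is then a matter of checking the two directions: a $1$-in-$3$ assignment yields a legal tiling by orienting every wire accordingly, and conversely a legal tiling reads off consistent bits that must satisfy every clause region.

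The main obstacle, as usual for planar tiling reductions, is geometric: one must realize wires, corners, fan-outs, clause gadgets and shadow wires as pairwise-disjoint, individually connected regions laid out consistently with the embedding, and --- most delicately --- prove that in \emph{every} legal tiling the intended dominoes are genuinely forced, so that each ``$=$''-wire really propagates a single bit and each clause region really reads its three wires instead of being short-circuited by a stray internal domino. Pinning down the exact adjacency and parity structure of each small gadget so that ``the only local tiling is the intended one'' is where the effort goes; the tile-multiset bookkeeping is a second, more mechanical, loose end to tie off.
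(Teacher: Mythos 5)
Your proposal is correct in outline and follows the same high-level route as the paper: both reduce from rectilinear POSITIVE PLANAR 1-IN-3-SAT, both propagate a variable's truth value as a $0/1$ value through ``=''-constrained regions using only $(0,0)$- and $(1,1)$-dominoes, and both check each clause with a single ``$n=1$'' sum region that reads the three incoming values. The differences are in the gadget engineering. The paper's variable gadget is a comb (a base with one branch per occurrence) covered by \emph{one} large ``='' region; because the comb has a unique domino tiling, the branch tips --- which are excluded from the ``='' region so that they can sit inside the clause's ``$n=1$'' region --- are automatically forced to the variable's value, so no pendant-forcing machinery or chained two-cell ``='' regions are needed, and the clause gadget is just a horizontal strip containing the three tips with an ``$n=1$'' constraint over the whole strip. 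Your wire/plus-gadget design (pendant forces the centre--pendant $(0,0)$ domino, the arms then must straddle into the wires, and the forcing propagates back along the width-one corridors) can be made to work, but the deferred ``only local tiling is the intended one'' arguments are exactly what the paper's comb-uniqueness observation buys for free.

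The most substantive divergence is the tile-multiset bookkeeping. The paper resolves it very cheaply: it supplies enough zero-tiles to cover all clause and variable gadgets, enough one-tiles to cover all variable gadgets, and adds unconstrained \emph{cleanup} regions whose area makes the board exactly twice the number of tiles, so whichever tiles a given assignment leaves over are simply dumped there. Your shadow-wire balancing is a valid alternative that fixes the multiset independently of the assignment, but as stated it is under-specified: a shadow running ``alongside each variable wire'' does not balance the occurrence wires and fan-out, whose consumption also depends on the assignment, so the shadow must match the \emph{total} wire length of the variable's tree (and be routed planarly), at which point the cleanup trick is strictly simpler. Neither issue is fatal --- your plan goes through with these details filled in --- but be aware that the forcing lemmas and the balancing are precisely where the remaining work lies, and the paper's gadget choices are designed to make both nearly trivial.
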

\begin{proof}
    Given an assignment of the domino tiles to the positions on the board, one can first ensure this assignment tiles the board exactly, then simply iterate over all the constraints and check they are satisfied. This can be done in time polynomial in the size of the instance. Hence PIPS $\in \NP$.

    We now show that PIPS is $\NP$-Hard by reducing from POSITIVE PLANAR 1-IN-3-SAT (\Cref{def:pos-planar-1-in-3-sat}). We give a reduction where all the domino tiles are of two types: either they have a value of $1$ in both squares (one-tiles) or they have a value of $0$ in both squares (zero-tiles), and the instance we construct will only make use of two types of constraints: ``='' constraints and ``$n$'' constraints, implying that even this restricted version of the game is $\NP$-Hard.
    
    The construction is based on gadgets. We introduce three types of gadgets: variable gadgets (one for each variable in the POSITIVE PLANAR 1-IN-3-SAT instance), clause gadgets (one for each clause in the POSITIVE PLANAR 1-IN-3-SAT instance) and cleanup gadgets. 
    
    In the Pips puzzle we construct, there are exactly enough zero-tiles to cover all the clause and variable gadgets, and enough one-tiles to cover all the variable gadgets. The size of the cleanup gadgets is chosen so that the area of the board, in squares, equals twice the total number of tiles. We now describe the three different types of gadgets in detail.

    \paragraph{Variable gadgets.} Variable gadgets (\Cref{fig:pips-variable-gadget}) are regions of the board made up of a straight horizontal line (the base) and vertical lines (branches) that leave the base and connect to clause gadgets. Each variable gadget is connected to all the clause gadgets corresponding to clauses that the variable belongs to. The length of the base is chosen to allow the number of branches to equal the number of clauses that the variable appears in. The length of the branches is constructed to be always even, so that no domino piece can overlap with the gadget without being fully contained in it. This is done simply by placing the clause gadgets at an even vertical distance from the variable gadgets.

    \begin{figure}[H]
        \centering
        \includegraphics[width=1\linewidth]{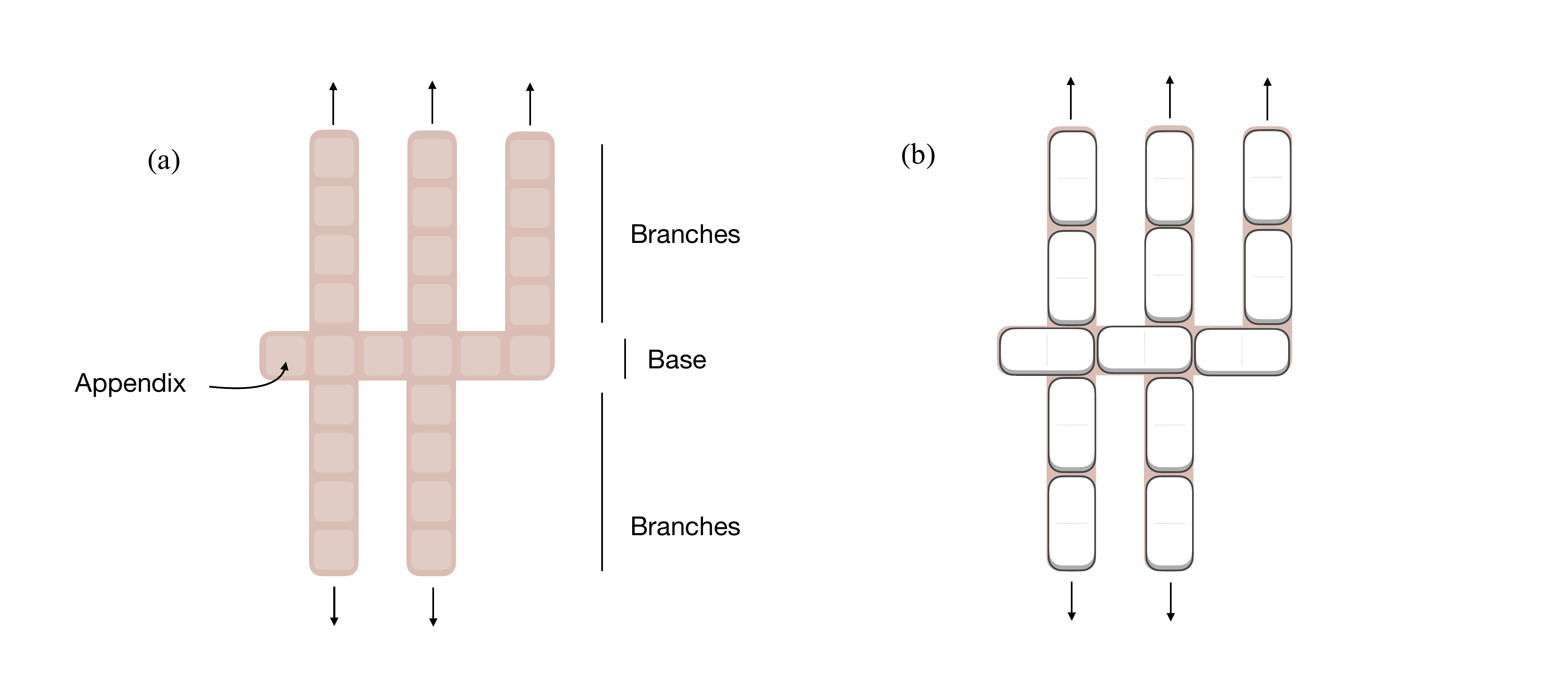}
        \caption{A variable gadget connected to three clause gadgets above and two below. The branches continue towards the corresponding clause gadgets. On the left, the structure of the gadget. On the right, we illustrate the only possible way to tessellate the variable gadget with $2\times 1$ domino tiles.}
        \label{fig:pips-variable-gadget}
    \end{figure}

    Each variable gadget also has a constraint (\Cref{fig:pips-variable-gadget-constraint}) on it, imposing that all the squares in the gadget, with the exception of the last square on every branch, must be occupied by domino pieces of the same value. Since there is a unique way to tessellate the variable gadgets, and this involves having a piece which overlaps with the constraint in only one square, all the squares in the gadget (including the tips of the branches) are forced to be covered by the same number. Note that we exclude the last square on every branch from the constraint to allow for it to be part of the constraint in the clause gadget (see below). 

    \begin{figure}[H]
        \centering
        \includegraphics[scale=0.4]{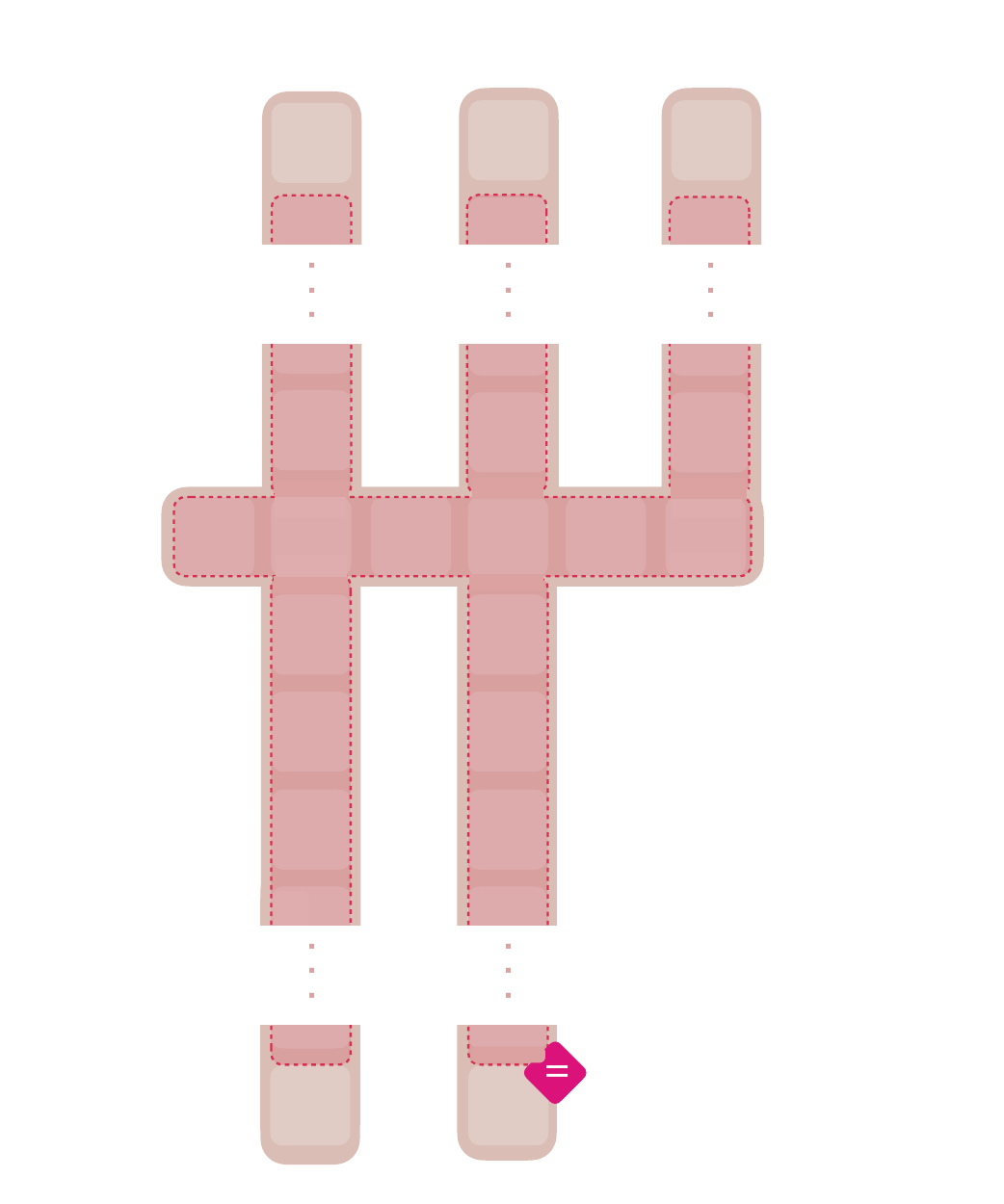}
        \caption{The constraint on a variable gadget.}
        \label{fig:pips-variable-gadget-constraint}
    \end{figure}

    \paragraph{Clause Gadgets.} Each clause gadget (\Cref{fig:clause gadget}) is a horizontal region of the board that contains the ends of three branches belonging to three different variable gadgets: the ones corresponding to the variables taking part in the clause. These branches all end at the same \emph{latitude} (the same y coordinate) and are connected by a horizontal line: the body of the clause. In order to ensure that the body of the clause can be tiled by $2\times 1$ domino tiles, we make the following small adjustment: if two consecutive branches entering a clause gadget are at odd horizontal distance from each other, the horizontal line connecting them contains a vertical domino piece.

    Clause gadgets also each have a constraint on them. This requires that the values covering the entire gadget, including the ends of the variable gadget branches, have a total sum of 1. Note that by construction there is only one way to tile the squares in the clause gadgets that do not correspond to ends of variable gadgets.

    \begin{figure}[H]
        \centering
        \includegraphics[width=\linewidth]{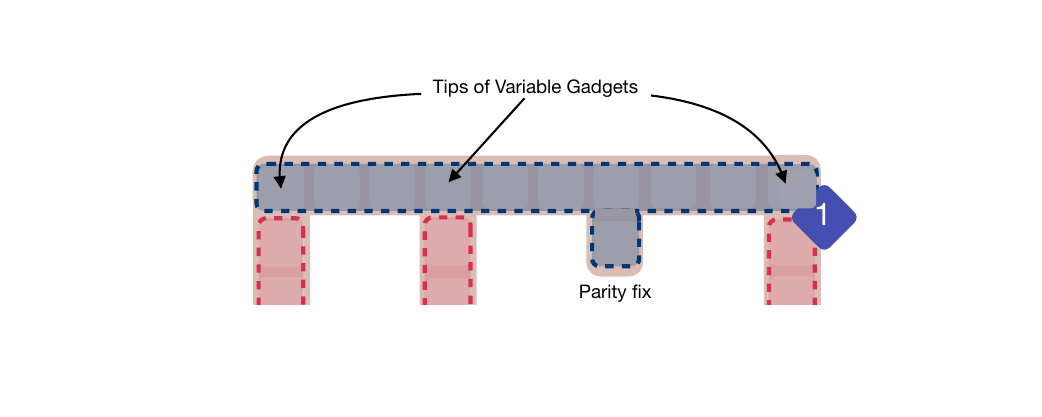}
        \caption{A clause gadget for a clause above the variables. Clause gadgets for clauses below the variables are mirrored vertically.}
        \label{fig:clause gadget}
    \end{figure}
    
    \paragraph{Cleanup Gadgets.} The board then contains cleanup gadgets: regions of the board made up of straight lines with no constraints that serve the purpose of collecting the tiles that were not utilized to cover the other gadgets.

    We remark the following: by construction all the possible valid solutions to the constructed instance are the same up to relabeling the numbers on the domino tiles. In particular, there is only one way to place $1\times 2$ dominoes to tile all the elements of a variable gadget, and in turn this implies that the ends of the variable gadgets inside the clause gadgets cannot be covered by a tile that overlaps with the rest of the clause gadget. 

    This immediately implies that the body of each clause gadget must be covered entirely by tiles with value zero (by the constraint on the clause gadget), and each variable gadget is covered either entirely by tiles with value $1$ or entirely by tiles of value $0$ (by the constraint on the variable gadgets). These correspond to choosing a truth assignment of the variables in the original POSITIVE PLANAR 1-IN-3-SAT instance.

    Moreover, the constraints in the clause gadget impose that the puzzle is solved if and only if, for each clause, exactly one of the variable gadgets is covered by tiles of value $1$. 

    Hence the puzzle is solvable if and only if the original POSITIVE PLANAR 1-IN-3-SAT instance was solvable.

    Since the construction is polynomial in size and can be carried out in polynomial time this yields the wanted result.
\end{proof}

In the construction above, the board might consist of disconnected regions. We note that, while this is technically allowed by the game, it is not necessary for our reduction. One can easily have the reduction produce an instance consisting of a single connected region by adding extra horizontal tiles (``connection gadgets'') connecting the variable gadgets and the cleanup gadget, and put constraints on these regions so that each $2\times 1$ section must sum up to 0 and add just enough extra zero-zero tiles to the instance to cover the connection gadgets. It is easy to see that the previous instance is solvable if and only if this new connected instance is solvable.

In the New York Times version, the only numbers that can appear on Pips tiles are numbers from $0$ to $6$. We note that if one can assign arbitrary non-negative integer values to the domino tiles, then Pips becomes (weakly) $\NP$-Hard even restricted to instances with a single ``$n$'' constraint.

\begin{figure}
    \centering
    \includegraphics[width=\linewidth]{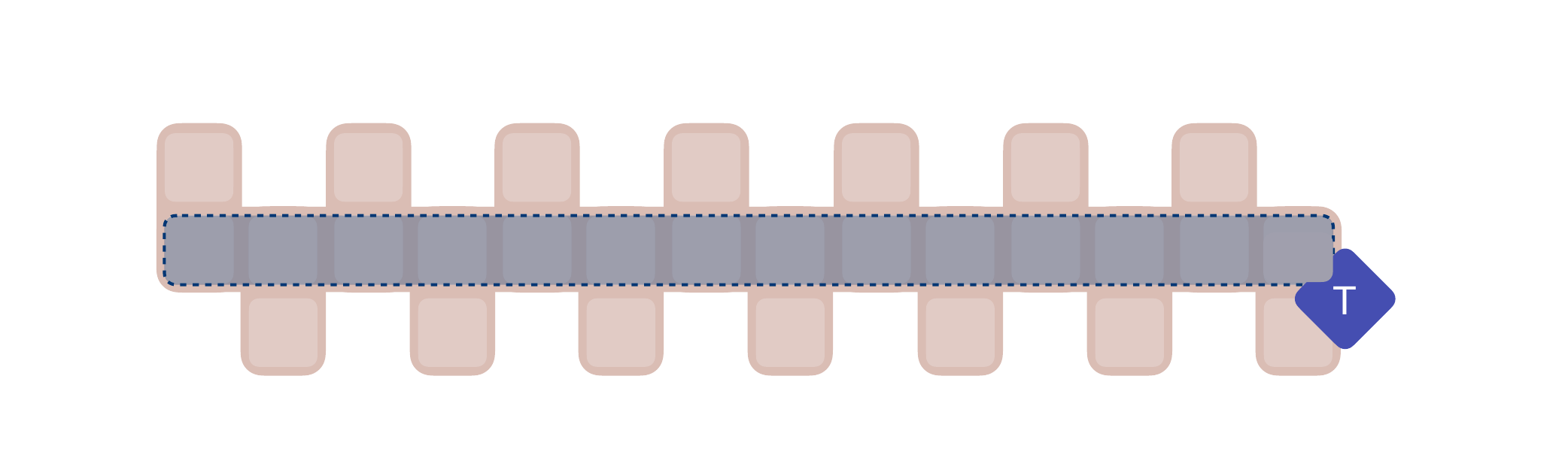}
    \includegraphics[clip, trim=0cm 1cm 0cm 1cm, width=0.5\linewidth]{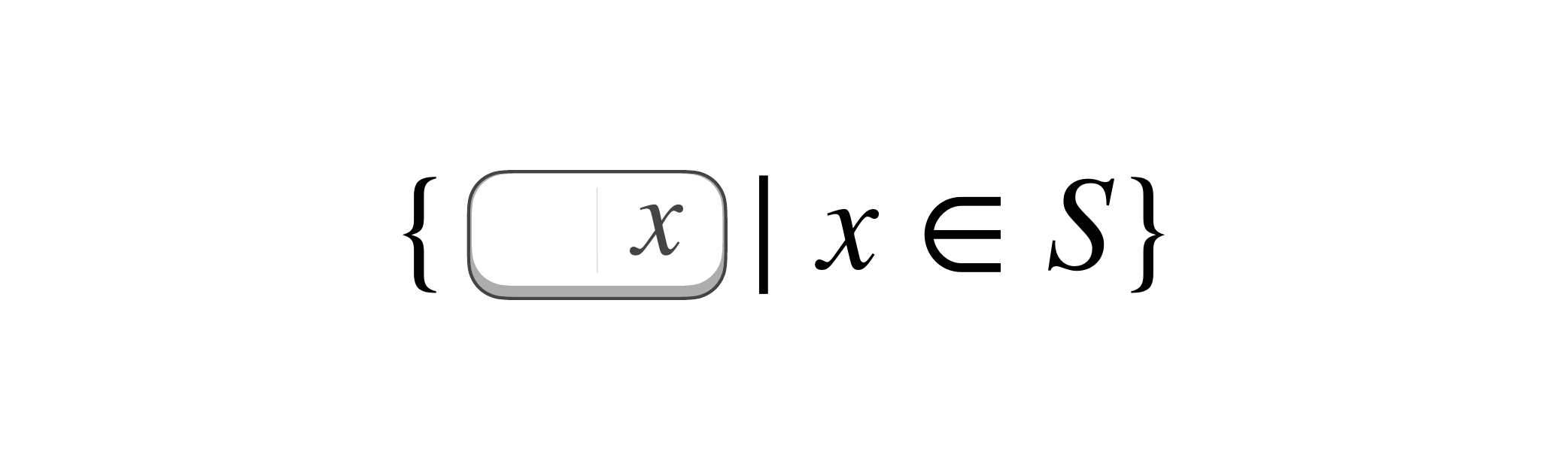}
    \caption{The instance of PIPS constructed in the proof of \Cref{thm:pips-subset-sum}. Here, $T$ is the target sum of the SUBSET SUM instance we are reducing from, and $S$ is the (multi)set of elements of the SUBSET SUM instance.}
    \label{fig:pips-weakly-np-hard}
\end{figure}

\begin{theorem}\label{thm:pips-subset-sum}
    PIPS, with arbitrary non-negative integer numbers on tiles, and a single ``$n$'' constraint is weakly $\NP$-Hard.
\end{theorem}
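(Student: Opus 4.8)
The plan is to reduce from SUBSET SUM, which is weakly $\NP$-Hard when the integers are written in binary. Recall that an instance of SUBSET SUM is a multiset $S=\{s_1,\dots,s_m\}$ of non-negative integers together with a target $T\in\mathbb{N}_0$, and one must decide whether some sub-multiset of $S$ sums to $T$.

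Given $(S,T)$, I would build the PIPS puzzle sketched in \Cref{fig:pips-weakly-np-hard}. The board is a single horizontal strip of $4m$ cells. The domino tiles are: one \emph{element tile} $e_i$ for each $i\in[m]$, carrying the value $s_i$ on both of its faces, together with $m$ \emph{filler tiles}, each carrying $0$ on both faces --- so there are $2m$ tiles in total and the board has $4m = 2\cdot 2m$ squares, as a legal PIPS instance requires. The only constraint is an ``$n$''-constraint with $n=2T$, placed on the region consisting of the $2m$ leftmost cells of the strip, which is connected. The whole construction clearly has size polynomial in $m$ and in the bit-lengths of $T$ and the $s_i$.

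The crucial observation is that a height-one strip of even length has a \emph{unique} tessellation by $2\times1$ dominoes: every domino is horizontal and the $j$-th one occupies cells $2j-1,2j$. Thus the only freedom left in a solution is the assignment of the $2m$ tiles to the $2m$ domino slots, and in particular which $m$ tiles fall into the $m$ slots lying inside the constrained region. If $\sum_{i\in I}s_i=T$ for some $I\subseteq[m]$, place $e_i$ in a constrained slot for each $i\in I$, fill the remaining $m-|I|$ constrained slots with filler tiles, and put the other $m$ tiles in the unconstrained half; the constrained region then sums to $\sum_{i\in I}2s_i=2T$, so all constraints hold. Conversely, from any valid solution let $I$ be the set of indices $i$ for which $e_i$ sits in a constrained slot (well defined, since each tile occupies exactly one slot); filler tiles contribute $0$, so the ``$n$''-constraint forces $2\sum_{i\in I}s_i=2T$, i.e.\ $\{s_i:i\in I\}$ is a sub-multiset of $S$ summing to $T$. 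Hence the puzzle is solvable if and only if $(S,T)$ is a yes-instance, and the reduction being polynomial yields weak $\NP$-Hardness (and, together with the membership argument already given for PIPS, weak $\NP$-Completeness).

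The single point that needs care is engineering away all geometric slack so that the puzzle collapses to a pure selection problem: taking the board to be a $1\times 4m$ strip (rather than a wider region) forces the tiling to be unique, and doubling the target to $n=2T$ compensates for the fact that a domino placed inside the region always contributes both of its equal faces. I would also note that, in contrast with the main $\NP$-Hardness reduction for PIPS, here a single ``$n$''-constraint on a connected board suffices; the price is the use of arbitrarily large values on the tiles, which is exactly what makes this a weak rather than strong hardness result, consistent with SUBSET SUM itself.
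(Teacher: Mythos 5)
Your reduction is correct and shares its skeleton with the paper's proof: both reduce from SUBSET SUM and use a single sum constraint, with the unboundedly large tile values being exactly what confines the result to weak hardness. The gadget, however, is genuinely different. The paper gives each element $x\in S$ a tile carrying $x$ on one face and $0$ on the other, and shapes the board (the one actually drawn in \Cref{fig:pips-weakly-np-hard} --- not the strip you describe, so your citation of that figure does not match your construction) so that every such tile necessarily has exactly one face inside the constrained horizontal block of length $|S|$; the subset is then selected by the \emph{orientation} of each tile, and the constraint value is $T$ itself. You instead flatten the geometry to a $1\times 4m$ strip, whose domino tiling is unique, and select the subset by \emph{which} tiles occupy the constrained half; to make that work you put $s_i$ on both faces of the element tiles, add $m$ all-zero filler tiles, and double the target to $2T$. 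Your argument is sound: since the constrained region is the leftmost $2m$ cells and $2m$ is even, no domino of the unique tiling straddles its boundary, so each tile contributes either both (equal) faces or nothing, and the counting of filler tiles works for every subset size. Your version buys an essentially trivial geometric analysis (no need to argue that each tile straddles the constraint region) at the price of doubling the number of tiles and the target; the paper's version uses fewer tiles with the original values but relies on a more elaborate board shape to force the in/out choice through tile orientation.
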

\begin{proof}
    We show this by giving a reduction from SUBSET SUM. Recall the SUBSET SUM problem:
    \begin{mdframed}
    {SUBSET SUM (with positive integers)}\\
    \textit{Input:} A multiset $S$ of positive integers and a target sum $T \in  \mathbb{N}$. \\
    \textit{Question:} Is there a sub(multi) set $S'\subseteq S$ such that:
    \[
        \sum_{x\in S'}x = T?
    \]
\end{mdframed}
    This problem is well known to be weakly $\NP$-Hard~\cite{k09}.

    Given an instance of SUBSET SUM defined by a multiset $S$ and a target sum $T$, we construct an instance of PIPS which includes, for each $x \in S$ a tile that has value $x$ on one square, and $0$ on the other. The board is then constructed as in \Cref{fig:pips-weakly-np-hard}.

    The PIPS instance contains a single ``$T$'' constraint that spans the horizontal block of length $n$, where $n$ is the cardinality of $S$ (with multiplicity), and that therefore enforces the sum of the dominoes to be $T$.

    Every tessellation of the board must then choose, for each $x\in S$ whether to include $x$ in the horizontal block, or whether to include $0$ instead. In the end, the solution is valid if and only if the sum of all the values included in the central horizontal line equals $T$. Hence, the PIPS instance constructed has a solution if and only if the SUBSET SUM instance is solvable. Since the reduction is clearly polynomial in size and can be carried out in polynomial time, the result follows.
\end{proof}

\section{Strands}\label{sec:strands}
\vspace{-14mm}
\begin{figure}[H]
    \centering
    \includegraphics[width=1\linewidth]{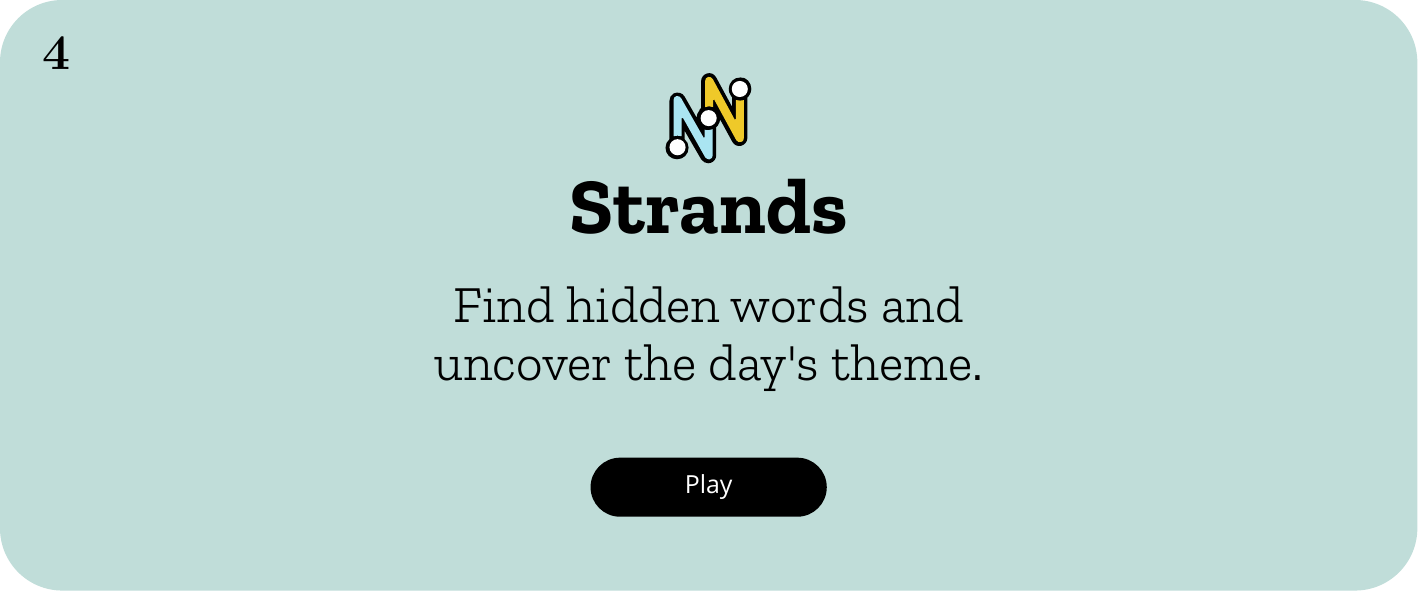}
\end{figure}

In this section we present our results on Strands. Recall that, in this game, one is given a grid of characters and the goal is to cover the grid with disjoint valid words (in the New York Times version, the player receives hints from the game in an interactive way). We will focus on the problem of finding a valid partition of the grid into words from the dictionary (according to the adjacency rules of the game) and hence deciding whether a puzzle is solvable. We consider the simplest instantiation of this problem, in which there is no interactivity between the player and the game, and the player simply needs to decide, given the dictionary and the grid, whether at least one possible solution exists.

We define an instance of Strands as follows.

\begin{definition}[Strands Instance]
    A \emph{Strands instance} is a tuple $(\Sigma,D,M)$, where $\Sigma$ denotes a finite alphabet, $D \subseteq \Sigma^*$ is a finite set of strings over $\Sigma$, referred to as the \emph{dictionary}, and $M \in \Sigma^{n\times m}$ is a grid of characters from $\Sigma$ with $n$ rows and $m$ columns.
\end{definition}

We consider the following decision problem. Given an instance $(\Sigma, D, M)$, decide whether $M$ can be partitioned into a sequence of words $w^{(1)},\dots, w^{(k)}$ such that $w^{(i)}\in D$ for each $i$. Here, each word is required to appear as a sequence of adjacent characters according to the Strands adjacency rule: when reading off a word, the next character can appear in any of the eight cells neighboring the current character (in particular, diagonal movements are allowed). Note that a contiguous sequence of characters on the grid $M$ can be read in any direction, and that each word can only use each entry of the grid $M$ at most once.

Our main result in this section is that this problem is $\NP$-Complete even with a constant-size dictionary of constant-length words. 
\paragraph{On Spangrams.} Recall that, in the New York Times' version of the game, every puzzle contains a \emph{spangram}: a word that connects two opposite sides of the grid. In general, requiring the game to have a spangram, rules out the possibility of having the maximum word length $L$ bounded by a constant, since as the grid grows, the minimum length of a spangram would grow with it. So in order to prove our lower bound in the stronger setting of $L=O(1)$, we shall not impose that instances contain a spangram. We note, however, that adding a spangram, e.g.\ at the bottom of the instance, does not make the problem easier, and hence our hardness results will transfer to the set of instances that do contain a spangram via a straightforward reduction.
\paragraph{A simple reduction from Flow Free.} The Strands problem, as defined above, shares some similarities with the \emph{Flow Free} (or \emph{Zig-zag Numberlink}) game, which was shown to be $\NP$-Hard \cite{addorvs15}. A Flow Free instance consists of $k$ terminal pairs---each pair represented by the same color---placed in an $n \times m$ grid. A solution to the instance is defined as a set of $k$ disjoint non-diagonal paths, each connecting the two terminals of a pair and collectively covering all vertices of the grid. (We refer the reader to \cite{addorvs15} for more details on the game)
One can prove NP-Hardness of Strands by giving a polynomial-time reduction from Flow Free: given a Flow Free instance with colors $C_1,\dots,C_k$ and grid size $n \times m$, consider a $n \times m$ Strands grid with $\Sigma = \{C_1,\dots , C_k, B, W\}$, where each cell containing a terminal $C_i$ is filled with the letter $C_i$, while the remaining cells are filled with black $B$ and white $W$, arranged in a checkerboard pattern. Define the dictionary $D$ as $D = \bigcup_{i=1}^k D_{C_i}$ where each $D_{C_i}$ corresponds to the possible paths connecting pair of terminals of color $C_i$ and is given by
\[
D_{C_i} =
\begin{cases}
\{ C_i \,(B\,W)^l\, B\, C_i \mid 0 \leq l \leq m \cdot n \} & \text{if both terminals lie on white cells};\\[6pt]
\{ C_i\, (W\,B)^l\, W\, C_i \mid 0 \leq l \leq m \cdot n \} & \text{if both terminals lie on black cells};\\[6pt]
\{ C_i\, (B\,W)^l\, C_i \mid 0 \leq l \leq m \cdot n \} & \text{if the terminals lie on cells of different colors}.
\end{cases}
\]
It is then not hard to check (but also not all that important, given the stronger results below) that this reduction is correct.

However, observe this reduction from Flow Free requires a polynomially large alphabet, dictionary, and word lengths, and hence one needs a different reduction to prove the stronger statement we aim for. 

\smallskip

We now prove the main theorem of this section.

\StrandsNPComplete*
\begin{proof}
    We start by showing that the problem is in $\NP$. A certificate consists of two matrices. The first is $V_1\in\{S,E,C\}^{n\times m}$ and the second is $V_2\in \{u,r,l,d,ur,ul,dr,dl\}^{n\times m}$. Intuitively, $V_1$ indicates for each cell whether it is the start, end, or continuation of a word. Instead, $V_2$ indicates the direction to move from the current cell to the next one to build a word. Given these two matrices, one can verify in polynomial time whether they encode a partition of $M$, and for each word in that partition, ordered according to the directions given by $V_2$, one can check whether the word is in $D$. Thus, the problem belongs to $\NP$.

    To establish that \emph{Strands} is $\NP$-hard, we provide a reduction from POSITIVE PLANAR 1-IN-3-SAT with the restrictions described above (see \Cref{def:pos-planar-1-in-3-sat}). At a high level, given a boolean formula, our reduction substitutes each variable vertex in the corresponding rectilinear embedding of the graph with a vertex gadget, each clause vertex with a clause gadget, and each edge with an edge gadget (see \Cref{fig:strands-example-reduction} for an example). %

    More precisely, the Strands instance of our reduction has alphabet and dictionary equal to:
    \[
        \Sigma = \{A,B,C,\,*,\,\#,E,F,\sqcup\},
    \]
    \[
        D = \{\sqcup,\,A,\,A*,\,*\,\#BE,\,\#B*,\,EF,\,, FCC\}.
    \]
    We now specify how to construct the grid $M$ of the Strands instance. It will be useful to separate the various gadgets with ``blank'' cells. Each such cell is represented by the character $\sqcup \in \Sigma$, which is also the only valid word in $D$ containing this character. Consequently, the ``blank'' cells can always be collected in a unique way, and the character $\sqcup$ will not be displayed for simplicity.

\begin{figure}
    \begin{center}
        \includegraphics[width=\textwidth]{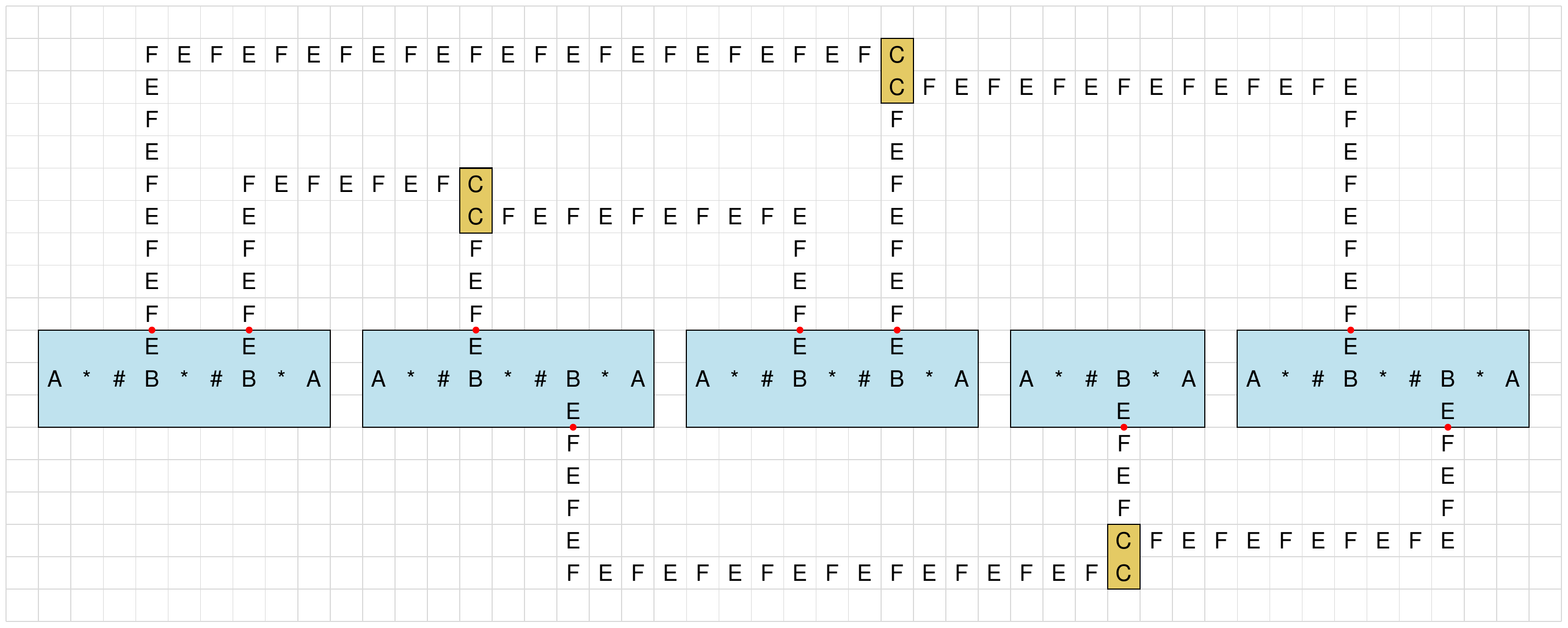}
        \caption{Strands instance corresponding to the POSITIVE PLANAR 1-IN-3-SAT instance given by the formula
        $(x_1, x_2, x_3) \land (x_1, x_3, x_5) \land ( x_2, x_4 , x_5)$. Variable gadgets are highlighted in blue, while clause gadgets are highlighted in yellow.}
        \label{fig:strands-example-reduction}
    \end{center}
\end{figure}

    \paragraph{Vertex gadget.} %
    Each variable $x_i$ is depicted as a $3\times (3k_i+3)$ rectangle, where $k_i$ denotes the number of clauses in which $x_i$ appears, as displayed in \Cref{fig:strands-variable-gadget}. 
    
    \begin{figure}[h!]
    \centering
    \includegraphics[width=0.6\textwidth]{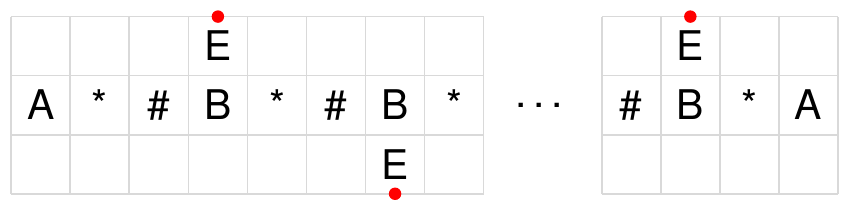}    
    \caption{Variable gadget associated to a variable vertex. Each $E$ character is the starting point of an edge (we highlight this with a red dot).}
    \label{fig:strands-variable-gadget}
    \end{figure}

    Specifically, the first and second columns of the gadget are identical for every variable, followed by a central $3\times 3$ module, which is repeated as many times as the number of clauses in which the corresponding variable appears. In the $k^{th}$ module, the character $E$ represents the starting point of the $k^{th}$ edge connecting the variable to a clause (the order is determined by scanning the edges of the rectangular embedding along the line of variables, from left to right). This character may appear either below or above the corresponding $B$, depending on whether the $k^{th}$ edge leaving the variable node starts from the bottom or the top of the square. Finally, the last column of the gadget is equal to the first one.

    Note that the character $A \in \Sigma$ appears only in words $A \in D$ and in $A\,* \in D$, whereas the character $*$ appears only in  $A\,* \in D$, $*\,\#\,BE \in D$, and $\#\,B* \in D$. Therefore, if the first column of the gadget associated with variable $x_i$ is covered using word $A$, then the first occurrence of $*$ must be covered with $*\,\#\,BE$. Consequently, the remainder of the gadget must be covered exclusively using $*\,\#\,BE$ and $A\,*$.    Conversely, if the first column is covered using $A\,*$, then the rest of the gadget must be covered only with $\#\,B*$ and word $A$. The first method covers all cells of the gadget and will be interpreted as the true assignment for the variable $x_i$, whereas the second method leaves the $E$'s uncovered and corresponds to a false assignment.

    \paragraph{Clause gadget.} Each clause $(x_i,x_j,x_k)$ is associated to a clause gadget of size $2\times 1$ consisting of exactly two characters $C\in \Sigma$. Assuming that variables $x_i$, $x_j$ and $x_k$ are sorted according to their order in the embedding of the graph, the clause gadget will be placed on the same column of one of the $E$ characters exiting from variable $x_j$, leaving an odd number of rows between the $E$ of the variable gadget and the first $C$ of this gadget. Thanks to the rectilinear embedding, the edge exiting from $x_i$ will arrive from the left of the clause, the one exiting from $x_k$ will arrive from the right, and that exiting from $x_j$ will arrive from below or above depending whether the clause is above or below the line of variables. %

    \paragraph{Edge gadget.} Each edge connecting a variable to a clause is represented by an alternating sequence of $E$'s and $F$'s of even length, starting from an appropriate $E$ in the variable gadget. Thanks to the rectilinear embedding of the graph, each path is either vertical, or it consists first of a vertical segment and then of an horizontal segment (going left or right).  

    Depending on the position of the first $E$, the sequence will connect to one of the two $C$'s of the clause gadget. Specifically, if the Manhattan distance between the $E$ in the variable gadget and the closest $C$ in the clause gadget is even, the even-length sequence will connect to the closest $C$, otherwise it will connect to the other $C$ in the clause gadget.

    Thus, every edge starts with the character $E$ inside the variable gadget and ends with the character $F$ immediately adjacent to the clause gadget. Since the graph is planar and every gadget is well-separated from the others, if the false assignment is chosen, then every edge gadget leaving $x_i$ must be covered exclusively using $EF$, until reaching and covering the last occurrence of $F$ along that edge path. Conversely, if the true assignment is chosen for the variable $x_i$, then all the $E$'s of the gadget associated with $x_i$ have already been covered, and the first $F$ of an outgoing edge must be covered using the word $EF$ together with the second occurrence of $E$. Therefore, the entire path must be covered using $EF$ until reaching the last $F$, which remains uncovered this time. This last $F$ must then be covered using the word $FCC \in D$, thereby covering the whole clause gadget.
    
    Note that it could be possible to move diagonally from the last $F$ of an edge gadget to the last $F$ of another one; however, there is no word in $D$ containing two $F$'s. Therefore, the edge gadgets cannot be exploited to cheat.\\

    Note that the width of the grid is linear in the sum of the degrees of the variable vertices, and it is thus linear in the number of clauses. The height is also linear in the number of clauses since each clause gadget has constant height. Thus, the grid has polynomial size and the reduction can be carried out in polynomial time.
    
    We now aim to show that the original formula is satisfiable (in 1-in-3-SAT sense) if and only if the corresponding Strands instance is solvable.

    If the formula is satisfiable, consider a satisfying truth assignment for the variables and cover the corresponding gadget associated with each variable according to the truth assignment. Then, cover every edge gadget according to the unique covering determined by the truth assignment of the variable, as described previously. Let $C_k$ be a clause of the formula. Since the truth assignment satisfies the formula, there exists exactly one true variable $x_i$ appearing in $C_k$. Consequently, the corresponding clause gadget is fully covered by the covering of the edge connecting $x_i$ and $C_k$, since it covers the last occurrence of $F$ using $FCC$. Thus, every gadget associated with a clause is completely covered. Since no other true variable appears in $C_k$, all other edges entering $C_k$ connect a false variable to $C_k$. Hence, they must have been fully covered using exclusively $EF$, without using the $C$’s of $C_k$. Therefore, each edge gadget has been fully covered. In conclusion, every gadget is completely covered; hence, the Strands instance is solved.   
    \smallskip
    If the corresponding Strands instance is solvable, consider a valid partition of the grid $M$. Since, as previously described, there are only two ways to cover the variable gadgets, this solution encodes a valid truth assignment to the variables. Let $C_k$ be a clause. Since each gadget associated with a clause is fully covered, there exists an edge entering $C_k$ which is fully covered using $FCC$; in other words, the edge connects $C_k$ to a true variable. Therefore, there is at least one true variable appearing in $C_k$. By contrast, suppose that at least two true variables appear in $C_k$. Consider their respective edges connecting to $C_k$: both connect a true variable to $C_k$ and both must be fully covered, therefore both would use the two $C$’s of $C_k$, which is impossible. Therefore, there is exactly one true variable for each clause and the truth assignment satisfies the formula. \qedhere
\end{proof}
While diagonal movements are allowed in Strands, this does not make any difference in our reduction, in fact:
\begin{corollary}
Given a Strands instance $(\Sigma, D, M)$ with $|\Sigma|, |D|, L = O(1)$ determining whether $M$ can be partitioned into valid words without using diagonal movements is $\NP$-Complete. 
\end{corollary}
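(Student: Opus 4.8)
The plan is to observe that the reduction proving \Cref{thm:strands-np-complete} already produces, essentially verbatim, an instance in which no solution needs a diagonal move, so the corollary follows by re-examining that construction rather than building anything new. First I would recall the structure of the grid $M$ built in the proof: variable gadgets are $3\times(3k_i+3)$ rectangles whose forced coverings use the words $A$, $A*$, $*\#BE$, $\#B*$, all of which are read along a single column (vertically); clause gadgets are $2\times 1$ vertical dominoes of $C$'s; and edge gadgets are alternating $E/F$ strips that, by the rectilinear embedding, run first vertically and then horizontally, covered by the length-two word $EF$ plus a terminal $FCC$. Every one of these words is short enough ($L=O(1)$) and is laid out so that consecutive characters are orthogonally adjacent; the only place the original argument even mentions diagonals is the remark that one \emph{could} in principle move diagonally from the last $F$ of one edge to the last $F$ of another — and that is immediately ruled out because no word in $D$ contains two $F$'s. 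So diagonal moves are never needed for the ``yes'' direction and, by the same token, never help in the ``no'' direction either.

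Concretely, I would argue as follows. Keep $\Sigma$, $D$, and the high-level placement of gadgets exactly as in the proof of \Cref{thm:strands-np-complete}, but when routing each edge gadget through the rectilinear embedding, make the bends $90^\circ$ orthogonal bends (a vertical run followed by a horizontal run joined at a corner cell), which the rectilinear embedding already guarantees is possible, and pad with $\sqcup$ cells so that every gadget stays well-separated and every required word occupies an orthogonally-connected sequence of cells. This is cosmetically what the figure already depicts. Then one checks: (i) in the forward direction, the covering exhibited in the original proof uses only the words $A, A*, *\#BE, \#B*, EF, FCC, \sqcup$, and in the padded layout each occurrence of each of these words can be read off along an orthogonal (up/down/left/right) path, so the exhibited solution uses no diagonal step; (ii) in the reverse direction, one re-runs the case analysis of the original proof but now quantifying over \emph{orthogonal-only} partitions — the key structural facts (the variable gadget has exactly two coverings, each $E$ is consumed either inside its variable gadget or as the second letter of an outgoing $EF$, and a clause gadget's two $C$'s can be fed by at most one $FCC$) all hold a fortiori when diagonals are forbidden, and in fact the ``no diagonals'' restriction only \emph{removes} spurious coverings, so if an orthogonal solution exists the truth assignment it encodes still satisfies the formula. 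Membership in $\NP$ is inherited from \Cref{thm:strands-np-complete} (the certificate $V_1, V_2$ is checked the same way, now rejecting any $V_2$ entry in $\{ur,ul,dr,dl\}$).

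The main obstacle — and it is a mild one — is the reverse direction: I must be sure that forbidding diagonals does not accidentally make the Strands instance \emph{unsolvable} even when the formula is satisfiable, i.e.\ that the specific covering used in the forward direction really does avoid diagonals, which forces a little care in how edge-gadget corners and the adjacency between the last $F$ of an edge and the $C$'s of its clause are laid out (the word $FCC$ must be readable along three orthogonally consecutive cells). Once the layout is fixed so that every word in $D$ that is ever used lies on an orthogonal path — which, again, is already how \Cref{fig:strands-example-reduction} is drawn — both directions go through essentially unchanged, and the corollary is immediate. I would therefore present the proof as: ``The reduction in the proof of \Cref{thm:strands-np-complete} can be realized so that every gadget and every dictionary word occupies an orthogonally connected set of cells; since no word in $D$ contains a repeated letter that would be needed for a diagonal shortcut, the correctness argument there applies verbatim to partitions that use only orthogonal moves, giving the claim.''
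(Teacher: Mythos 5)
Your proposal is correct and matches the paper's own argument: the paper proves the corollary by observing that the construction of \Cref{thm:strands-np-complete} is either unsolvable even with diagonal moves or solvable without them, which is exactly the two-direction observation you make (the exhibited solution is orthogonal; an orthogonal-only solution is in particular a solution, so the original reverse direction applies). Your extra care about laying out edge-gadget corners and the $FCC$ adjacency orthogonally is a reasonable elaboration of what the paper leaves implicit, but it is the same proof.
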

\begin{proof}
    The construction of \Cref{thm:strands-np-complete} is such that the Strands instance is either not solvable (not even with diagonal movements) or it is solvable without diagonal movements.
\end{proof}

Note that our main construction uses words of length 1 which are forbidden in the real version of the game. However, in our instances, this could easily be fixed by replacing each cell with a corresponding $3\times 3$ block containing the same character $9$ times. We now show that this result is more general. Indeed, it is possible to enlarge certain Strands instances so that the minimum word length is larger than some constant. Simply replacing each cell with a $3\times 3$ block containing a single character is not enough in general as adjacent cells with the same character can be problematic since we cannot force the user to cover the blocks sequentially. This can be fixed by putting an appropriate perimeter around the blocks, as shown by the following result.

\begin{theorem}
    Suppose that the Strands instance $I=(\Sigma, D, M)$ is either impossible to solve or it can be solved without diagonal movements. Suppose also that $\ell,L=O(1)$ are respectively the minimum and maximum length of a word in $D$. Then, there exists a Strands instance $I_2=(\Sigma_2, D_2, M_2)$ that is solvable if and only if $I$ is solvable and moreover: (i) if $I_2$ is solvable, then it is solvable without diagonal movements, and (ii) $|\Sigma_2|=O(|\Sigma|)$, $|D_2|=O(|D|)$ and the minimum and maximum word lengths are, respectively, $9\cdot \ell$ and $9\cdot L$.  
\end{theorem}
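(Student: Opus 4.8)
The plan is to build $I_2$ as a ``$3\times 3$ blow-up'' of $M$ together with a single fresh symbol that rigidifies how each $3\times 3$ block can be traversed. Introduce $\natural\notin\Sigma$ and set $\Sigma_2=\Sigma\cup\{\natural\}$. Replace every cell $c$ of $M$ by a $3\times 3$ block of $M_2$, placing the character $M[c]$ on the four corners and the centre of the block (the five cells of the larger colour class of the bipartite $3\times 3$ grid graph) and the character $\natural$ on the four edge-midpoints (the four cells of the smaller class); this $\natural$-framing of the centre is the ``perimeter'' referred to above. Since the $3\times 3$ grid graph is bipartite with these two classes, every Hamiltonian path inside a block alternates them and hence spells exactly $M[c]\,\natural\,M[c]\,\natural\,M[c]\,\natural\,M[c]\,\natural\,M[c]$, no matter which path is used, with both endpoints in the larger class; requiring endpoints that can continue into a neighbouring block forces them to be corners, and a short case check gives a Hamiltonian path between any two distinct corners. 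For $w=w_1\cdots w_k\in D$ let $\phi(w)$ be the length-$9k$ word obtained by concatenating the blocks $w_i\,\natural\,w_i\,\natural\,w_i\,\natural\,w_i\,\natural\,w_i$ for $i=1,\dots,k$, and set $D_2=\{\phi(w):w\in D\}$. Then $|\Sigma_2|=|\Sigma|+1=O(|\Sigma|)$, $|D_2|\le|D|=O(|D|)$, and the word lengths in $D_2$ are exactly $\{9|w|:w\in D\}$, with minimum $9\ell$ and maximum $9L$, as required.

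For the easy direction, suppose $I$ is solvable; by hypothesis it then has a solution using only orthogonal moves, i.e.\ a partition of $M$ into words $w^{(1)},\dots,w^{(m)}$ with orthogonal paths. If $w$ has path $c_1,\dots,c_k$, the blocks $B_{c_1},\dots,B_{c_k}$ are pairwise orthogonally adjacent in $M_2$, and the boundary shared by two orthogonally adjacent $3\times 3$ blocks contains two corners of each. Realise $\phi(w)$ by entering $B_{c_1}$ at a corner on the boundary with $B_{c_2}$ (any corner if $k=1$), traversing each $B_{c_i}$ by a Hamiltonian path from its entry corner to an exit corner on the boundary with $B_{c_{i+1}}$, and stepping across; since each shared boundary offers two corners one can always take entry and exit corners distinct, so the needed Hamiltonian paths exist. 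The blocks of distinct words are disjoint, so this is a valid partition of $M_2$ using no diagonal move; in particular $I_2$ is solvable, and solvable without diagonals.

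The substance is the converse: every valid partition of $M_2$ --- a priori allowed to use diagonal moves --- must project to a (necessarily orthogonal) valid partition of $M$. The tools are: no word of $D_2$ contains the factor $\natural\natural$, so a word can never use an edge-midpoint-to-edge-midpoint adjacency crossing a block boundary and thus can leave a block only by a corner-to-corner step; each word's path splits into consecutive $9$-cell windows, each of which is a path spelling a fixed alternating string $x\,\natural\,x\,\natural\,x\,\natural\,x\,\natural\,x$, so it visits exactly five $x$-cells and four $\natural$-cells and makes no $\natural\natural$ step; every $\natural$ in a word is flanked by two copies of the same character; and a block's centre, along any word, is adjacent only to the four edge-midpoints of its own block. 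The step I expect to be the main obstacle is ruling out the ``straddling'' configurations flagged just before the statement, in which a window would borrow cells from an orthogonally adjacent block carrying the \emph{same} character: a straddle is not locally impossible, so one must argue globally --- showing that a partition containing a straddle can be re-routed, one straddle at a time, into an equivalent straddle-free partition, and that a straddle-free partition is precisely a block-by-block traversal, from which reading off each word's block sequence yields an orthogonal word-partition of $M$ with the correct characters. Making the re-routing terminate while preserving validity is the delicate point.

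The two directions give the equivalence ``$I_2$ is solvable if and only if $I$ is solvable'', which together with the bounds above is part (ii). For part (i): if $I_2$ is solvable then the converse gives that $I$ is solvable, so by hypothesis $I$ has a diagonal-free solution, so by the easy direction $I_2$ has a diagonal-free solution.
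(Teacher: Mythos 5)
Your construction is genuinely different from the paper's, but it has a real gap exactly at the step you flag yourself. The whole difficulty of this theorem is the one the paper points out just before the statement: when adjacent cells of $M$ carry the same character, nothing forces a $9$-cell window of a $D_2$-word to stay inside a single $3\times 3$ block. In your scheme every block is framed by the \emph{same} symbol $\natural$ and carries its original character on five cells, so a window spelling $x\,\natural\,x\,\natural\,x\,\natural\,x\,\natural\,x$ can perfectly well mix corners and edge-midpoints of two adjacent blocks that both encode the character $x$; once that happens the ``projection'' to a partition of $M$ is not even well defined. You acknowledge this (``straddling'') and propose a global re-routing argument ``one straddle at a time,'' but you neither define the re-routing step nor prove it terminates or preserves validity --- and it is far from clear that such an argument exists, i.e.\ that your reduction is even sound when $M$ has large monochromatic regions. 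A secondary unaddressed point: your no-$\natural\natural$ observation only rules out midpoint-to-midpoint crossings; at a window junction the word reads $w_i$ followed by $w_{i+1}$, so a diagonal corner-to-corner hop into a diagonally adjacent block with matching character is compatible with your dictionary, which undercuts your claim that any $M_2$-partition projects to a ``necessarily orthogonal'' partition of $M$ (harmless for the bare ``solvable iff solvable'' claim, but symptomatic of the missing rigidity).

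The paper avoids both problems structurally rather than by re-routing: it keeps the original character only at the block center and fills the entire $8$-cell perimeter with one of four new color symbols $c(i,j)\in[4]$ determined by the parities of $(i,j)$. Since this is a proper coloring under all eight adjacencies, the frames of any two (orthogonally or diagonally) neighboring blocks use \emph{different} characters, so a window $W_g(\sigma)=g\,g\,g\,g\,g\,g\,\sigma\,g\,g$ simply cannot borrow cells from a neighboring block --- straddles are locally impossible, no global argument needed. Diagonal block transitions are killed by restricting the allowed color successions to $P_1=\{2,3\}$, $P_2=P_3=\{1,4\}$, $P_4=\{2,3\}$ when building $D_2$, at the cost of multiplying the dictionary by at most $4^L$, which is $O(1)$ under the hypothesis $L=O(1)$. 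If you want to salvage your single-symbol variant, you would have to supply the straddle-elimination argument in full; the shorter path is to make adjacent frames distinguishable, which is precisely the paper's parity-coloring device.
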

\begin{proof}
    For the cell in row $i$ and column $j$ we define the color $c(i,j)$ as:
    \[
        c(i,j)=
        \begin{cases}
            1 &\text{if $i$ and $j$ are even},\\
            2 &\text{if $i$ is even and $j$ is odd},\\
            3 &\text{if $i$ is odd and $j$ is even},\\
            4 &\text{if $i$ and $j$ are odd}.
        \end{cases}
    \]
    Note that $c$ is a four-coloring of the grid: no two  adjacent colors are the same. 
    
    Define $\Sigma_2=\Sigma \cup [4]$, where we assume that $\Sigma\cap [4]=\varnothing$. Suppose $M\in \Sigma^{n\times m}$. Then, the grid $M_2 \in \Sigma_2^{3n\times 3m}$ is built by replacing each $\sigma_{ij} \in M$ with a $3\times 3$ matrix. Specifically, the character $\sigma_{ij}$ is put at the center of the $3\times 3$ matrix, while its perimeter is filled with the character $c(i,j)$. 

    We now describe the new dictionary. Intuitively, we want to force a user to consume all the characters in the $3\times 3$ block before moving to a new block, while still allowing them to move in any of the four horizontal/vertical directions after having consumed the block (we will instead prevent diagonal movements). 

    For color $i\in[4]$, we define $P_i$ to be the set of colors on which we can go from $i$, that is, we let: $P_1=\{2,3\}$, $P_2=\{1,4\}$, $P_3=\{1,4\}$, and $P_4=\{2,3\}$. For an integer $k$, define $G_k$ to be the set of all the strings of the form $g \in [4]^k$ such that $g_{i+1}\in P_{g_i}$ for each $i$. Note that $|G_k|\leq 4^k$. For any $\sigma\in \Sigma$ and $i\in [4]$, let $W_i(\sigma)=i\,i\,i\,i\,i\,i\,\sigma\,i\,i \in \Sigma_2^9$. Now, for any word $w\in D$ such that $w=w_1w_2\dots w_k$, we add to $D_2$ the set of words:
    \[
        \left\{W_{g_1}(w_1)\,\, W_{g_2}(w_2)\dots \, W_{g_k}(w_k) \mid g_1g_2\dots g_k \in G_k\right\}.
    \]
    Note that $|D_2|\leq |D|\cdot 4^L = O(D)$ given that $L=O(1)$. Moreover, each $w\in D$ gets replaced by a set of words, each of length $9\cdot |w|$. 
    
    We are now left with showing that $I_2$ is solvable if and only if $I$ is solvable, and moreover, in case it is solvable, that we do not need to use diagonal movements.

    Suppose that $I$ is solvable and thus, by our assumptions, it is also solvable without using diagonal movements. Let $(i_1, j_1),\dots,(i_k,j_k)$ be a sequence of cells in the partition of the solution for $I$. Then, we can cover the corresponding $3\times 3$ blocks in $I_2$ using the word $W_{{c(i_1,j_1)}}(M_{i_1,j_1})\dots W_{{c(i_k,j_k)}}(M_{i_k,j_k})$. This is a valid word because the solution for $I$ does not move diagonally. Moreover, suppose we start a path from one of the four corners of a $3\times 3$ block for symbol $\sigma\in \Sigma$, then, by using $W_i(\sigma)$ it is possible to cover the whole block and to end up in one of two other corners. Specifically, for each side, it is possible to end up in a corner of that side without diagonal movements --- this allows us to move to an adjacent $3\times 3$ block in any of the four directions. Therefore, by transforming each sequence of cells that make up the solution of $I$, we can find a partition of $I_2$ (without using diagonal movements).
    
    On the other hand, if $I_2$ is solvable, observe that each path that constitutes the partition must necessarily be a sequence of $3\times 3$ blocks. Indeed, the coloring $c(i,j)$ is a 4-coloring of the grid, so it is not possible to move to a $3\times 3$ block without having covered the current one. Moreover, thanks to the choice of $P_1, P_2,P_3,P_4$ it is not possible to move diagonally from a $3\times 3$ block to another one. Thanks to these observations, let $\sigma_1\dots \sigma_k\in \Sigma_2^k$ be a path in the solution for $I_2$. If we remove all the symbols of $[4]$ from $\sigma_1\dots \sigma_k$, we obtain a path in $M$ that covers the cells corresponding to the $3\times 3$ blocks. Thus, we can obtain a solution to instance $I$. By the previous direction, this also implies that $I_2$ is solvable without diagonal movements. \qedhere
\end{proof}

\section{Tiles}\label{sec:tiles}
\vspace{-14mm}
\begin{figure}[H]
    \centering    \includegraphics[width=1\linewidth]{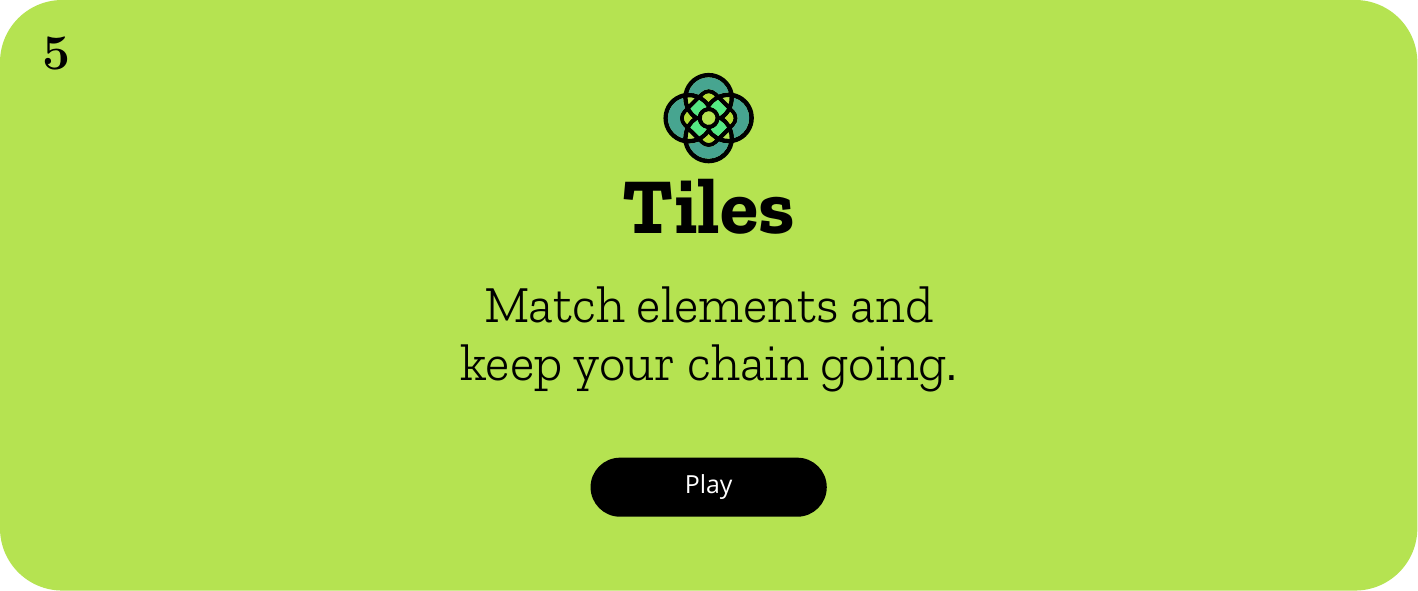}
\end{figure}

New York Times games aficionados will have undoubtedly noticed that Tiles offers a more relaxing puzzling experience than some of the other titles in the catalog. A typical instance of Tiles can be solved with little effort, aside perhaps from a little strain to the eye. This is no accident. 

In this section, we show that any (solvable) instance of Tiles can be solved by simply repeatedly making arbitrary valid moves, while favoring non-teleport whenever possible. A consequence of this result is that the problem of deciding whether an instance of Tiles is solvable is in ${P}$, and in fact one can make such a decision in time linear in the size of the puzzle.

Note that we deem a Tiles puzzle solvable if all of its features can be deleted. In general, we will show that this is true if and only if all of its features can be deleted in a single, unbroken combo. An instance is solvable with a single combo if the player can delete all the features in the instance from all of its tiles by starting at some tile and iteratively making one of two types of moves: (i) a standard move, in which the player selects a new tile that shares at least one non-deleted feature with the current tile and all the common non-deleted features between the two tiles are deleted, or (ii) a forced teleport move, in which the player, starting from a tile whose features have all been deleted, selects an arbitrary new tile.

At the end of this section, we turn to the question of whether an instance can be solved without teleport moves. We show that this can be decided in linear time for instances in which any two tiles share at most one feature.

We begin by giving a formal definition of a Tiles instance.
\begin{definition}[Tiles Instance]
    An instance of the Tiles problem $(F,\mathcal{T})$ is defined by a set $F = \{f_1, \dots , f_m\}$, the \emph{features} of the instance, and a multiset $\mathcal{T} = \{T_1, \dots , T_n\}$ of subsets of $F$, the \emph{tiles}.
\end{definition}
Intuitively, we think of every $T\in \mathcal{T}$ as the collection of all features that appear on tile $T$ (each feature can appear at most once). The size of a Tiles instance is the quantity $\sum_{T\in \mathcal{T}} |T|$. We now prove that solving a Tiles instance can be done in polynomial time. Specifically, we prove \Cref{thm:tiles-main-result}.

\TilesMainTheorem*
\begin{proof}
    The key observation to prove this theorem is that, for any fixed distinct feature in the puzzle, the parity of the number of tiles this feature appears in is an invariant: it does not change as the player makes moves. This is true since at each step in the game, a feature is either deleted from two distinct tiles, or not deleted at all.

    This immediately implies that if a feature appears in an odd number of tiles, it can never be deleted from all tiles, and hence the puzzle is unsolvable. In particular, Statement $1$ implies Statement $3$.

    On the other hand, if all features appear in an even number of tiles, one can solve the game with a single combo in a greedy fashion: by starting at an arbitrary tile and repeatedly picking a new tile that shares a feature with the current one, until no such tile exists. When this happens, the player picks an arbitrary new tile (i.e.\ they teleport). Since the parity of features is preserved, the player only teleports when the current tile has no more feature on it, and the combo continues until all features have been deleted from all tiles. Hence this shows that the instance is not just solvable, but solvable with a single combo. This shows that Statement $3$ implies Statement $2$. Since Statement $2$ trivially implies Statement $1$, this completes the proof.\qedhere

\end{proof}

We conclude this section by showing that, given an instance of Tiles in which any two tiles share at most one feature, we can decide in linear time whether this is solvable without teleport moves. Note that solving the puzzle without teleporting, implies solving the puzzle with a single combo.

Each instance of Tiles naturally induces a bipartite graph, having features on one side, and tiles on the other. Adopting this perspective will be beneficial to the result that follows.

\begin{definition}[Structure Graph of a Tiles Instance]
    Given an instance $(F,\mathcal{T})$ of Tiles, its \emph{structure graph} is the bipartite undirected graph $G=(F \cup \mathcal{T},E)$, where:
    \[
        E = \{\{T,f\}\mid T\in \mathcal{T} \text{ and }f\in T\}.
    \]
    The elements of $F$ are the \emph{feature vertices}, and the elements of $\mathcal{T}$ are the \emph{tile vertices}.
\end{definition}

Note that the structure graph of a Tiles instance is computable in linear time in the size of the instance.

\begin{definition}
    The \emph{sharing number} of a Tiles instance $(F,\mathcal{T})$ where $\mathcal{T}=\{T_1, \dots, T_n\}$ is defined as:
    \[
        \operatorname{share}(F,\mathcal{T}) = \max_{i,j\in[n], i\neq j} |T_i \cap T_j|.
    \]
\end{definition}
We now show the following.

\begin{theorem}
    An instance of Tiles with sharing number $1$ is solvable without teleport moves if and only if its structure graph admits a Eulerian trail that starts and ends on Tile vertices (potentially not distinct from one another). In particular, there is an algorithm that decides whether such an instance is solvable without teleport move in time linear in the size of the instance.
\end{theorem}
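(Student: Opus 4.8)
The plan is to set up a tight correspondence between no-teleport solutions of the instance and Eulerian trails of its structure graph $G$, and then appeal to a standard linear-time Eulerian-trail routine. The observation that makes the sharing-number-$1$ hypothesis indispensable is that, under it, every non-teleport move deletes \emph{exactly one} feature: if the player sits on tile $T$ and moves to a tile $T'$ by a non-teleport move, then $T$ and $T'$ share at most one feature, hence exactly one non-deleted feature $f$, and the move deletes $f$ from both. In $G$ this is precisely the act of traversing the two edges $\{T,f\}$ and $\{f,T'\}$, i.e.\ walking $T \to f \to T'$. A feature-on-tile incidence is removed by the game if and only if the corresponding edge of $G$ is traversed by some move, so a solution deletes all features from all tiles iff it traverses every edge of $G$.

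First I would prove the forward direction. Given a no-teleport solution --- a start tile $T_0$ together with a sequence of moves $T_0 \to T_1 \to \cdots \to T_t$ --- I associate the walk $W=(T_0,f_1,T_1,f_2,\ldots,f_t,T_t)$ in $G$, where $f_i$ is the unique feature deleted by move $i$. I would argue $W$ is a trail: if an edge $\{T,f\}$ were traversed twice, the first traversal would already have deleted feature $f$ from tile $T$ (and a deleted feature-on-tile incidence never reappears), making the second traversal impossible. I would argue $W$ is Eulerian: every edge of $G$ is an incidence that the solution must remove, hence it must be traversed. Finally $W$ begins at $T_0\in\mathcal{T}$ and ends at $T_t\in\mathcal{T}$, so it is an Eulerian trail with both endpoints among the tile vertices.

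Next I would prove the converse. An Eulerian trail of $G$ with both endpoints in $\mathcal{T}$ is, by bipartiteness, of the alternating form $(T_0,f_1,T_1,\ldots,f_t,T_t)$; I read off the solution ``start at $T_0$, then make the move $T_{i-1}\to T_i$ for $i=1,\ldots,t$''. The only thing to verify is that move $i$ is a legal non-teleport move, i.e.\ that $f_i$ is still present on both $T_{i-1}$ and $T_i$ when it is played. This holds because $f_i$ is deleted from $T_{i-1}$ (resp.\ $T_i$) only when the edge $\{T_{i-1},f_i\}$ (resp.\ $\{f_i,T_i\}$) is traversed, and in an Eulerian trail each edge is traversed exactly once --- here, exactly at move $i$; and sharing number $1$ guarantees that move $i$ does not additionally delete some other shared feature that a later move would need. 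Because the trail covers all edges, all incidences are removed, so the reconstructed sequence is indeed a valid no-teleport solution. I expect this validity argument to be the delicate step: one must rule out that an earlier move deleted $f_i$ from $T_{i-1}$ or $T_i$ through a different edge or as a side effect, and this is exactly where the sharing-number-$1$ hypothesis is used.

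For the algorithmic claim I would note that $G$ is computable in time linear in the instance size $\sum_{T\in\mathcal{T}}|T|=|E(G)|$, and that the existence of an Eulerian trail with both endpoints in $\mathcal{T}$ is decidable in linear time: discard isolated vertices (empty tiles, which a no-teleport solution may always ignore, and features on no tile), check that all remaining edges lie in a single connected component (otherwise a single trail cannot cover them, so the answer is no), count the odd-degree vertices --- by the handshake lemma there are $0$ or $2$ --- and answer yes iff either there are $0$ (an Eulerian circuit started at any tile vertex works, and since there is at least one edge there is a tile vertex of positive degree) or there are exactly $2$ and both lie in $\mathcal{T}$. Running Hierholzer's algorithm then outputs the trail, hence the solution, also in linear time. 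The degenerate cases (no features at all, or a feature appearing on a single tile, which by the parity invariant already makes the instance unsolvable) are handled explicitly but routinely.
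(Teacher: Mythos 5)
Your proposal is correct and follows essentially the same route as the paper: the sharing-number-$1$ hypothesis yields the bijection between no-teleport move sequences and trails in the structure graph with tile endpoints, with the same two-sided argument (an untraversed edge leaves an incidence undeleted; a twice-traversed edge or an earlier side-effect deletion is impossible), followed by the standard linear-time Eulerian-trail check. Your explicit handling of isolated vertices and degenerate cases is a minor tidy addition but not a different approach.
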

\begin{proof}
    Let $(\mathcal{F},\mathcal{T})$ be an instance of Tiles and $G =(F\cup\mathcal{T}, E)$ be its structure graph. Define:
    \[
        \mathcal{M}= \{(T_1, \dots ,T_k) \in \mathcal{T}^* \mid \forall i \in [k-1]:\;T_i\cap T_{i+1} \neq \varnothing\}
    \]
    where $\mathcal{T}^* \defeq\bigcup_{i\in\mathbb{N}}\mathcal{T}^i$, as the collection of possible sequences of moves such that any two consecutive tiles share a feature (possibly deleted). Note that the sequences of moves in $\mathcal{M}$ may contain teleport moves, as the only common feature between $T_i$ and $T_{i+1}$ might have been deleted before arriving at $i$. Now let:
    \[
        \mathcal{P}:= \{s=(T_1,f_1, \dots, f_{k-1}, T_k) \mid s\text{ is a trail in }G\text{ starting and ending in }\mathcal{T}\}.
    \]
    We define a function $\psi: \mathcal{P} \to \mathcal{M}$ as:
    \[
        \psi((T_1, f_1, \dots , f_{k-1}, T_k)) = (T_1, \dots ,T_k).
    \]
    When $\operatorname{share}(F,\mathcal{T}) = 1$, any two tiles can share at most one feature and hence this function is bijective.

    We show that $\texttt{sol}= (T_1, \dots , T_k)$ is a valid sequence of standard (non-teleport) moves that deletes all the features if and only if $\texttt{trail} = \psi^{-1}((T_1, f_1, \dots , f_{k-1}, T_k))$ is a Eulerian trail in $G$ and this will imply the statement of the theorem.

    Suppose the former is true. Consider any edge $e= \{T,f\} \in E$. Suppose $e$ is never traversed. Then, there is no value $i$ for which $T\in \{T_i, T_{i+1}\}$ and $f \in T_i\cap T_{i+1}$ but then, $f$ is never deleted from tile $T$, which contradicts the fact that $\texttt{sol}$ deletes all the features from all the tiles. On the other hand, suppose $e$ is traversed more than once by $\texttt{trail}$. Then there exists two values $i$ and $j$ with $i < j$ such that $f\in T_{i}\cap T_{i+1}$ and $f\in T_{j}\cap T_{j+1}$ and $T \in \{T_{i},T_{i+1}\}\cap\{T_j,T_{j+1}\}$. Pick the smallest two such values. $f$ is deleted from $T$ when the player goes from $T_i$ to $T_{i+1}$. This is because, since the sharing number of $(\mathcal{F},\mathcal{T})$ is $1$, the only way for a player to move from $T_i$ to $T_{i+1}$ without teleporting is if neither of them have deleted the one feature they have in common: $f$. In particular, at the $j^{th}$ time step, $T$ has already deleted feature $f$. Hence $T_j$ and $T_{j+1}$, which used to only share feature $f$, no longer share any feature, contradicting our assumption on $\texttt{sol}$.
    Hence \texttt{trail} traverses each edge exactly once, and it must be a Eulerian trail starting and ending with a tile vertex.

    For the other direction, suppose \texttt{trail} is a Eulerian trail in $G$. Then, at each time step $i$, $T_i$ and $T_{i+1}$ must share a feature $f_i$, which could not have been deleted since the edges $\{T_i,f_i\}$ and $\{T_{i+1},f_i\}$ have never been traversed before. Moreover, whenever the edge $\{T,f\}$ is traversed, the feature $f$ is deleted from $T$ and since all edges are traversed by \texttt{trail} all the features get deleted from all the tiles by \texttt{solution}, completing the proof of equivalence.

   It is well known that determining the existence of the Eulerian trail in question simply amounts to checking that: (i) the graph is connected, (ii) all vertices in $F$ have even degree, (iii) all but at most two vertices in $\mathcal{T}$ have even degree \cite{e41}. Hence it can be done in linear time.
\end{proof}

\section*{Acknowledgments}
We gratefully acknowledge the ELICSIR Foundation for its support and for fostering this collaboration. We also wish to thank the New York Times for making these truly entertaining (and hard!) games for the world to play.

Finally we wish to thank Gabriel Avram for useful suggestions on improving the images, Kyle MacMillan and Rachel B.\ Thomas for helpful advice and feedback on early versions of the paper.

Flavio Chierichetti and Alessandro Panconesi were supported in part by BiCi -- Bertinoro international Center for informatics and by a Google Focused Research Award. Flavio Chierichetti was supported in part by the PRIN project 20229BCXNW (funded by the European Union - Next Generation EU, Mission 4 Component 1 CUP B53D23012910006).

\bibliographystyle{plain}
\bibliography{bib}

\end{document}